\documentclass[12pt,a4paper]{article}
\usepackage{authblk}
\usepackage[top=2.5cm, bottom=2.5cm, left=2cm, right=2cm]{geometry}

\usepackage[T1]{fontenc}
\usepackage[utf8]{inputenc}
\usepackage{mathpazo}          
\usepackage{microtype}         

\usepackage[bottom]{footmisc}

\usepackage{enumitem}

\usepackage{booktabs}
\usepackage{multirow}
\usepackage{blkarray}
\usepackage{xcolor,colortbl}

\usepackage{graphicx}
\usepackage{epsfig}
\usepackage{tikz}
\usetikzlibrary{calc,matrix,through,shapes.geometric,shapes.misc}

\usepackage[normalem]{ulem}
\usepackage{lipsum}
\usepackage{verbatim}
\usepackage{ragged2e}

\usepackage{amsmath}
\usepackage{amsfonts}
\usepackage{amssymb,amsthm}
\usepackage{latexsym}
\usepackage{mathrsfs}
\usepackage{braket}
\usepackage{gensymb}

\usepackage{caption}
\usepackage{subcaption}
\DeclareCaptionJustification{justified}{\justifying}
\usepackage{setspace}
\usepackage{titlesec}
\titleformat{\section}{\large\bfseries}{\thesection.}{0.8em}{}
\titleformat{\subsection}{\normalsize\bfseries}{\thesubsection.}{0.6em}{}
\titleformat{\subsubsection}{\normalsize\itshape}{\thesubsubsection.}{0.5em}{}

\usepackage{cite}

\usepackage[colorlinks=true,linkcolor=blue,citecolor=blue,urlcolor=blue]{hyperref}

\allowdisplaybreaks

\newtheorem{theorem}{Theorem}
\newtheorem{lemma}{Lemma}

\newtheorem{proposition}{Proposition}
\newtheorem{observation}{Observation}

\newtheorem{remark}{Remark}

\newtheorem{question}{Question}
\newtheorem{definition}{Definition}

\newtheorem*{task}{Task}

\usepackage{authblk}

\title{\textbf{Conclusive Identification Via Noisy Classical Channel: Superactivation and Quantum Advantage}}

\author[1]{Anushko Chattopadhyay}
\author[1]{Ambuj}
\author[1]{Rakesh Das}
\author[1]{Smritikana Patra}
\author[1]{Chitrak Roychowdhury}
\author[2]{Manik Banik}
\author[1]{Amit Mukherjee}

\affil[1]{Indian Institute of Technology Jodhpur, Jodhpur 342030, India}
\affil[2]{S.\,N.\,Bose National Centre for Basic Sciences, Block JD, Sector III, Salt Lake, Kolkata 700106, India}

\date{}

\begin{document}

\maketitle

\begin{abstract}
We introduce and study the \emph{conclusive identification} task for classical channels, in which a receiver identifies transmitted inputs without error whenever possible, while responding inconclusively when the output is ambiguous. For a symmetric not-fully-corrupted channel $N : X \to X$, the single-shot conclusive identification index $\mathrm{ci}_\circ(N)$ counts the maximum number of conclusively identifiable inputs. We show that $\mathrm{ci}_\circ(N)$ can exhibit a striking superactivation phenomenon: a channel with $\mathrm{ci}_\circ(N) = 0$ achieves $\mathrm{ci}_\circ(N \otimes \mathrm{id}^c_\beta) = |X|$ when assisted by a perfect classical channel $\mathrm{id}^c_\beta$ of dimension $\beta < |X|$; and the minimum classical assistance required is precisely the chromatic number $\chi(\mathtt{S}_N)$ of the support graph $\mathtt{S}_N$ of $N$. We provide explicit channel families for which superactivation gap $\mathrm{ci}_\circ(N \otimes \mathrm{id}^c_\beta) - \mathrm{ci}_\circ(\mathrm{id}^c_\beta)$ can be made arbitrarily large. We further show that a noiseless quantum channel of dimension equal to the orthogonal rank $\xi(\mathtt{S}_N)$ of the support graph suffices the purpose, establishing a strict quantum advantage whenever $\xi(\mathtt{S}_N) < \chi(\mathtt{S}_N)$. This advantage is demonstrated through three explicit constructions motivated from combinatorial state-independent, algebraic state-independent, and state-dependent proof of Kochen-Specker contextuality. Moreover, using the co-normal product of graphs, we show the scaling of quantum advantage ratio $\chi_f(\mathtt{S}_N)/\xi(\mathtt{S}_N)$, and also present example of channel for which quantum assistance can be exponentially more efficient than classical assistance. Taken together, our results establish the support graph $\mathtt{S}_N$, rather than the confusability graph $\mathtt{G}_N$, as the appropriate combinatorial object for characterizing channel utility in the conclusive identification task, and that channels dismissed as entirely useless by Shannon's zero-error framework can exhibit rich superactivation phenomenon and quantum advantage, while finding deep connections to quantum contextuality.
\end{abstract}

\section{Introduction}\label{sec:intro}

\noindent The capacity of a noisy classical channel to transmit information reliably is one of the central questions of information theory. Shannon's landmark work~\cite{Shannon1948} established the transmission capacity $C(N)$ as the supreme rate at which information can be sent with vanishingly small error probability over a discrete memoryless channel (DMC) $N : X \to Y$. A more stringent variant, the \emph{zero-error capacity} $C_\circ(N)$, introduced by Shannon~\cite{Shannon1956}, requires the probability of error to be exactly zero rather than merely vanishing. The zero-error capacity depends solely on the combinatorial structure of the channel through its \emph{confusability graph} $\mathtt{G}_N$, whose vertices are the channel inputs and whose edges connect pairs of inputs that can produce a common output. Specifically, single-shot zero-error capacity $c_\circ(N) = \log_2\alpha(\mathtt{G}_N)$, where $\alpha(\mathtt{G}_N)$ is the independence number of $\mathtt{G}_N$, and $C_\circ(N) = \lim_{k \to \infty} \frac{1}{k} \log_2 \alpha\big(\mathtt{G}_N^{\boxtimes k}\big)$, where $\mathtt{G}_N^{\boxtimes k}$ is the $k$-fold strong graph product of $\mathtt{G}_N$ with itself.

\noindent The zero-error framework has led to some of the most beautiful results in information theory and combinatorics \cite{Csiszar2015}. Shannon~\cite{Shannon1956} showed that the zero-error capacity can be strictly superadditive: there exist channels $N$ for which $c_\circ(N^{\otimes k}) > k \cdot c_\circ(N)$. The quintessential example is the channel $N_5$ whose confusability graph is the cycle $\mathtt{C}_5$, for which $\alpha(\mathtt{C}_5) = 2$ yet $\alpha\big(\mathtt{C}_5^{\boxtimes 2}\big) = 5$. Lov\'{a}sz's celebrated result~\cite{Lovasz1979} resolved the capacity of $N_5$ by introducing the theta function $\vartheta(\mathtt{G}_N)$, proving $C_\circ(N_5) = \frac{1}{2}\log_2 5$. The role of quantum resources in zero-error communication has also been studied extensively \cite{Guedes2016}. Cubitt et al.~\cite{Cubitt2010} demonstrated that shared entanglement can strictly increase the zero-error capacity of certain channels, while a more recent work~\cite{Ambuj2026} showed that a noisy classical channel paired with a perfect quantum channel can exhibit capacity enhancements beyond what classical assistance alone can achieve.

\noindent A recurring theme in these results is that the confusability graph $\mathtt{G}_N$ is the central combinatorial object governing zero-error communication. However, $\mathtt{G}_N$ discards finer characteristic features of the channels: many distinct channel matrices can share the same confusability graph, and thus appear identical from the zero-error perspective. In particular, any $n$-input channel $N$ with $\mathtt{G}_N = \mathtt{K}_n$, the complete graph on $n$ vertices, satisfies $C_\circ(N) = 0$, regardless of the finer structure of $N$. Such channels are deemed entirely useless for zero-error communication, and by extension for the identification framework later introduced by Ahlswede and Dueck~\cite{Ahlswede1989-1}.

\subsection*{Main Contributions}

\noindent In this paper, we ask: \emph{are channels with $C_\circ(N) = 0$ truly useless for all communication tasks, in particular within single-shot regime?} We answer this question in the negative by introducing a new communication task -- \emph{conclusive identification} -- and demonstrating that channels useless in the zero-error framework can exhibit rich and previously unobserved superactivation and quantum advantage phenomena.\\

\noindent$\checkmark${\it The Conclusive Identification Task}\vspace{-.30cm}\\

\noindent Given a classical channel $N : X \to X$ from a sender (Alice) to a receiver (Bob), the conclusive identification task requires Bob to identify the transmitted message $x \in X$ without error whenever he is certain of the answer, while being permitted to respond \emph{inconclusively} when the output is ambiguous. This task interpolates between the zero-error communication framework -- where Bob must always commit to a decision -- and the trivial task where Bob always responds inconclusively. The key metric is the \emph{conclusive identification index} $\mathrm{ci}_\circ(N)$, which counts the maximum number of inputs that Bob can conclusively and correctly identify.

\noindent The conclusive identification task is related to, but fundamentally distinct from, the identification framework of Ahlswede and Dueck~\cite{Ahlswede1989-1}. In the latter, two types of errors are considered: the error of the first kind (missed detection), where the receiver rejects the true message, and the error of the second kind (false acceptance), where an incorrect message is accepted. The Ahlswede-Dueck formulation permits both error probabilities to vanish asymptotically. In contrast, conclusive identification operates in the zero-error regime with the stringent requirement that the error of the second kind vanishes exactly ($\lambda_2 = 0$), while allowing a nonzero probability of the first kind ($\lambda_1 \leq 1$). This formulation naturally accommodates an inconclusive outcome, analogous to unambiguous state discrimination in quantum information theory \cite{Chefles2000,Barnett2009,Bae2015}, where the measurement may abstain rather than risk an incorrect decision.\\

\noindent$\checkmark${\it Support Graph and Graph Coloring}\vspace{-.3cm}\\

\noindent A central finding of this paper is that the appropriate combinatorial object for conclusive identification is not the confusability graph $\mathtt{G}_N$, but rather the \emph{support graph} $\mathtt{S}_N$, whose edges encode the zero-nonzero pattern of the channel matrix. For a symmetric not-fully-corrupted (SNFC) channel, $\mathtt{S}_N$ is an undirected graph with self-loops (often loops will not be shown for simplicity of representation), and we establish the following fundamental result:

\begin{quote}
\emph{The minimum classical assistance $\mathrm{id}^c_\beta$ required to conclusively identify all $|X|$ inputs of an SNFC channel $N$ having $c_\circ(N)=0~\&~\mathrm{ci}_\circ(N)=0$ is precisely the chromatic number $\chi(\mathtt{S}_N)$ of the support graph.}
\end{quote}

\noindent This forges a direct and unexpected connection between classical channel capacity theory and the graph coloring problem. In particular, the partition of the input alphabet $X$ into color classes corresponds exactly to Alice's encoding strategy for the assisted classical channel, and the properness of the coloring guarantees that inputs in the same class are conclusively distinguishable by Bob.\\

\noindent$\checkmark${\it Superactivation}\vspace{-.3cm}\\

\noindent We demonstrate that the conclusive identification index exhibits a striking \emph{superactivation} phenomenon. There exist SNFC channels $N$ with $c_\circ(N) = 0$ (completely useless for zero-error communication) and $\mathrm{ci}_\circ(N) = 0$ (completely useless for unassisted conclusive identification) yet 
\begin{equation*}
\mathrm{ci}_\circ(N \otimes \mathrm{id}^c_\beta) = |X|, \qquad \beta = \chi(\mathtt{S}_N) < |X|,
\end{equation*}
so that joint use of $N$ with a strictly smaller classical channel achieves perfect conclusive identification of all inputs. As we show the superactivation gap $|X| - \beta$ can be made arbitrarily large: for channels with support graph $\mathtt{W}_n$ (wheel graph), $\mathtt{F}_n$ (friendship graph), $\mathtt{St}_n$ (star graph), and $\mathtt{T}_{2,n}$ (Turán graph) -- the gap grows as $n - O(1)$, demonstrating unbounded superactivation. This contrasts sharply with the zero-error framework, where no such superactivation is possible when $c_\circ(N) = 0$.\\

\noindent$\checkmark${\it Quantum Advantage}\vspace{-.3cm}\\

\noindent We further investigate quantum assistance in the conclusive identification task and establish the following result:
\begin{quote}
\emph{Assistance of a noiseless quantum channel $\mathrm{id}^q_\beta$ is sufficient to conclusively identify all $|X|$ inputs of an SNFC channel $N$ having $c_\circ(N)=0~\&~\mathrm{ci}_\circ(N)=0$, where $\beta$ is the (complex) orthogonal rank $\xi(\mathtt{S}_N)$ of the support graph.}
\end{quote}
It thus exhibits a strict \emph{quantum advantage} whenever $\xi(\mathtt{S}_N) < \chi(\mathtt{S}_N)$. We demonstrate this advantage by exploiting structural features rooted in the Kochen--Specker (KS) theorem~\cite{Kochen1967}. We substantiate this advantage via three explicit constructions of increasing generality: combinatorial state-independent contextuality, algebraic state-independent contextuality, and state-dependent contextuality. These results demonstrate that the advantage is not tied to a particular form of contextuality proof, but rather follows from the separation $\xi(\mathtt{S}_N) < \chi(\mathtt{S}_N)$. Finally, we show that the quantum advantage ratio $\mathtt{QA}(N) := \frac{\chi(\mathtt{S}_N)}{\xi(\mathtt{S}_N)}$ can be made arbitrarily large under the co-normal product of graphs, thereby establishing scalability of the quantum advantage, and also provide example of channel for which quantum assistance is exponentially better than its classical counterpart. \vspace{-.25cm}\\

\noindent$\clubsuit$ {\it Organization}\vspace{-.25cm}\\

\noindent The remainder of the paper is organized as follows. Section \ref{sec2} provides a brief overview of zero-error communication theory. In Section \ref{sec3}, we first formally introduce the conclusive identification task and the support graph framework. We also present the results with the assistance of noiseless classical channel and demonstrate the striking superactivation phenomenon. 
In Section \ref{sec4}, we establish a strict quantum advantage, showing that the orthogonal rank of the support graph, which can strictly undercut the chromatic number, is the corresponding quantity for quantum assistance. We construct explicit channel families motivated by combinatorial state-independent, algebraic state-independent, and state-dependent KS contextuality proofs, and show that the quantum advantage can be scaled arbitrarily, with the ratio of classical to quantum resources growing exponentially in the natural complexity parameter of the channel.\\

\noindent$\clubsuit$ {\it Related Works}\vspace{-.25cm}\\

\noindent The zero-error capacity of classical channels was introduced by Shannon~\cite{Shannon1956} and remains an active area of research~\cite{Lovasz1979,Haemers1979}. The role of quantum resources in zero-error communication was investigated by Cubitt et al.~\cite{Cubitt2010} and extended in subsequent work~\cite{Leung2012}. The identification framework of Ahlswede and Dueck~\cite{Ahlswede1989-1} established the doubly exponential scaling of identification codes and its equality with the Shannon capacity. Graph coloring and chromatic numbers have appeared in information theory in the context of index coding, network coding, and zero-error source coding~\cite{Witsenhausen1976,Korner1988,BarYossef2006,Effros2015,Zuiddam2019}. The quantum chromatic number and its connection to KS contextuality were introduced by Cameron et al.~\cite{Cameron2007} and developed further in~\cite{Paulsen2016}. To the best of our knowledge, the conclusive identification task, its connection to graph coloring, and the superactivation and quantum advantage phenomena established here are new.

\section{Overview}\label{sec2}

\noindent A classical communication channel is a medium that carries classical information from a sender to a receiver. Mathematically, such a channel is described by a conditional probability distribution that maps channel inputs to outputs \cite{Cover2005}. Let $X$ and $Y$ denote the input and output alphabets, respectively, which we assume to be finite throughout this work.

\begin{definition}[Classical Channel]\label{def1}
A discrete memoryless classical channel (DMC) $N: X\to Y$ is specified by a stochastic matrix $\{P(y|x)\ge0~|~\sum_yP(y|x)=1\}$; $|X|,~|Y|<\infty$.
\end{definition}
\noindent Such a channel $N$ is associated with a  channel matrix (also called stochastic matrix)  $\mathbb{N}=(N_{yx})\in\mathbb{R}^{|Y|\times|X|}$, with entries given by $N_{yx}= P(y|x)$ for $x\in X~\&~y\in Y$. Thus, each column of $\mathbb{N}$ represents a probability vector in $\mathbb{R}^{|X|}$. 

\begin{definition}[Confusability Graph]\label{def2}
The confusability graph of a DMC $N$ is the simple graph $\mathtt{G}_N=(V,E)$ with vertex set $V=X$, where two distinct inputs $x,x'\in X$ are adjacent if they can produce a common output. 
\end{definition}
\noindent The confusability graph plays a central role in the study of zero-error communication over classical channels.
\begin{definition}[Single-shot zero-error capacity]\label{def3}
Let $\mathtt{G}_N$ denote the confusability graph of a classical channel $N$. The \emph{single-shot zero-error capacity} of $N$, measured in bits, is defined as $c_\circ(N)=\log_2 \alpha(\mathtt{G}_N)$, where $\alpha(\mathtt{G}_N)$ denotes the independence number of $\mathtt{G}_N$, i.e., the maximum cardinality of a subset of vertices in $\mathtt{G}_N$ containing no pair of adjacent vertices.
\end{definition}
\noindent When multiple independent uses of a channel $N$ are considered, the confusability graph of the $n$-fold channel $N^{\otimes n}$ is given by the $n$-fold strong graph product $\mathtt{G}_N^{\boxtimes n}$ of the confusability graph $\mathtt{G}_N$\footnote{Given two graphs $\mathtt{G} = (V(\mathtt{G}), E(\mathtt{G}))$ and $\mathtt{H} = (V(\mathtt{H}), E(\mathtt{H}))$, the strong product $\mathtt{G} \boxtimes \mathtt{H}$ is a graph with vertex set $V(\mathtt{G} \boxtimes \mathtt{H}) = V(\mathtt{G}) \times V(\mathtt{H})$, and two \textit{distinct} vertices $(g, h)$ and $(g', h')$ are adjacent in $\mathtt{G} \boxtimes \mathtt{H}$ if and only if $(g,h) \sim (g',h')\iff\big(g = g' \vee g \sim_{\mathtt{G}} g'\big) \wedge \big(h = h' \vee h \sim_{\mathtt{H}} h'\big)$.}. The asymptotic zero-error capacity of $N$ is defined as
\begin{align}
C_\circ(N)=\lim_{n\to\infty}\frac{1}{n}c_\circ\!\left(N^{\otimes n}\right).   
\end{align}
\noindent Strikingly, the zero-error capacity of a noisy classical channel can exhibit superadditivity; that is, there exist channels for which $c_\circ(N^{\otimes n}) > n\,c_\circ(N)$. A classic example was given by Shannon \cite{Shannon1956}, who considered a channel $N_5$ whose confusability graph is the cycle graph $\mathtt{C}_5$. In this case, $\alpha(\mathtt{C}_5)=2$, whereas $ \alpha(\mathtt{C}_5^{\boxtimes 2})=5$. Subsequently, in another seminal work, Lovász \cite{Lovasz1979} proved that $ C_\circ(N_5)=\log_2 \sqrt{5}$. More generally, he showed that for any channel $N$ with confusability graph $\mathtt{G}_N$, the asymptotic zero-error capacity satisfies $C_\circ(N)\le \log_2 \vartheta(\mathtt{G}_N)$, where $\vartheta(\mathtt{G}_N)$ denotes the Lovász theta number of $\mathtt{G}_N$.

The potential advantages of quantum resources in communication tasks have been widely studied \cite{Gisin2002,Gisin2007,Buhrman2010,khatri2024,Usenko2025}. More recently, their role in zero-error communication has also been investigated \cite{Cubitt2010,Cubitt2011,Prevedel2011,Yadavalli2022,Alimuddin2023,Agarwal2026}. In particular, the zero-error capacity of a classical channel can be defined in the presence of pre-shared correlations between the sender and the receiver. Three classes of correlations are typically considered: (i) classical correlations, also known as shared randomness (SR); (ii) shared entanglement (SE), where the parties share an entangled quantum state \cite{Horodecki2009}; and (iii) non-signaling (NS) correlations, which do not enable communication by themselves but can be more general than quantum correlations \cite{Brunner2014}. The single-shot assisted zero-error capacity $c_{\Omega}(N)$ [accordingly the asymptotic capacity $C_{\Omega}(N)$] of a channel $N$ is defined with respect to the class of correlations $\Omega\in\{\circ, SR,SE,NS\}$; where `$\circ$' denotes no assistance. The following observations as obtained in \cite{Cubitt2010,Cubitt2011} are noteworthy.

\begin{observation}\label{obs1}
(i) For every channel $N$, $c_\circ(N)=c_{SR}(N)$;
(ii) There exist channels $N$ such that $c_\circ(N)>0$ and $c_{SE}(N)>c_\circ(N)$;
(iii) There exist channels $N$ such that $c_\circ(N)=0$ and $c_{NS}(N)>0$.
\end{observation}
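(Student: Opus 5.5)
We take the three parts of Observation~\ref{obs1} in turn: part (i) by a direct convexity argument, and parts (ii)--(iii) by exhibiting explicit channels following \cite{Cubitt2010,Cubitt2011}.

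\emph{Part (i).} A shared-randomness-assisted zero-error code for $M$ messages is a random variable $\lambda$ with distribution $p(\lambda)$, known to both parties. For each $\lambda$ there is an encoder $e_\lambda:[M]\to X$ and a decoder $d_\lambda:Y\to[M]$. Zero error requires that for every message $m$ and every output $y$ with $P(y|e_\lambda(m))>0$ we have $d_\lambda(y)=m$, for every $\lambda$ with $p(\lambda)>0$. Fix any such $\lambda$. Then the inputs $e_\lambda(1),\dots,e_\lambda(M)$ have pairwise disjoint output supports, since any common output would be decoded to two different messages. These inputs are therefore distinct and pairwise non-adjacent in $\mathtt{G}_N$, so they form an independent set and $M\le\alpha(\mathtt{G}_N)$. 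This gives $c_{SR}(N)\le c_\circ(N)$. The reverse inequality holds because no assistance is a special case of shared randomness.

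\emph{Part (ii).} We need a channel whose confusability graph $\mathtt{G}$ satisfies $\alpha(\mathtt{G})<\alpha_q(\mathtt{G})$, where $\alpha_q$ is the entanglement-assisted independence number. We use the Kochen--Specker based construction of \cite{Cubitt2010}. Take a KS set of vectors in $\mathbb{C}^d$, grouped into orthonormal bases $B_1,\dots,B_n$. The inputs are pairs $(i,v)$ with $v\in B_i$. Two inputs are confusable exactly when the receiver, after learning which basis was measured, cannot tell them apart; the channel is built so that this happens unless the two vectors are orthogonal. Alice and Bob share a maximally entangled state. Alice measures in basis $B_i$ for message $i$ and sends the input $(i,v)$ corresponding to her outcome $v$. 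Bob then holds the conjugate of $v$, and he can distinguish the $n$ messages perfectly because the relevant states are orthogonal. The unassisted bound $\alpha(\mathtt{G})<n$ follows from the KS property: an independent set of size $n$ would amount to a noncontextual $0/1$ assignment to the vectors, which the KS theorem rules out. The analysis must also be arranged so that $c_\circ(N)>0$, which holds because $\alpha(\mathtt{G})\ge 1$, and at least $2$ in their example.

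\emph{Part (iii).} Here we take a channel whose confusability graph is complete, so $c_\circ(N)=0$, but for which non-signalling assistance still gives $c_{NS}(N)>0$. We use the characterisation in \cite{Cubitt2011}: the one-shot NS-assisted zero-error capacity equals the logarithm of the optimal value of a linear program in the hypergraph of output supports. A simple example is a channel with three inputs and three outputs whose supports are $\{1,2\},\{2,3\},\{1,3\}$. Every pair of inputs shares an output, so the confusability graph is complete, yet the fractional packing value is $3/2>1$. Since the NS value is the floor of this LP value, it equals $1$, which is not an improvement; so we instead choose supports whose fractional value reaches $2$, for example larger hypergraphs built from projective-plane-like supports. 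This yields $c_{NS}(N)\ge1$.

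The main obstacle is making parts (ii) and (iii) concrete. Part (ii) requires checking the KS-based gap explicitly. Part (iii) requires exhibiting a support hypergraph that is pairwise intersecting but has NS LP value at least $2$. In both cases we defer to the explicit constructions in \cite{Cubitt2010,Cubitt2011}.
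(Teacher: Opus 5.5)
The paper gives no argument of its own for this observation; all three parts are simply attributed to \cite{Cubitt2010,Cubitt2011}. Your part (i) is a correct self-contained proof: each value of the shared randomness in its support must itself be a zero-error code, and its codewords form an independent set of $\mathtt{G}_N$. Deferring (ii) and (iii) to the same references matches the paper. However, your sketches of (ii) and (iii) contain errors you should fix.

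In part (ii) the confusability rule is stated backwards. In the Cubitt et al.\ construction, inputs $(i,v)$ and $(j,w)$ are confusable \emph{if and only if} $\langle v|w\rangle=0$. Thus $\mathtt{G}_N$ is exactly the orthogonality graph of the basis-labelled vectors; any graph arises as a confusability graph, e.g.\ with one output per edge plus one private output per input. With your rule (confusable unless orthogonal), neither half of the argument goes through:
\begin{itemize}
\item Bob would have to separate confusable inputs whose states are not orthogonal.
\item An independent set would be a family of pairwise orthogonal vectors, not a KS value assignment.
\end{itemize}
Your later reasoning actually uses the correct rule. The inputs compatible with a given output form a clique, so their conjugated vectors are distinct and pairwise orthogonal, and Bob can measure in a basis containing them. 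An independent set meets each basis at most once, so one of size $n$ picks one vector per basis, consistently on shared vectors and with no two orthogonal. That is a forbidden KS colouring. Also, Bob never learns which basis was measured: the basis is the message.

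Further, $\alpha(\mathtt{G})\ge 1$ only gives $c_\circ(N)=0$. What you need is $\alpha(\mathtt{G})\ge 2$, which holds because two non-orthogonal vectors from different bases are non-adjacent.

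In part (iii) your projective-plane suggestion works and is easy to make concrete. Take the Fano plane, with lines as inputs and points as outputs, each line emitting a point lying on it.
\begin{itemize}
\item Any two lines meet, so $\mathtt{G}_N=\mathtt{K}_7$ and $c_\circ(N)=0$.
\item Put weight $1/3$ on every line. Each point lies on three lines, so every point constraint is saturated and the fractional packing value is $7/3$.
\item Hence $\lfloor 7/3\rfloor=2$ messages can be sent, and $c_{NS}(N)=1$.
\end{itemize}
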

\noindent Notably, shared entanglement can increase the zero-error capacity of a noisy classical channel only when the unassisted capacity is nonzero; otherwise such an increase would violate causality \cite{Cubitt2010}.

One may also consider a different form of assistance, in which a classical channel $N$ is supplemented by an additional noiseless channel. The assisting channel may itself be classical or quantum. Let $\mathrm{id}^c_k$ denote a noiseless classical channel of alphabet size $k$, i.e., $\mathrm{id}^c_k : X \to X$, where $|X|=k$ and the channel transmits the input symbol perfectly to the output. Similarly, let $\mathrm{id}^q_k$ denote a noiseless quantum channel acting on a $k$-dimensional Hilbert space, $\mathrm{id}^q_k : \mathcal{L}(\mathbb{C}^k) \to \mathcal{L}(\mathbb{C}^k)$, defined as the completely positive trace-preserving (CPTP) map $\mathrm{id}^q_k(\rho)=\rho,~ \forall\, \rho \in \mathcal{D}(\mathbb{C}^k)$ \cite{Kraus1983}. In this setting, the following observations made in \cite{Ambuj2026} are noteworthy.

\begin{observation}\label{obs2}
(i) For every classical channel $N$ and every $k\in\mathbb{N}$, $c_\circ(N\otimes \mathrm{id}^c_k)= c_\circ(N)+c_\circ(\mathrm{id}^c_k)=c_\circ(N)+k$, which follows from the properties of the strong product of graphs; (ii) There exist classical channels $N$ and integers $k$ such that $c_\circ(N\otimes \mathrm{id}^q_k) > c_\circ(N)+k$.
\end{observation}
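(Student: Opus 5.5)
The statement has two parts, and I would handle them separately: (i) is a purely graph-theoretic computation, and (ii) needs an explicit channel built from a Kochen--Specker vector set. Throughout I read $c_\circ(\mathrm{id}^c_k)$ as $\log_2 k$, which is what Definition~\ref{def3} gives. In that reading the "$+k$" in (i) is shorthand for the number of messages $k$ on the level of $\alpha$, i.e. $\log_2 k$ in bits.

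\textbf{Part (i).} The noiseless channel $\mathrm{id}^c_k$ has confusability graph $\overline{\mathtt{K}}_k$, the empty graph on $k$ vertices. So the confusability graph of $N\otimes \mathrm{id}^c_k$ is $\mathtt{G}_N\boxtimes\overline{\mathtt{K}}_k$. By the definition of the strong product in the footnote, two vertices $(g,h)$ and $(g',h')$ with $h\neq h'$ are never adjacent, because $h$ and $h'$ are neither equal nor adjacent in $\overline{\mathtt{K}}_k$. Hence the product graph is a disjoint union of $k$ copies of $\mathtt{G}_N$, one fibre for each $h$. An independent set is therefore exactly a union of independent sets chosen fibre by fibre. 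This gives $\alpha(\mathtt{G}_N\boxtimes\overline{\mathtt{K}}_k)=k\,\alpha(\mathtt{G}_N)$, and taking $\log_2$ yields the claimed additivity.

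\textbf{Part (ii), general framework.} First I would characterise one-shot zero-error codes for $N\otimes\mathrm{id}^q_k$ under product encodings: Alice sends a classical input $x$ through $N$ and a pure state $|\psi_x\rangle\in\mathbb{C}^k$ through $\mathrm{id}^q_k$. On receiving output $y$, Bob must perfectly discriminate the states of those codewords that can produce $y$. This is possible exactly when those states are mutually orthogonal. Consequently, a code on an input set $S\subseteq X$ exists whenever the induced subgraph $\mathtt{G}_N[S]$ has an orthogonal representation in $\mathbb{C}^k$, meaning adjacent vertices receive orthogonal vectors. In particular, if $\xi(\mathtt{G}_N)\le k$, then all $|X|$ inputs can be sent. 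On the classical side, part (i) shows the best achievable is $\log_2\big(k\,\alpha(\mathtt{G}_N)\big)$. It therefore suffices to exhibit a graph $\mathtt{G}$ with $\xi(\mathtt{G})=k$ and $|V(\mathtt{G})|>k\,\alpha(\mathtt{G})$.

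\textbf{Part (ii), the example.} I would take Cabello's 18-vector Kochen--Specker set in $\mathbb{C}^4$, which is arranged in 9 orthonormal bases with each vector lying in exactly two bases. The channel is built as follows:
\begin{itemize}[label={}, leftmargin=1em]
\item The input alphabet is the 18 vectors.
\item The output alphabet is the 9 bases.
\item Each input is sent uniformly at random to one of the two bases containing it.
\end{itemize}
Two inputs are then confusable exactly when they share a basis. Every such pair is orthogonal, so assigning each input its own vector gives an orthogonal representation in dimension $4$, and $\xi\le 4$.

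For the independence number, an independent set meets each of the 9 basis-cliques at most once. Each vertex lies in two cliques, so $2\alpha\le 9$, giving $\alpha\le 4$. Hence $k\,\alpha\le 16<18$, and therefore
\[
c_\circ(N\otimes\mathrm{id}^q_4)=\log_2 18>\log_2 16\ge c_\circ(N)+\log_2 4 .
\]

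\textbf{Main obstacle.} The delicate point is matching the confusability graph to the orthogonality structure. Confusability is only guaranteed to imply orthogonality, which is the direction the quantum code uses. That suffices for the argument, but I would check it carefully against the explicit vector list. I would also confirm that restricting to product encodings does not undercut the classical comparison. This holds because the classical bound from part (i) is already an upper bound over all classical codes.
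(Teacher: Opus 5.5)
Your proof is correct. For (i) it is the argument the paper only alludes to: the strong product with the edgeless graph $\overline{\mathtt{K}}_k$ is $k$ disjoint copies of $\mathtt{G}_N$, so $\alpha$ is multiplied by $k$. You are also right to read the printed ``$+k$'' as $+\log_2 k$. Taken literally, the statement contradicts Definition~\ref{def3}, and your reading is the one consistent with (P3) of Question~\ref{qu1}.

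For (ii) the paper gives no argument and simply defers to \cite{Ambuj2026}, so your route is genuinely different in being self-contained. You reduce the claim to finding a confusability graph with $\xi(\mathtt{G}_N)\le k<|X|/\alpha(\mathtt{G}_N)$, and you realise it with Cabello's 18 vectors, using the 9 contexts as outputs. This makes $\mathtt{G}_N$ exactly the context graph $\tilde{\mathtt{KS}}^4_{18}$ of Observation~\ref{obs4}. Orthonormality of each context then gives the representation in $\mathbb{C}^4$. The double count $2\alpha\le 9$, the same parity that makes this a KS set, yields $4\alpha\le 16<18$. I checked the contexts against the vector list in Fig.~\ref{fig7}: each is an orthonormal basis, and every vector lies in exactly two contexts, so the construction goes through.

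What your version buys is an elementary, checkable instance. It also makes the mechanism visible: it is the confusability-graph analogue of the paper's $\xi<\chi$ criterion for support graphs. The citation, by contrast, rests on external work that may be more general.

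One small point: you write $c_\circ(N\otimes\mathrm{id}^q_4)=\log_2 18$ but only prove $\ge$. That is all the statement needs. Equality also holds by counting. For a fixed output, the conditional quantum states of the messages compatible with it must have mutually orthogonal supports in $\mathbb{C}^4$, so at most $4$ messages per output. Each message reaches at least $2$ of the $9$ outputs, so $2M\le 36$ and hence $M\le 18$.
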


\noindent Notably, the zero-error capacity enhancement observed in \cite{Ambuj2026} is qualitatively different from the capacity enhancement discussed in \cite{Cubitt2010,Cubitt2011,Prevedel2011,Yadavalli2022,Agarwal2026}. 

\section{Results with classical channels}\label{sec3}
\subsection{Setup and Preliminaries}
Given a classical channel $N\equiv\{P(y|x)~|~\sum_yP(y|x)=1\}$ from input alphabet $X$ to output alphabet $Y$, we can define the output-range of an input and input-domain of an output as follows.
\begin{definition}[Output-Range \& Input-Domain]\label{def4}
For a channel $N : X \to Y$, the output-range of $x \in X$ is the set of outputs that occur with nonzero probability, i.e. $\Gamma_x := \{ y \in Y \mid P(y \mid x) > 0 \}$; and the input-domain of $y \in Y$ is the set of inputs that can produce that output with nonzero probability, i.e. $\Omega_y := \{ x \in X \mid P(y \mid x) > 0 \}$.
\end{definition}

\noindent For instance, consider a channel $N:X\to Y$, with input alphabet $X=\{a,b,c\}$ and output alphabet $Y=\{1,2\}$, defined by 
\begin{align}
N:=\left(
\begin{array}{c||c|c|c}
Y\backslash X         & x=a & x=b        & x=c \\
\hline\hline
y=1      & 1   & 1/3 & 0   \\\hline
y=2      & 0   & 2/3 & 1
\end{array}
\right). \label{Ex-Sup}
\end{align}
\noindent The input-range and output-domain are depicted in a bipartite hypergraph in Fig.~\ref{fig1}.

\begin{figure}[b!]
\centering
\includegraphics[width=0.5\textwidth]{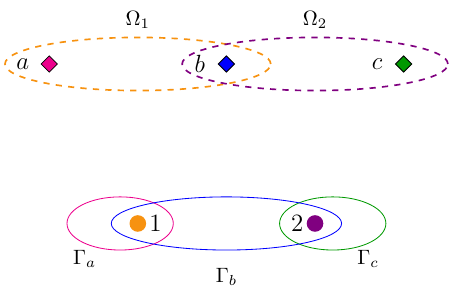}
\caption{Bipartite hypergraph representing the input domain and output range of the channel $N:\{a,b,c\}\to\{1,2\}$ defined in Eq.~(\ref{Ex-Sup}).}\label{fig1}
\end{figure}

\begin{definition}[XY-Equivalent Channel]\label{def5}
A channel $N : X \to Y$ is called XY-equivalent if $X = Y$, i.e., the input and output alphabets coincide.   
\end{definition}

\noindent For such a channel the output-range and input-domain can be depicted in a digraph instead of bipartite hypergraph. For instance, consider the XY-equivalent channel $N:X\to X$, with $X=\{1,2,3\}$, defined by 
\begin{align}
N:=\left(
\begin{array}{c||c|c|c}
Y\backslash X         & x=1 & x=2 & x=3 \\
\hline\hline
y=1      & 0   & 1/3 & 0   \\\hline
y=2      & 1/4   & 1/3 & 0  \\\hline
y=3      & 3/4   & 1/3 & 1  
\end{array}
\right). \label{digraph}
\end{align} 

\begin{figure}[t]
\centering
\includegraphics[width=0.35\textwidth]{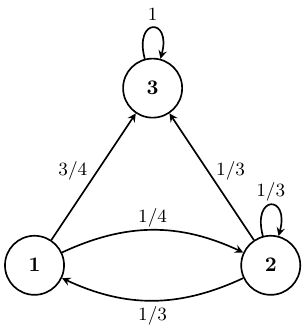}
\caption{Weighted digraph corresponding to the XY-equivalent channel of Eq.~(\ref{digraph}). A directed edge starting from a vertex $x$ ends in the output-range $\Gamma_x$ of that input. A directed edge ending at a vertex $x$ originates from the input-domain $\Omega_x$ of that output. Each edge is weighted by a transition probability in $(0, 1]$, denoting the probability of transmitting that particular input to that particular output.}
\label{fig2}
\end{figure}
\noindent The weighted digraph describing this channel is shown in Fig.~\ref{fig2}. 

\begin{definition}[Fully-Corrupted Input]\label{def6}
For an XY-equivalent channel $N : X \to X$, an input $x \in X$ is called fully-corrupted if $P(x \mid x) = 0$, otherwise it is called not-fully-corrupted. 
\end{definition}

\noindent Note that a not-fully-corrupted input $x \in X$ does not imply $P(x \mid x) = 1$; it only implies $P(x \mid x) > 0$. 

\begin{definition}[Symmetric Not-Fully-Corrupted Channel]\label{def:SNFC}
An XY-equivalent channel $\mathbb{N}\equiv\{N_{x'x}:=P(x'|x)~\mid~x,x'\in X\}$ is called symmetric not-fully-corrupted (SNFC) if 
\begin{equation*}
N_{x'x} = 0 \iff N_{xx'} = 0, \quad \forall\, x, x' \in X,
\end{equation*}
and additionally $P(x \mid x) > 0$ for all $x \in X$.
\end{definition}
   
\noindent For an SNFC channel the digraph representation of Fig.~\ref{fig2} can be further simplified.

\begin{definition}[Support Graph of an SNFC Channel]\label{def:support-graph-SNFC}
For an SNFC channel $N : X \to X$, the support graph is an undirected graph $\mathtt{S}_N = (X, E)$, where $\{x, x'\} \in E \iff N_{x'x} > 0$.
\end{definition}

\noindent Since $P(x \mid x) > 0$ for all $x \in X$, every vertex of $\mathtt{S}_N$ has a self-loop. However, for simplicity of representation, we omit the self-loops with the understanding that they are always present. The support graph $\mathtt{S}_N$ corresponds to the adjacency matrix $\mathbb{S}_N = (S_{x'x})$ given by
\begin{align}
S_{x'x} :=
\begin{cases}
1 & \text{if } N_{x'x} > 0, \\
0 & \text{otherwise}.
\end{cases}
\end{align}
\noindent The adjacency matrix of an SNFC channel is a symmetric matrix. Given a channel, its adjacency matrix is unique; however, the same adjacency matrix can arise from different channels. The confusability graph and the support graph of an SNFC channel are related to each other as follows.

\begin{proposition}\label{prop1}
For an SNFC channel $N : X \to X$ with support graph adjacency matrix $\mathbb{S}_N = (S_{xx'})$, the adjacency matrix $\mathbb{G}_N = (G_{xx'})$ of the confusability graph $\mathtt{G}_N$ is given by the non-zero pattern of $\mathbb{S}_N^2$, i.e.
\begin{equation}
G_{xx'} = 
\begin{cases}
1 & \emph{if } (\mathbb{S}_N^2)_{xx'} \neq 0, \\
0 & \emph{otherwise,}
\end{cases}
\end{equation}
or equivalently, $G_{xx'} \neq 0 \iff \sum_{\tilde{x} \in X} S_{x\tilde{x}}\, S_{\tilde{x}x'} > 0$.
\end{proposition}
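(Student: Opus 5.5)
The plan is to unfold both definitions entry by entry. By Definition~\ref{def2}, distinct inputs $x,x'$ are adjacent in $\mathtt{G}_N$ exactly when some output $\tilde{x}\in X$ satisfies $P(\tilde{x}\mid x)>0$ and $P(\tilde{x}\mid x')>0$. In the notation of Definition~\ref{def4}, this says $\Gamma_x\cap\Gamma_{x'}\neq\emptyset$. The whole argument is to rewrite this condition in terms of the support adjacency matrix.

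First, I will translate each factor. Since $N$ is XY-equivalent, outputs are elements of $X$. By the definition of $\mathbb{S}_N$, we have $P(\tilde{x}\mid x)=N_{\tilde{x}x}>0$ iff $S_{\tilde{x}x}=1$. The SNFC symmetry $N_{\tilde{x}x'}=0\iff N_{x'\tilde{x}}=0$ gives $S_{\tilde{x}x'}=S_{x'\tilde{x}}$. Likewise $S_{\tilde{x}x}=S_{x\tilde{x}}$, so $\mathbb{S}_N$ is symmetric.

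Hence the existence of a common output is equivalent to the existence of $\tilde{x}$ with $S_{x\tilde{x}}S_{\tilde{x}x'}=1$. Because every entry of $\mathbb{S}_N$ lies in $\{0,1\}$, each summand in
\[
(\mathbb{S}_N^2)_{xx'}=\sum_{\tilde{x}\in X}S_{x\tilde{x}}S_{\tilde{x}x'}
\]
is nonnegative. The sum is therefore positive iff at least one summand equals $1$, which is exactly the common-output condition. This gives $G_{xx'}\neq 0\iff(\mathbb{S}_N^2)_{xx'}>0$ for distinct $x,x'$.

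Diagonal entries need a convention. The self-loops, i.e.\ $P(x\mid x)>0$, make $(\mathbb{S}_N^2)_{xx}\ge S_{xx}^2=1$, so the diagonal of $\mathbb{S}_N^2$ is always nonzero. This matches either reading of $\mathtt{G}_N$: trivially "confusable with itself", or a simple graph whose diagonal we ignore. I will state this explicitly.

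I will also record that the self-loops give $\mathtt{S}_N\subseteq\mathtt{G}_N$. If $S_{xx'}=1$, take $\tilde{x}=x'$: then $S_{xx'}S_{x'x'}=1$. This is not needed for the proposition but confirms consistency.

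The main obstacle is only bookkeeping. The channel matrix is indexed as $N_{yx}$ (output, input), and the symmetry hypothesis is what allows replacing $S_{\tilde{x}x}$ by $S_{x\tilde{x}}$, turning the common-output condition into a genuine matrix product. Without symmetry one would get $\mathbb{S}_N^{T}\mathbb{S}_N$ instead of $\mathbb{S}_N^2$, so I will flag where the SNFC assumption is used.
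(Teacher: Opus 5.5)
Your proposal is correct and follows essentially the same route as the paper: expand $(\mathbb{S}_N^2)_{xx'}$ as a sum of $\{0,1\}$-valued terms, identify a nonzero term with an output $\tilde{x}$ reachable from both $x$ and $x'$, and handle the diagonal using the self-loops $P(x\mid x)>0$. You also state explicitly where the SNFC symmetry turns $S_{\tilde{x}x}$ into $S_{x\tilde{x}}$ (without it one gets $\mathbb{S}_N^{\top}\mathbb{S}_N$ instead of $\mathbb{S}_N^2$), which makes precise a step the paper uses only implicitly.
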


\begin{proof}
Since $N$ is SNFC, the adjacency matrix $\mathbb{S}_N$ is symmetric, $\mathbb{S}_N^\top = \mathbb{S}_N$, with all diagonal entries equal to $1$ (self-loops). The $(x, x')$ entry of $\mathbb{S}_N^2$ is $(\mathbb{S}_N^2)_{xx'} = \sum_{\tilde{x} \in X} S_{x\tilde{x}}\, S_{\tilde{x}x'}$. Since $S_{x\tilde{x}}, S_{\tilde{x}x'} \in \{0,1\}$, each term in the sum is binary, so $(\mathbb{S}_N^2)_{xx'}$ counts the number of common neighbours of $x$ and $x'$ in 
$\mathtt{S}_N$. We consider two cases.

\noindent\textbf{Case 1:} $(\mathbb{S}_N^2)_{xx'} \neq 0$. Then there exists $\tilde{x} \in X$ such that $S_{x\tilde{x}} = 1$ and $S_{\tilde{x}x'} = 1$, i.e. $N_{\tilde{x}x} > 0$ and $N_{\tilde{x}x'} > 0$. Hence output $\tilde{x}$ is reachable from both inputs $x$ and $x'$, so $x$ and $x'$ are confusable, giving $G_{xx'} = 1$.

\noindent\textbf{Case 2:} $(\mathbb{S}_N^2)_{xx'} = 0$. Then for all $\tilde{x} \in X$, either $S_{x\tilde{x}} = 0$ or $S_{\tilde{x}x'} = 0$, meaning no output is reachable from both $x$ and $x'$ simultaneously. Hence $x$ and $x'$ are not confusable, giving $G_{xx'} = 0$.

\noindent Combining both cases: $G_{xx'} = 1 \iff (\mathbb{S}_N^2)_{xx'} \neq 0\iff \exists\, \tilde{x} \in X : N_{\tilde{x}x} > 0 \text{ and } N_{\tilde{x}x'} > 0$. Finally, since $P(x \mid x) > 0$ for all $x \in X$ by the SNFC condition, we have $S_{xx} = 1$ for all $x$, so $(\mathbb{S}_N^2)_{xx} \geq S_{xx} \cdot S_{xx} = 1$ for all $x \in X$, confirming that every vertex carries a self-loop in $\mathtt{G}_N$.
\end{proof}

\begin{figure}[t!]
\centering
\includegraphics[width=0.5\textwidth]{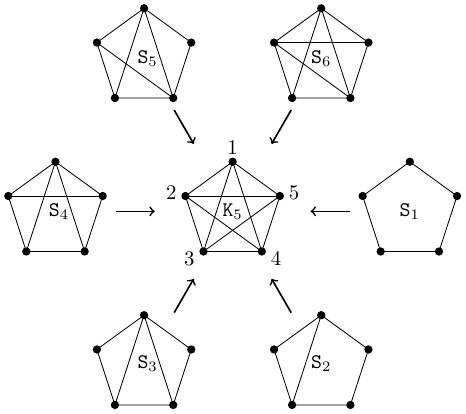}
\caption{Support graphs $\mathtt{S}_i$ (self-loops omitted) arranged clockwise with increasing number of chords ($\mathtt{S}_4$ and $\mathtt{S}_5$ have same number of chords), all leading to the same confusability graph $\mathtt{G}_{N_i}=\mathtt{K}_5$.}\label{fig3}
\end{figure}

\noindent Notably, a support graph corresponds to a unique confusability graph, but the same confusability graph can arise from different support graphs, and hence from different channels. For instance, consider the SNFC channels $N_i : X \to X$ for $i \in \{1, \cdots, 6\}$ with $X := \{1,\cdots, 5\}$, whose adjacency matrices (denoted $\mathbb{S}_{N_i}=\mathbb{S}_i$) are given by
\begin{align}
{\fontsize{8}{9}\selectfont
\begin{aligned}
\mathbb{S}_1 = 
\begin{pmatrix}
        1 & 1 & 0 & 0 & 1 \\
        1 & 1 & 1 & 0 & 0 \\
        0 & 1 & 1 & 1 & 0 \\
        0 & 0 & 1 & 1 & 1 \\
        1 & 0 & 0 & 1 & 1
\end{pmatrix};~~
\mathbb{S}_2 = 
\begin{pmatrix}
        1 & 1 & 1 & 0 & 1 \\
        1 & 1 & 1 & 0 & 0 \\
        1 & 1 & 1 & 1 & 0 \\
        0 & 0 & 1 & 1 & 1 \\
        1 & 0 & 0 & 1 & 1
\end{pmatrix};~~
\mathbb{S}_3= 
\begin{pmatrix}
        1 & 1 & 1 & 1 & 1 \\
        1 & 1 & 1 & 0 & 0 \\
        1 & 1 & 1 & 1 & 0 \\
        1 & 0 & 1 & 1 & 1 \\
        1 & 0 & 0 & 1 & 1
\end{pmatrix};\\
\mathbb{S}_4 = 
\begin{pmatrix}
        1 & 1 & 1 & 1 & 1 \\
        1 & 1 & 1 & 0 & 1 \\
        1 & 1 & 1 & 1 & 0 \\
        1 & 0 & 1 & 1 & 1 \\
        1 & 1 & 0 & 1 & 1
\end{pmatrix};~~
\mathbb{S}_5 = 
\begin{pmatrix}
        1 & 1 & 1 & 1 & 1 \\
        1 & 1 & 1 & 1 & 0 \\
        1 & 1 & 1 & 1 & 0 \\
        1 & 1 & 1 & 1 & 1 \\
        1 & 0 & 0 & 1 & 1
\end{pmatrix};~~
\mathbb{S}_6= 
\begin{pmatrix}
        1 & 1 & 1 & 1 & 1 \\
        1 & 1 & 1 & 1 & 1 \\
        1 & 1 & 1 & 1 & 0 \\
        1 & 1 & 1 & 1 & 1 \\
        1 & 1 & 0 & 1 & 1
\end{pmatrix}.
\end{aligned}}\label{penta}
\end{align}

\noindent The confusability graph for all these channels is the complete graph $\mathtt{K}_5$ (see Fig.~\ref{fig3}).

\subsection{Conclusive Identification Task}
\begin{question}\label{qu1}
Consider an SNFC channel $N : X \to X$ with $X := \{1, \ldots, n\}$, having confusability graph $\mathtt{G}_N = \mathtt{K}_n$. Such a channel exhibits the following properties:
\begin{itemize}[itemsep=-.15cm, parsep=.15cm, topsep=2pt, leftmargin=1.0cm]
\item[\emph{P1)}] $c_\circ(N) = C_\circ(N) = 0$ \cite{Shannon1956};
\item[\emph{P2)}] $c_{SE}(N) = C_{SE}(N) = 0$ \cite{Cubitt2010};
\item[\emph{P3)}] $c_\circ(N \otimes \mathrm{id}^c_d) = C_\circ(N \otimes \mathrm{id}^c_d) = \log_2 d$ \text{ bits};
\item[\emph{P4)}] $c_\circ(N \otimes \mathrm{id}^q_d) = C_\circ(N \otimes \mathrm{id}^q_d) = \log_2 d$ \text{ bits} \cite{Ambuj2026}.
\end{itemize}
Does this mean the channel $N$ cannot exhibit any superactivation or superadditivity behavior in some communication task?
\end{question}

\noindent We address this question in affirmative by introducing the following communication task. 

\begin{task}[Conclusive Identification]\label{task-CI}
Given XY-equivalent channel $N : X \to X$ from a sender (Alice) to a receiver (Bob), in conclusive identification task, Bob aims to identify the transmitted message conclusively whenever it arrives uncorrupted. When a message is corrupted, Bob is allowed to respond inconclusively, meaning he is not required to identify the message.
\end{task}

\noindent The utility of a channel in conclusive identification task can be quantified through an index defined as follows. 

\begin{definition}[Conclusive Identification Index]\label{def9}
For an XY-equivalent channel $N : X \to X$, the unassisted single-shot conclusive identification index, denoted $\mathrm{ci}_\circ(N)$, is the maximum number of inputs that the receiver can conclusively identify without error. 
\end{definition}

\noindent At this point it is important to note that the single-shot zero-error capacity $c_\circ$ and the single-shot conclusive identification index $\mathrm{ci}_\circ$ capture different communication aspects of an XY-equivalent channel, and the two quantities are in general incomparable. The key structural difference between the two quantities can be summarized as follows:
\begin{itemize}[itemsep=-0.15cm, parsep=0.15cm, topsep=2pt, leftmargin=0.5cm]
\item The single-shot zero-error capacity $c_\circ$ measures the maximum number of messages that can be transmitted with \textit{zero} probability of error. It depends only on the global structure of the confusability graph $\mathtt{G}_N$ of the channel, and in particular on its independence number $\alpha(\mathtt{G}_N)$.
\item On the other hand, $\mathrm{ci}_\circ$ measures the maximum number of inputs that the receiver can identify \textit{conclusively} — not all inputs need to be decodable, and the receiver is permitted to respond inconclusively when the received output is ambiguous. It depends on the \textit{local} structure of the support graph $\mathtt{S}_N$, specifically on the diagonal entries $P(x \mid x)$ of the channel matrix.
\end{itemize}

\noindent The conclusive identification task introduced here is conceptually related to, but distinct from, the classical identification framework of Ahlswede and Dueck~\cite{Ahlswede1989-1} (see \cite{Ahlswede1989-2,Ahlswede2002,Winter2013,Colomer2025} for related works and generalization of the problem in quantum setup). In the Ahlswede-Dueck setting, the receiver's goal is to answer the binary question ``was message $i$ sent?'',  with two types of errors -- missed identification ($\lambda_1$) and false identification ($\lambda_2$) -- both required to vanish asymptotically. Our task differs in three fundamental ways. First, we operate in the \textit{zero-error} regime: a conclusive response from Bob must always be correct, corresponding to $\lambda_2 = 0$ exactly. Second, Bob is explicitly permitted to respond \textit{inconclusively} when the received output is ambiguous, effectively allowing $\lambda_1 \leq 1$. This is analogous to unambiguous discrimination in quantum information theory \cite{Chefles2000,Barnett2009,Bae2015}. Third, we operate in the single-shot regime, measuring the number of conclusively identifiable inputs rather than an asymptotic rate.

\subsection{Conclusive Identification with Forward Classical Assistance}

It is evident that for any SNFC channel $N$ whose support graph is one of those in Eq.~(\ref{penta}), has conclusive identification index $\mathrm{ci}_\circ(N) = 0$. A natural follow-up question is whether the noisy channel $N$ can contribute genuine utility in the conclusive identification task when assisted by a perfect classical forward channel $\mathrm{id}^c_d$ from Alice to Bob. Specifically, one can ask the following.
\begin{question}\label{qu2}
For an SNFC channel $N : X \to X$ with $|X| = 5$ and $\mathrm{ci}_\circ(N) = 0$, does there exist $d < 5$ such that $\mathrm{ci}_\circ(N \otimes \mathrm{id}^c_d) = 5?$
\end{question}

\noindent An affirmative answer to Question~\ref{qu2} would establish an intriguing \emph{superactivation} phenomenon in the conclusive identification task: the noisy channel $N$, which is entirely useless for conclusive identification on its own (i.e. $\mathrm{ci}_\circ(N) = 0$), gets \emph{activated} by the assistance of a perfect classical channel $\mathrm{id}^c_d$ of size $d < 5$, yielding
\begin{equation*}
\mathrm{ci}_\circ(N \otimes \mathrm{id}^c_d) = 5 > \mathrm{ci}_\circ(N) + \mathrm{ci}_\circ(\mathrm{id}^c_d)  = 0 + d = d,
\end{equation*}
which strictly exceeds the sum of the individual indices. In other words, the two channels together can conclusively identify all $5$ inputs, even though neither channel alone can achieve this.

\subsection{Superactivation of Conclusive Identification Index}
We start by showing the superactivation of conclusive identification index with an explicit example. 

\begin{lemma}\label{lemma1}
For an SNFC channel $N : X \to X$ with $|X| = 5$ whose support graph is $\mathtt{S}_1$ of Eq.~(\ref{penta}), the receiver can conclusively identify all $5$ inputs when assisted with a perfect classical channel $\mathrm{id}^c_3$, i.e. $\mathrm{ci}_\circ(N \otimes \mathrm{id}^c_3) = 5$.
\end{lemma}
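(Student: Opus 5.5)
The plan is to give an explicit encoding/decoding protocol built from a proper $3$-colouring of the support graph $\mathtt{S}_1$, which is the $5$-cycle $\mathtt{C}_5$ with edges $\{1,2\},\{2,3\},\{3,4\},\{4,5\},\{5,1\}$. Since $\chi(\mathtt{C}_5)=3$, we fix the proper colouring $c(1)=c(3)=0$, $c(2)=c(4)=1$, $c(5)=2$. Alice, wishing to send $x\in X$, inputs $x$ into $N$ and simultaneously sends the colour $c(x)\in\{0,1,2\}$ through the perfect channel $\mathrm{id}^c_3$. Bob receives the pair $(y,c(x))$, where $y$ is the (noisy) output of $N$ with $P(y\mid x)>0$ only if $y=x$ or $\{x,y\}$ is an edge of $\mathtt{S}_1$.

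The decoding rule is as follows. On receiving $(y,k)$, Bob checks whether $c(y)=k$; if so he declares ``$x=y$'', otherwise he responds inconclusively. We must verify two things.

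\textbf{Zero false identification.} Suppose Bob declares $y$, so $c(y)=c(x)$. Since $y\in\Gamma_x$, either $y=x$ or $y\sim x$ in $\mathtt{S}_1$. Properness of the colouring excludes $y\sim x$ when $c(y)=c(x)$, so $y=x$ and the declaration is correct.

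\textbf{Every input identified conclusively.} Because $N$ is SNFC, $P(x\mid x)>0$ for every $x$, so with positive probability $y=x$, in which case $c(y)=c(x)$ and Bob correctly outputs $x$.

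Hence all five inputs are conclusively identifiable, giving $\mathrm{ci}_\circ(N\otimes\mathrm{id}^c_3)\ge 5$. The reverse bound $\le 5$ is trivial since $|X|=5$, and any extra classical symbol sent does not enlarge the message set being identified.

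The only delicate point is the interpretation of ``conclusively identify'': here it means that each input has nonzero probability of being correctly identified and that conclusive answers are never wrong. This matches the definition of $\mathrm{ci}_\circ$ as the number of inputs that can be conclusively identified without error. I would state this explicitly, and remark that $3$ cannot be reduced to $2$: with only two labels, some colour class would contain adjacent vertices of the odd cycle, so a conclusive answer could be wrong. That remark is not needed for the lemma itself.
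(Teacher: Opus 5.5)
Your proof is correct and follows the paper's route: you send a proper $3$-colouring of $\mathtt{S}_1=\mathtt{C}_5$ through $\mathrm{id}^c_3$ and have Bob decode within the colour class, with your general properness argument replacing the paper's case-by-case check in Table~\ref{tab1}, and your rule (accept only the self-loop output $y=x$) is more conservative than the paper's, which also accepts other private outputs such as output $1$ with label B for input $2$, a difference that does not matter for the lemma. One caveat on your closing aside, which you rightly call unnecessary: an adjacent pair in one colour class does not by itself force wrong answers (for the class $\{1,2\}$, outputs $5$ and $3$ are private to $1$ and $2$ respectively), so the impossibility with two labels has to be shown by checking that every $2$-partition of $\mathtt{C}_5$ leaves some vertex with no private output.
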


\begin{proof}
Let the channel $N$ has input (and output) $X=\{1,\cdots,5\}$, the inputs (outputs) of $\mathrm{id}^c_3$ channel are $\{red (R), green (G), blue (B)\}$. The combined channel $N \otimes \mathrm{id}^c_3$ has input and alphabet $X \times \{R,G,B\}$. Alice partitions $X = \{1,2,3,4,5\}$ into three groups according to the map $\mathcal{P}: X \to \{R,G,B\}$:
\begin{equation*}
\mathcal{P}(2)= \mathcal{P}(4)= B,\quad \mathcal{P}(1) = R, \quad \mathcal{P}(3) = \mathcal{P}(5) = G.
\end{equation*}
Upon receiving the pair $(x, \mathcal{P})\in X\times\{R,G,B\}$, Bob knows which group the input belongs to, and uses the output $x$ of the noisy channel to identify the input conclusively within that group. We verify each group separately in Table~\ref{tab1}.

\begin{table}[t]
\centering
\begin{tabular}{c|c|c|c}
\hline
~Input of $N$~ & ~Output of $N$~ &~ Message via $\mathrm{id}^c_3$~ & ~Identified?~ \\
\hline\hline
\multirow{3}{*}{$1$} 
& $5$ & R & $\checkmark$ \\
& $1$ & R & $\checkmark$ \\
& $2$ & R & $\checkmark$ \\
\hline
\multirow{3}{*}{$2$} 
& $1$ & B & $\checkmark$ \\
& $2$ & B & $\checkmark$ \\
& $3$ & B & $\times$ \\
\hline
\multirow{3}{*}{$3$} 
& $2$ & G & $\checkmark$ \\
& $3$ & G & $\checkmark$ \\
& $4$ & G & $\times$ \\
\hline
\multirow{3}{*}{$4$} 
& $3$ & B & $\times$ \\
& $4$ & B & $\checkmark$ \\
& $5$ & B & $\checkmark$ \\
\hline
\multirow{3}{*}{$5$} 
& $4$ & G & $\times$ \\
& $5$ & G & $\checkmark$ \\
& $1$ & G & $\checkmark$ \\
\hline
\end{tabular}
\caption{The conclusive identification scheme for the SNFC channel $N$ (having support graph $\mathtt{S}_{N}=\mathtt{S}_1$) when assisted by a noiseless classical channel $\mathrm{id}^c_3$. }
\label{tab1}
\end{table}
\noindent We have $\mathrm{ci}_\circ(N) = 0$ and $\mathrm{ci}_\circ(\mathrm{id}^c_3) = 3$, whereas the combined channel $N \otimes \mathrm{id}^c_3$ allows Bob to conclusively identify all $5$ inputs, giving $\mathrm{ci}_\circ(N \otimes \mathrm{id}^c_3) = 5$, 
which establishes the superadditivity of $\mathrm{ci}_\circ$. This completes the proof. 
\end{proof}

\begin{remark}\label{remark1}
The proof of Lemma~\ref{lemma1} admits a natural graph-theoretic interpretation. Specifically, the partition of $X=\{1,\cdots,5\}$ into three groups corresponds precisely to a proper $3$-coloring of the support graph $\mathtt{S}_1$, as depicted in Fig.~\ref{fig4}. Each color class defines a group of inputs assigned the same assisted message, and the conclusive identifiability of every input within its group follows from the fact that no two adjacent vertices in $\mathtt{S}_1$ share the same color, ensuring that their output-ranges are sufficiently separated for Bob to respond conclusively.
\begin{figure}[b!]
\centering
\includegraphics[width=0.55\textwidth]{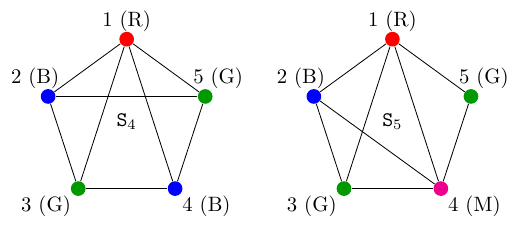}
\caption{(Left) Coloring of the support graph $\mathtt{S}_4$. The same coloring holds for $\mathtt{S}_3$, $\mathtt{S}_2$, and $\mathtt{S}_1$: each of these graphs has fewer edges (chords) than $\mathtt{S}_4$ and thus requires no additional colors. Although $\mathtt{S}_5$ has the same number of edges as $\mathtt{S}_4$, it requires one extra color (right).  Unlike $\mathtt{S}_4$, its degree sequence is more asymmetric, making a valid coloring with the same number of colors impossible.}
\label{fig4}
\end{figure}
\end{remark}

\begin{remark}\label{remark2}
Lemma~\ref{lemma1} generalizes to all SNFC channels $N : X \to X$ with $|X| = 5$ and $\mathtt{S}_N \in \{\mathtt{S}_i\}_{i=1}^{6}$ as follows.
\begin{itemize}[itemsep=-0.15cm, parsep=0.15cm, topsep=2pt, leftmargin=1.2cm]
\item[\textbf{Case 1.}] If $\mathtt{S}_N \in\{\mathtt{S}_i\}_{i=1}^{4}$, then $\mathrm{ci}_\circ(N \otimes \mathrm{id}^c_3) = 5$. The partition $\mathcal{P} : X \to \{\mathrm{R}, \mathrm{G}, \mathrm{B}\}$ is given by $\mathcal{P}(1)=\mathrm{R},~\mathcal{P}(2) = \mathcal{P}(4) = \mathrm{B},~\mathcal{P}(3) = \mathcal{P}(5) = \mathrm{G}$. The corresponding identification scheme is a modification of Table~\ref{tab1}, adapted according to $\mathtt{S}_i$ \emph{(see Appendix {\bf A1})}.

\item[\textbf{Case 2.}] If $\mathtt{S}_N \in\{\mathtt{S}_i\}_{i=5}^{6}$, then $\mathrm{ci}_\circ(N \otimes \mathrm{id}^c_4) = 5$. The partition $\mathcal{P} : X \to \{\mathrm{R}, \mathrm{G}, \mathrm{B}, \mathrm{M}\}$ is given by $\mathcal{P}(1) = \mathrm{R},~  \mathcal{P}(2) = \mathrm{B},~\mathcal{P}(3) = \mathcal{P}(5) = \mathrm{G},~\mathcal{P}(4) = \mathrm{M}$, where $\mathrm{M}$ denotes a fourth color (magenta) [see Fig.~\ref{fig4}].
\end{itemize}
\end{remark}

\noindent At this point we would like to emphasize that any SNFC channel with support graph $\mathtt{S}_{N_i}$, $i \in \{1, \ldots, 6\}$, has confusability graph $K_5$, and thus satisfies properties (P1)--(P4) listed in Question~\ref{qu1}. In particular, from the perspective of the confusability graph, these channels are uninteresting as they cannot exhibit any superadditivity or superactivation behavior in the zero-error communication framework. However, Lemma~\ref{lemma1} demonstrates that analyzing the same channels through the lens of their support graph reveals an intriguing superactivation phenomenon in the conclusive identification task. Furthermore, channels that are indistinguishable from one another under the confusability graph framework can exhibit fundamentally different operational behavior when their support graph structure is taken into account. This highlights a key limitation of the confusability graph approach and motivates the support graph as a strictly finer characterization of channels' utility.

\begin{observation}\label{obs3}
The channels corresponding to support graphs $\mathtt{S}_4$ and $\mathtt{S}_5$ are worth examining closely. Both graphs are obtained by adding exactly three chords to the cycle graph $\mathtt{C}_5$. However, they differ in the amount of classical assistance required for conclusive identification: $\mathrm{id}^c_3$ suffices for $\mathtt{S}_4$, whereas $\mathrm{id}^c_4$ is required for $\mathtt{S}_5$. This difference is reflected in their degree sequences. Let $i[\ell]$ denotes the degree of vertex $i$. The degree configurations of $\mathtt{S}_4,~\mathtt{S}_5$ are
\begin{subequations}
\begin{align}
\mathtt{S}_4:~~~\big\{1[4],\ 2[3],\ 3[3],\ 4[3],\ 5[3]\big\},\\  
\mathtt{S}_5:~~~\big\{1[4],\ 2[3],\ 3[3],\ 4[4],\ 5[2]\big\},
\end{align}    
\end{subequations}
making $\mathtt{S}_5$ more asymmetric than $\mathtt{S}_4$ (see Fig.~\ref{fig4}). It is this asymmetry that necessitates a finer partition (coloring), and hence a larger classical assistance.
\end{observation}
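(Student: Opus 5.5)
The plan is to read Observation~\ref{obs3} as two claims: $\chi(\mathtt{S}_4)=3$ and $\chi(\mathtt{S}_5)=4$. Following Remark~\ref{remark1}, I would then turn these into the operational statements that $\mathrm{id}^c_3$ suffices for $\mathtt{S}_4$ while $\mathrm{id}^c_4$ is required for $\mathtt{S}_5$. The degree-sequence discussion is only heuristic, so the proof should go through explicit cliques and colorings rather than through degrees.

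\textbf{Step 1 ($\mathtt{S}_4$).} From $\mathbb{S}_4$ in Eq.~(\ref{penta}), vertex $1$ is adjacent to every other vertex. The remaining edges are $2\!-\!3$, $3\!-\!4$, $4\!-\!5$ and $5\!-\!2$, which form a $4$-cycle. This cycle is bipartite with classes $\{2,4\}$ and $\{3,5\}$. Giving vertex $1$ its own color therefore yields the proper $3$-coloring of Remark~\ref{remark2}, Case~1. Conversely, $\{1,2,3\}$ is a triangle, so $\chi(\mathtt{S}_4)=3$.

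\textbf{Step 2 ($\mathtt{S}_5$).} From $\mathbb{S}_5$, the vertices $1,2,3,4$ are pairwise adjacent (entries $(1,2),(1,3),(1,4),(2,3),(2,4),(3,4)$ are all $1$). So $\mathtt{K}_4\subseteq\mathtt{S}_5$ and $\chi(\mathtt{S}_5)\ge 4$. The $4$-coloring of Remark~\ref{remark2}, Case~2 is proper: vertex $5$ is adjacent only to $1$ and $4$, so it can share color $\mathrm{G}$ with $3$. Hence $\chi(\mathtt{S}_5)=4$.

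\textbf{Step 3 (operational consequence).} This is the delicate step, because one must fix precisely what ``identify'' means. Following the Task, I would take it to mean that whenever input $x$ arrives uncorrupted (output $x$) together with the assisting symbol $c$, Bob can name $x$ with certainty. This forces that no other input $x'$ in the same class as $x$ satisfies $N_{xx'}>0$. Consequently each class of Alice's encoding map $\mathcal{P}:X\to\{1,\dots,d\}$ must be an independent set of $\mathtt{S}_N$, i.e.\ $\mathcal{P}$ is a proper coloring. I would first argue that a deterministic $\mathcal{P}$ is without loss of generality: if an input were randomized over symbols, each symbol used for it must separately satisfy the condition. For $\mathtt{S}_5$ with $d=3$, pigeonhole on the $\mathtt{K}_4$ then gives two adjacent vertices $x\sim x'$ in one class. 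Bob receiving $(x,c)$ cannot exclude $x'$, so not all $5$ inputs are conclusively identified. Sufficiency with $d=4$ (and with $d=3$ for $\mathtt{S}_4$) follows from Step~1 and Step~2 exactly as in Table~\ref{tab1}.

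I expect the main obstacle to be this formalization in Step~3, namely justifying the ``uncorrupted output must be unambiguous'' reading. Under the looser per-input reading, where some conclusive output suffices, the $\mathtt{K}_4$ argument would need to be re-examined.
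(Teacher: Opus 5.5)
Your proof is correct. It follows the paper's outline: the explicit partitions of Remark~\ref{remark2} for the upper bounds, and the coloring characterization of Theorem~\ref{theo1} for the operational claim. The lower bound, however, rests on a different fact.

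\textbf{The lower bound.} The paper gives no separate argument for Observation~\ref{obs3}; it attributes the fourth color for $\mathtt{S}_5$ to its more asymmetric degree sequence. You instead exhibit the clique $\{1,2,3,4\}$, which is the actual obstruction and gives $\chi(\mathtt{S}_5)\ge 4$ in one line. For these two specific sequences the degree data do determine the graphs, and this is how the paper's remark can be made rigorous. On five vertices a degree-$4$ vertex is universal.
\begin{itemize}
\item The sequence $(4,3,3,3,3)$ forces a universal vertex joined to a $2$-regular graph on four vertices. That graph must be $\mathtt{C}_4$, so $\mathtt{S}_4\cong\mathtt{W}_4$ and $\chi=3$.
\item The sequence $(4,4,3,3,2)$ forces two universal vertices with residual degrees $1,1,0$. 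Hence the two degree-$3$ vertices are adjacent, giving a $\mathtt{K}_4$.
\end{itemize}
Your clique argument is shorter, and unlike the degree argument it does not depend on the sequence having a unique realization.

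\textbf{The operational step.} Your Step~3 is more careful than the necessity half of Theorem~\ref{theo1}, which speaks loosely of disjoint output-ranges; this includes your reduction from randomized to deterministic encodings. On the obstacle you flag: the sufficiency half of Theorem~\ref{theo1} uses the looser criterion that some private output within the class suffices. Under that criterion the conclusion for $\mathtt{S}_5$ still holds. Since $\Gamma_1=\Gamma_4=X$, inputs $1$ and $4$ must each sit in a singleton class. Since $\Gamma_2=\Gamma_3=\{1,2,3,4\}$, inputs $2$ and $3$ cannot share a class. So three symbols fail under either reading.

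\textbf{Remaining claims.} You should still record the factual claims you did not write down; all are immediate from Eq.~(\ref{penta}). The added chords are $\{1,3\},\{1,4\},\{2,5\}$ for $\mathtt{S}_4$ and $\{1,3\},\{1,4\},\{2,4\}$ for $\mathtt{S}_5$, and the degree lists are as stated.
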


\noindent This observation naturally raises the following question: what is the minimum perfect classical assistance required to activate an SNFC channel for conclusive identification? We answer this question in the following theorem.

\begin{theorem}\label{theo1}
For an SNFC channel $N : X \to X$ with $|X| = n$, $c_\circ(N)=0$, $\mathrm{ci}_\circ(N) = 0$, and support graph $\mathtt{S}_N$, a perfect classical channel $\mathrm{id}^c_\beta$ is necessary and sufficient to achieve $\mathrm{ci}_\circ(N \otimes \mathrm{id}^c_\beta) = n$, with $\beta$ being the chromatic number\footnote{A proper graph coloring is not defined for graphs with self-loops, since no valid coloring exists in that case. While the support graph $\mathtt{S}_N$ includes a self-loop at each vertex $x$, here we consider the associated simple graph obtained by deleting all self-loops. The chromatic number $\chi(\mathtt{S}_N)$ is thus well-defined.} $\chi(\mathtt{S}_N)$ of the support graph $\mathtt{S}_N$. 
\end{theorem}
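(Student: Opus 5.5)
The plan is to first make precise what ``conclusively identify input $x$'' means operationally, and then show that the valid encodings for $N\otimes\mathrm{id}^c_\beta$ are exactly the proper $\beta$-colorings of $\mathtt{S}_N$. Following the Task, input $x$ counts as conclusively identified when Bob, receiving the uncorrupted output $x$ of $N$ together with the assisting symbol, can declare ``$x$'' with zero error. Formally, Alice uses an encoding that attaches to each $x\in X$ a symbol $t\in\{1,\dots,\beta\}$, possibly at random. Bob receives $(y,t)$. He may declare $x$ only if $x$ is the unique input that produces $(y,t)$ with nonzero probability, and we require that he can do so for $y=x$. The unassisted case fits this reading. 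By symmetry of the SNFC condition, $\mathrm{ci}_\circ(N)=0$ means every vertex of $\mathtt{S}_N$ has a neighbour other than itself: output $x$ is also reachable from some $x'\neq x$. The hypotheses $c_\circ(N)=0$ and $\mathrm{ci}_\circ(N)=0$ play no role beyond placing us in the nontrivial regime, where $\chi(\mathtt{S}_N)\ge 2$.

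\emph{Sufficiency.} Fix a proper coloring $\mathcal{P}:X\to\{1,\dots,\chi(\mathtt{S}_N)\}$ of the loopless version of $\mathtt{S}_N$. Alice sends $x$ through $N$ and $\mathcal{P}(x)$ through $\mathrm{id}^c_\beta$, exactly as in Lemma~\ref{lemma1}. Suppose Bob receives $(x,c)$. Any input $x'$ that could produce this pair must satisfy $\mathcal{P}(x')=c$ and $N_{xx'}>0$, so $x'=x$ or $x'\sim x$ in $\mathtt{S}_N$. Properness rules out $x'\sim x$ within the same color class. Hence $x'=x$, Bob declares $x$ without error, and all $n$ inputs are identified; Table~\ref{tab1} and Remark~\ref{remark2} are instances of this argument.

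\emph{Necessity.} Suppose $\beta<\chi(\mathtt{S}_N)$ and some encoding, possibly randomized with tag distribution $q(t|x)$, achieves $\mathrm{ci}_\circ=n$. For each $x$ choose one tag $t_x$ with $q(t_x|x)>0$; this defines a map $X\to\{1,\dots,\beta\}$. Since $\beta<\chi(\mathtt{S}_N)$, this map is not a proper coloring, so there are adjacent $x\neq x'$ with $t_x=t_{x'}=t$. By the SNFC symmetry $N_{xx'}>0$, so the pair $(x,t)$ occurs with positive probability under input $x'$ as well as under $x$. Bob therefore cannot declare $x$ upon receiving $(x,t)$ without risking error, which contradicts the assumption that $x$ is conclusively identified. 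Hence $\beta\ge\chi(\mathtt{S}_N)$.

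The main obstacle is pinning down the identification criterion. In Table~\ref{tab1} some inputs are also marked identified via outputs other than $x$ itself, and under a laxer ``some output is unambiguous'' reading, necessity could fail. I will therefore state explicitly that the task demands correct identification on the uncorrupted output $x$, which matches the Task's phrase ``whenever it arrives uncorrupted''. The same reading must be used for $\mathrm{ci}_\circ(N)=0$. The randomized-encoding reduction above then closes the remaining gap.
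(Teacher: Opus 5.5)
Your proof is correct and follows the paper's route. A proper $\chi(\mathtt{S}_N)$-coloring used as Alice's tag gives sufficiency through the private self-output $x$, and for necessity the tag partition is forced to be a proper coloring; your requirement that $x$ be identified on its uncorrupted output $x$, together with your treatment of randomized tags, is exactly what makes the necessity step valid, whereas the paper's ``disjoint output-ranges within that group'' asks for more than is needed.

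Your caution about the laxer reading suggested by Table~\ref{tab1} is justified. Take $\mathtt{K}_m$ on vertices $c_1,\ldots,c_m$, attach a pendant leaf $\ell_i$ to each $c_i$, and add a universal vertex $u$. Then $\mathrm{diam}\le 2$, $\mathrm{ci}_\circ(N)=0$ and $\chi=m+1$, yet the three tag classes $\{u\}$, $\{c_1,\ldots,c_m\}$ and $\{\ell_1,\ldots,\ell_m\}$ give every input some unambiguous output, so necessity fails for $m\ge 3$ under that reading.
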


\begin{proof}
\textbf{Sufficiency.} Assign to each input $x \in X$ a color $\mathcal{P}(x) \in \{1, \ldots, \beta\}$ according to a proper $\beta$-coloring of $\mathtt{S}_N$, and transmit $\mathcal{P}(x)$ to Bob via $\mathrm{id}^c_\beta$. Upon receiving $(y, \mathcal{P}(x))$, Bob restricts attention to the color class $\mathcal{P}^{-1}(\mathcal{P}(x))$. Since $\mathcal{P}$ is a proper coloring, no two vertices in the same color class are adjacent in $\mathtt{S}_N$, meaning their output-ranges do not overlap via shared neighbours. In particular, $x$ has a private output 
\begin{align*}
y \in \Gamma_x \setminus \cup_{x' \in \mathcal{P}^{-1}(\mathcal{P}(x)),\, x' \neq x} 
\Gamma_{x'}   
\end{align*}
within its color class, allowing Bob to conclusively identify $x$. Since this holds for all $x \in X$, we have $\mathrm{ci}_\circ(N \otimes \mathrm{id}^c_\beta) = n$.

\noindent\textbf{Necessity.} Suppose $\mathrm{id}^c_k$ with $k < \beta$ suffices, i.e. $\mathrm{ci}_\circ(N \otimes \mathrm{id}^c_k)= n$. Then the partition of $X$ induced by Alice's encoding strategy defines a proper $k$-coloring of $\mathtt{S}_N$: two inputs assigned the same group must have disjoint output-ranges within that group, which requires them to be non-adjacent in $\mathtt{S}_N$. But this contradicts $\chi(\mathtt{S}_N) = \beta > k$. Hence $k \geq \beta$, and $\mathrm{id}^c_\beta$ is necessary. This completes the proof.
\end{proof}

\noindent Theorem~\ref{theo1} reveals an intriguing connection between the conclusive identification task and the graph coloring problem.

\subsection{Arbitrarily Large Superactivation}
As pointed out in Remark~\ref{remark2}, the amount of classical assistance required to achieve conclusive identification varies depending on the support graph of the channel. Specifically, while $\mathrm{id}^c_4$ is required for channels with support graphs in $\{\mathtt{S}_i~\mid~i=5,6\}$, an assistance of $\mathrm{id}^c_3$ suffices for channels with support graphs in $\{\mathtt{S}_i~\mid~i=1,\cdots,4\}$, thereby exhibiting a higher degree of superactivation. This naturally raises the question of whether the gap $\Delta(N):=  \mathrm{ci}_\circ(N \otimes \mathrm{id}^c_\beta) -  \mathrm{ci}_\circ(\mathrm{id}^c_\beta) = n - \beta$ can be made arbitrarily large. Our next result answers this question affirmatively.  

\begin{theorem}\label{theo2}
For an SNFC channel $N : X \to X$ with $|X| = n$ and support graph $\mathtt{S}_N$ being the cycle graph $\mathtt{C}_n$ (with each vertices having self-loop), the following hold:
\begin{itemize}[itemsep=-0.15cm, parsep=0.15cm, topsep=2pt, leftmargin=0.8cm]
\item[\emph{(i)}] $\mathrm{ci}_\circ(N) = 0$;
\item[\emph{(ii)}] $\mathrm{ci}_\circ(N \otimes \mathrm{id}^c_\beta)=n$, where $\beta = \chi(\mathtt{C}_n)=2$ for even $n$ and $\beta=3$ for odd $n$;
\item[\emph{(iii)}] The superactivation gap $\Delta(N)=n-\beta$,which grows arbitrarily large as $n \to \infty$.
\end{itemize}
\end{theorem}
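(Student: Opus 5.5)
The plan is to prove (i) directly from the definition of $\mathrm{ci}_\circ$ via output-ranges, to get (ii) from Theorem~\ref{theo1} once its hypotheses are checked, and to read off (iii) by arithmetic. Throughout, label $X=\{1,\dots,n\}$ cyclically and take $n\ge 4$. For $n=3$ we have $\mathtt{C}_3=\mathtt{K}_3$ with $\chi=3=n$, so the gap is trivially $0$ and the statement is degenerate there. The support-graph structure gives $\Gamma_x=\Omega_x=\{x-1,x,x+1\}$ (indices mod $n$).

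For (i), Bob's unassisted decoding is a map from outputs to $X\cup\{\text{inconclusive}\}$. An input $x$ is conclusively identifiable without error only if some output $y\in\Gamma_x$ has $\Omega_y=\{x\}$; otherwise every output $x$ can produce could also come from another input, and a conclusive answer on it would risk error. Here $|\Omega_y|=3$ for every $y$, since each vertex of $\mathtt{C}_n$ has exactly two neighbours plus its loop. So no output is private and $\mathrm{ci}_\circ(N)=0$.

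For (ii), I will invoke Theorem~\ref{theo1}, which requires $c_\circ(N)=0$ and $\mathrm{ci}_\circ(N)=0$. The second is (i). The first needs care: by Proposition~\ref{prop1}, $\mathtt{G}_N$ joins $x,x'$ iff they have a common closed neighbour, i.e.\ iff $|x-x'|\le 2$ cyclically. This is complete only for $n\le5$, so for large $n$ we get $\alpha(\mathtt{G}_N)>1$ and $c_\circ(N)\neq0$. This is the main obstacle, since Theorem~\ref{theo1} as stated does not literally apply. I would therefore bypass its hypothesis and prove (ii) directly using the sufficiency half of its argument, which never used $c_\circ(N)=0$.

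Concretely, take the proper coloring: alternate two colors for even $n$, and for odd $n$ alternate two colors on $1,\dots,n-1$ and give $n$ a third color. Send the color through $\mathrm{id}^c_\beta$. Within a color class, distinct vertices are non-adjacent, so the output $y=x$ lies in $\Gamma_x$ but not in $\Gamma_{x'}$ for any other $x'$ in the class (that would need $x'\sim x$). Bob therefore decodes conclusively on the pair $(x,\mathcal{P}(x))$. This gives $\mathrm{ci}_\circ(N\otimes\mathrm{id}^c_\beta)=n$ with $\beta=\chi(\mathtt{C}_n)$, and the value of $\chi(\mathtt{C}_n)$ (two for even $n$, three for odd $n$) is standard.

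For minimality, I would repeat the necessity argument of Theorem~\ref{theo1}. If two adjacent vertices $x\sim x'$ shared a label, both would generate the outputs $x$ and $x'$. To identify every input, each must have a private output within its class. I would check that an odd-cycle labeling with fewer than three labels forces some class to contain a vertex all of whose outputs are shared within the class. Even without this minimality, (ii) and (iii) as stated only require achievability with $\beta=\chi(\mathtt{C}_n)$.

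Finally, (iii) follows because $\mathrm{ci}_\circ(\mathrm{id}^c_\beta)=\beta$, so $\Delta(N)=n-\beta\ge n-3\to\infty$.
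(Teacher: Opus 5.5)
Your proof is correct and has the same structure as the paper's. Part (i) follows from $|\Omega_y|=3$ for every output. Part (ii) sends the colour of a proper $\chi(\mathtt{C}_n)$-colouring through $\mathrm{id}^c_\beta$, which makes the self-loop output $y=x$ private within each colour class.

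You depart from the paper in two places, and both are improvements.
\begin{itemize}
\item For odd $n$ the paper colours by $x \bmod 3$. This is not a proper colouring when $n\equiv 1 \pmod 3$: for $n=7$, the adjacent vertices $7$ and $1$ both receive colour $1$. The paper's claim that same-colour output-ranges are disjoint also fails across the wrap-around whenever $3\nmid n$; for $n=5$, $\Gamma_2\cap\Gamma_5=\{1\}$. The paper's scheme still happens to work, because $1$ and $n$ can be identified through the outputs $2$ and $n-1$, but not for the reason the paper gives. Your colouring (alternate two colours along the path $1,\dots,n-1$, third colour on $n$) is proper for every odd $n$, so the self-loop argument applies verbatim.
\item You are right that Theorem~\ref{theo1} cannot be cited as stated. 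Its hypothesis $c_\circ(N)=0$ fails for $n\ge 6$ by Proposition~\ref{prop2}. Only the sufficiency half is needed, and that half never uses the hypothesis.
\end{itemize}

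Minimality is not part of the statement, and the paper's proof establishes it only by citing Theorem~\ref{theo1}, so your proof is complete without it. You were also right not to rely on the rule that adjacent inputs cannot share a label, because it is false in general. On the path with vertices $y,x,x',y'$ in that order, the classes $\{x,x'\}$ and $\{y,y'\}$ give every vertex a private output.

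For the odd cycle, the check you proposed does go through:
\begin{itemize}
\item With two labels, a monochromatic run of length at least $3$ leaves its middle vertex with no private output.
\item A run $\{x,x+1\}$ of length $2$ forces $x-2,x-1$ and $x+2,x+3$ into the other class, so the neighbouring runs also have length exactly $2$.
\item Hence either all runs have length $1$ (a proper $2$-colouring) or all have length $2$. Since runs alternate in colour, $n$ is even in both cases.
\end{itemize}
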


\begin{proof}
(i) Since $\mathtt{S}_N =\mathtt{C}_n$, the output-range of each input $x \in X$ is
\begin{equation*}
\Gamma_x = \{x-1,\ x,\ x+1\} \pmod{n},
\end{equation*}
where the self-loop contributes $x$ itself. For any output $y \in X$, the input-domain is
\begin{equation*}
\Omega_y = \{y-1,\ y,\ y+1\} \pmod{n},
\end{equation*}
so every output is reachable from exactly three distinct inputs. Hence no output $y$ exclusively identifies any single input, i.e. $|\Omega_y| \geq 3 > 1$ for all $y \in X$, giving $\mathrm{ci}_\circ(N) = 0$.\\

\noindent (ii) By Theorem~\ref{theo1}, it suffices to exhibit a proper $\beta$-coloring of $\mathtt{C}_n$ such that within each color class, every input has a private output.

\noindent{\bf Case 1} ($n$ even, $\beta = 2$). Color the vertices with two colors:
\begin{equation*}
\mathcal{P}(x) \equiv x \pmod{2},\quad \mathcal{P}(x) \in \{1, 2\}.
\end{equation*}
This is a proper $2$-coloring of $\mathtt{C}_n$ since $n$ is even. The color classes are $A = \{1, 3, 5, \ldots, n-1\}$ and $B = \{2, 4, 6, \ldots, n\}$. For any $x \in A$, its output-range is $\Gamma_x = \{x-1, x, x+1\}$. Within class $A$, the nearest other vertex to $x$ is $x \pm 2$, with output-range $\Gamma_{x \pm 2} = \{x\pm1, x\pm2, x\pm3\}$. The private output of $x$ within $A$ is $\{x\} \subset \Gamma_x \setminus \bigcup_{x' \in A,\, x' \neq x} \Gamma_{x'}$, since $x \notin \Gamma_{x\pm2}$ as $|x - (x\pm2)| = 2 > 1$. The same argument applies to every vertex in $B$ by symmetry. Hence every input has a private output within its color class.

\noindent{\bf Case 2} ($n$ odd, $\beta = 3$). Color the vertices as:
\begin{equation*}
\mathcal{P}(x) \equiv x \pmod{3},\quad \mathcal{P}(x) \in \{1, 2, 3\}.
\end{equation*}
This is a proper $3$-coloring of $\mathtt{C}_n$ since no two adjacent vertices $x$ and $x+1$ share the same color modulo $3$. The color classes partition $X$ into three groups of sizes $\lceil n/3 \rceil$ or $\lfloor n/3\rfloor$. Within each color class, consecutive vertices are at distance $3$ in $C_n$, so their output-ranges $\Gamma_x = \{x-1, x, x+1\}$ and $\Gamma_{x+3} = \{x+2, x+3, x+4\}$ are disjoint. Hence the self-loop output $x \in \Gamma_x$ is private to $x$ within its color class, and every input is conclusively identifiable.\\

\noindent(iii) Since $\mathrm{ci}_\circ(\mathrm{id}^c_\beta) = \beta$, 
the superactivation gap is:
\begin{equation*}
\Delta(N)= \begin{cases} n - 2 & \text{if } n \text{ is even,} \\ n - 3 & \text{if } n \text{ is odd.} \end{cases}
\end{equation*}
In both cases the gap grows as $n - O(1)$, which is unbounded as $n \to \infty$. 
\end{proof}

\noindent On the single-shot zero-error capacity of the channels in Theorem~\ref{theo2}, we have the following result. 

\begin{proposition}\label{prop2}
For an SNFC channel $N : X \to X$ with $|X| = n$ and support graph $\mathtt{S}_N$ being the $\mathtt{C}_n$ (with self-loops), the single-shot 
zero-error capacity is
\begin{equation*}
    c_\circ(N) = 
    \begin{cases}
        0 & \text{if } n \leq 5, \\[4pt]
        \log_2 \left\lfloor \dfrac{n}{3} \right\rfloor 
        & \text{if } n \geq 6.
    \end{cases}
\end{equation*}
\end{proposition}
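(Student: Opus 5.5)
The plan is to use Proposition~\ref{prop1} to identify the confusability graph $\mathtt{G}_N$ explicitly, and then compute its independence number by a cyclic gap-counting argument. Under Definition~\ref{def3}, this gives $c_\circ(N)=\log_2\alpha(\mathtt{G}_N)$.

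\emph{Step 1: identify $\mathtt{G}_N$.} With self-loops, the closed neighbourhood of $x$ in $\mathtt{S}_N=\mathtt{C}_n$ is $\{x-1,x,x+1\}\pmod n$. By Proposition~\ref{prop1}, distinct $x,x'$ are adjacent in $\mathtt{G}_N$ iff $(\mathbb{S}_N^2)_{xx'}\neq 0$, i.e.\ iff their closed neighbourhoods intersect. For $n\ge 3$ this holds iff the cyclic distance $d(x,x')\le 2$. So $\mathtt{G}_N=\mathtt{C}_n^2$, the square of the cycle. For $n\le 5$ every pair of distinct vertices is at cyclic distance at most $2$, so $\mathtt{G}_N=\mathtt{K}_n$ and $\alpha=1$, giving $c_\circ(N)=0$. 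This agrees with the earlier observation that $\mathtt{S}_1=\mathtt{C}_5$ yields $\mathtt{K}_5$.

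\emph{Step 2: upper bound $\alpha(\mathtt{C}_n^2)\le\lfloor n/3\rfloor$.} Take an independent set $I=\{x_1<\dots<x_k\}$ with $k\ge 2$, listed in cyclic order. Independence means consecutive elements are at cyclic distance at least $3$, so each of the $k$ cyclic gaps $x_{i+1}-x_i$, including the wrap-around gap $n+x_1-x_k$, is at least $3$. The gaps sum to $n$, hence $3k\le n$ and $k\le\lfloor n/3\rfloor$.

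\emph{Step 3: matching construction.} For $n\ge 6$ put $k=\lfloor n/3\rfloor\ge 2$ and take $I=\{1,4,7,\dots,3k-2\}$. Consecutive elements differ by $3$. The wrap-around gap is $n-3(k-1)\ge 3$ because $3k\le n$. Hence every pair in $I$ is at cyclic distance at least $3$, so $I$ is independent and $\alpha(\mathtt{G}_N)=\lfloor n/3\rfloor$. This gives $c_\circ(N)=\log_2\lfloor n/3\rfloor$ for $n\ge 6$, and $0$ for $n\le 5$.

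The only delicate point is the cyclic wrap-around, which appears in two places. In Step 2, the counting must include the gap between the last and first elements. In Step 3, one must verify that this wrap-around gap stays at least $3$ for every residue of $n$ mod $3$; this is where $k=\lfloor n/3\rfloor$, rather than a larger $k$, is forced. Finally, note that the two cases of the statement agree with the single expression $\log_2\lfloor n/3\rfloor$ for $n\in\{3,4,5\}$.
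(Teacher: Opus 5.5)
Your proof is correct. It follows the paper's overall plan: Proposition~\ref{prop1} gives $\mathtt{G}_N=\mathtt{C}_n^2$, the case $n\le 5$ collapses to $\mathtt{K}_n$, and the lower bound uses the same set $\{1,4,\dots,3\lfloor n/3\rfloor-2\}$. Your upper bound, however, takes a genuinely different route, and it is the better one.

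The paper partitions $X$ into consecutive triples $B_k$, each a clique in $\mathtt{C}_n^2$, and concludes $\alpha\le\lfloor n/3\rfloor$. As written, this ignores the one or two leftover vertices when $3\nmid n$. Those vertices form an extra clique, so the partition only yields $\alpha\le\lceil n/3\rceil$. No clique-partition argument can repair this. For instance, $\mathtt{C}_7^2\cong\overline{\mathtt{C}_7}$ needs three cliques to cover its vertices, while $\alpha=2$.

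Your cyclic gap-counting is instead an averaging argument. Equivalently, the $n$ cyclic windows of three consecutive vertices are cliques, each vertex lies in exactly three of them, and so $3|I|\le n$. This uses the wrap-around structure and gives the tight bound for every residue of $n$ modulo $3$, which is exactly what the statement requires.

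You also explicitly verify that the wrap-around gap $n-3(k-1)\ge 3$ in the construction, so that all pairs (not just consecutive listed ones) are at cyclic distance at least $3$. The paper leaves this step implicit.
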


\begin{proof}
By Proposition~\ref{prop1}, the confusability graph of $N$ is $\mathtt{G}_N =\mathtt{C}_n^2$, where two vertices are adjacent if and only if their distance in $\mathtt{C}_n$ is at most $2$. Since $c_\circ(N) =\log_2 \alpha(\mathtt{G}_N)$, it suffices to compute $\alpha(\mathtt{C}_n^2)$.

\noindent{\bf Case 1} ($n \leq 5$). For $n \leq 5$, every pair of vertices is at distance at most $2$ in $\mathtt{C}_n$, so $\mathtt{C}_n^2 = K_n$, giving $\alpha(\mathtt{C}_n^2) = 1$ and hence $c_\circ(N) = 0$.

\noindent{\bf Case 2} ($n \geq 6$). A set $S \subseteq X$ is independent in $\mathtt{C}_n^2$ if and only if any two distinct vertices $x, x' \in S$ satisfy $|x - x'| \pmod{n} \geq 3$.

\noindent\textit{Upper bound.} Partition $X$ into consecutive blocks of size $3$:
\begin{equation*}
B_k = \{3k-2,\ 3k-1,\ 3k\}, \qquad k = 1, \ldots, \lfloor n/3 \rfloor,
\end{equation*}
together with at most $2$ remaining vertices if $3 \nmid n$. Within each block, every pair of vertices is at distance at most $2$ in $\mathtt{C}_n$, hence adjacent in $\mathtt{C}_n^2$. Any independent set therefore contains at most one vertex per block, giving $\alpha(\mathtt{C}_n^2) \leq \lfloor n/3 \rfloor$.

\noindent\textit{Lower bound.} The set $S = \{1,\ 4,\ 7,\ \ldots,\ 3\lfloor n/3 \rfloor - 2\}$ has $|S| = \lfloor n/3 \rfloor$ vertices. Any two consecutive elements of $S$ are at distance exactly $3$ in $\mathtt{C}_n$, hence non-adjacent in $\mathtt{C}_n^2$, so $S$ is an independent set and $\alpha(C_n^2) \geq \lfloor n/3 \rfloor$.

\noindent Combining both bounds gives $\alpha(C_n^2) = \lfloor n/3 \rfloor$, and hence $c_\circ(N) = \log_2 \left\lfloor \frac{n}{3} \right\rfloor$. This completes the proof.
\end{proof}

\noindent Theorem~\ref{theo2} establishes that
the superactivation gap can be made arbitrarily large for channels with support graph $\mathtt{C}_n$. However, it is worth noting that for $n \geq 6$, these channels already possess a nonzero zero-error capacity $c_\circ(N)=\log_2 \lfloor n/3 \rfloor > 0$, meaning they are not entirely useless from the perspective of the confusability graph framework. One may therefore ask whether the arbitrarily large superactivation gap can be achieved for channels satisfying both $c_\circ(N) = 0$ and $\mathrm{ci}_\circ(N) = 0$ simultaneously. Before addressing this question here we first recall a graph theoretic notion called diameter of a graph.

\begin{definition}[Distance and Diameter]\label{def:diameter}
Let $\mathtt{G} = (V, E)$ be a connected graph. The \emph{distance} between two vertices $u, v \in V$, denoted $d_{\mathtt{G}}(u, v)$, is the length of a shortest path between them, i.e.\ the minimum number of edges in any path from $u$ to $v$. The \emph{diameter} of $\mathtt{G}$ is the maximum distance over all pairs of vertices: $\mathrm{diam}(\mathtt{G}) :=\max_{u, v \in V} d_{\mathtt{G}}(u, v)$.
\end{definition}

\begin{proposition}\label{prop3}
For an SNFC channel $N : X \to X$ with support graph $\mathtt{S}_N$, the confusability graph satisfies $\mathtt{G}_N =\mathtt{K}_{|X|}$ if and only if $\mathrm{diam}(\mathtt{S}_N) \leq 2$.
\end{proposition}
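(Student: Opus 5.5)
The plan is to reduce the statement directly to Proposition~\ref{prop1}, which says that two inputs $x,x'$ are adjacent in $\mathtt{G}_N$ exactly when $(\mathbb{S}_N^2)_{xx'}\neq 0$, i.e.\ when they have a common neighbour $\tilde{x}$ in $\mathtt{S}_N$, where neighbourhoods include the self-loops. The first step is to reinterpret this combinatorially. Because every vertex carries a self-loop ($S_{xx}=1$), a common neighbour $\tilde{x}$ of $x$ and $x'$ exists if and only if one of the following holds:
\begin{itemize}[itemsep=-0.15cm, parsep=0.15cm, topsep=2pt, leftmargin=0.8cm]
\item[(a)] $x=x'$;
\item[(b)] $x\sim x'$ in $\mathtt{S}_N$, taking $\tilde{x}=x$ or $\tilde{x}=x'$;
\item[(c)] there is a path $x-\tilde{x}-x'$ of length two.
\end{itemize}
Hence, for $x\neq x'$, we get $G_{xx'}=1$ if and only if $d_{\mathtt{S}_N}(x,x')\le 2$, with the distance computed in the loopless simple graph underlying $\mathtt{S}_N$. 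I will state this as the key intermediate equivalence.

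For the forward direction, assume $\mathtt{G}_N=\mathtt{K}_{|X|}$. Every pair of distinct $x,x'$ is then adjacent in $\mathtt{G}_N$, so by the equivalence $d_{\mathtt{S}_N}(x,x')\le 2$ for every pair. In particular $\mathtt{S}_N$ is connected, so the diameter in Definition~\ref{def:diameter} is well defined, and $\mathrm{diam}(\mathtt{S}_N)\le 2$.

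For the converse, assume $\mathrm{diam}(\mathtt{S}_N)\le 2$. Then every pair of distinct vertices is at distance $1$ or $2$. Distance $1$ gives a common neighbour via a self-loop, and distance $2$ gives the middle vertex of a shortest path. Either way $(\mathbb{S}_N^2)_{xx'}>0$, so all off-diagonal entries of $\mathbb{G}_N$ are nonzero, and $\mathtt{G}_N$ is complete (up to the self-loops noted at the end of Proposition~\ref{prop1}).

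I expect no serious obstacle. The one point needing care is bookkeeping: self-loops must be counted when forming $\mathbb{S}_N^2$, so that distance-$1$ pairs are captured, but excluded when computing distances and the diameter. I will also note the connectedness issue explicitly: a disconnected $\mathtt{S}_N$ has infinite diameter and an incomplete $\mathtt{G}_N$, so both sides of the statement fail together and the equivalence still holds.
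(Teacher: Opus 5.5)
Your proof is correct and takes the paper's route: reduce to Proposition~\ref{prop1}, use the self-loops to see that distinct $x,x'$ have a common neighbour exactly when $d_{\mathtt{S}_N}(x,x')\le 2$, then quantify over all pairs. Your explicit handling of the distance-$1$ case (via the loop at $x$ or $x'$) and of connectedness (the paper defines diameter only for connected graphs) is slightly more careful than the paper's brief argument, but the approach is the same.
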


\begin{proof}
By Proposition~\ref{prop1}, two inputs $x, x' \in X$ are confusable if and only if $\mathcal{N}[x] \cap \mathcal{N}[x'] \neq \emptyset$\footnote{For a graph $\mathtt{G} = (V, E)$ and a vertex $v \in V$, the open neighbourhood $\mathcal{N}(v):= \{u \in V \mid \{u, v\} \in E\}$ excludes $v$ itself, while the closed neighbourhood $\mathcal{N}[v]:= \mathcal{N}(v) \cup \{v\}$ includes it. However, since every vertex of the support graph $\mathtt{S}_N$ of an SNFC channel carries a self-loop by the condition $P(x \mid x) > 0$, the vertex $x$ is adjacent to itself, and thus $x \in \mathcal{N}(x)$. Consequently, $\mathcal{N}[x] = \mathcal{N}(x) \cup \{x\} = \mathcal{N}(x)$ for all $x \in X$ in our setting. We retain the notation $\mathcal{N}[x]$ throughout to emphasis the role of self-loops and to remain consistent with the standard graph theoretic convention.}. Since $N$ is SNFC, every vertex carries a self-loop, so $x \in \mathcal{N}[x]$ for all $x$. Therefore $\mathcal{N}[x] \cap \mathcal{N}[x'] \neq \emptyset$ if and only if there exists $\tilde{x} \in X$ with $\tilde{x} \in \mathcal{N}(x)$ and $\tilde{x} \in \mathcal{N}(x')$, i.e.\ $d(x,x') \leq 2$ in $\mathtt{S}_N$. Since this must hold for all pairs $x \neq x'$, the result follows.
\end{proof}

\begin{figure}[t]
\centering
\includegraphics[width=0.5\textwidth]{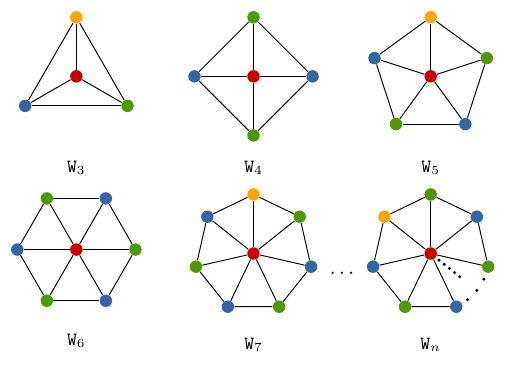}
\caption{Wheel Graph ($\mathtt{W}_n$). For even $n$ chromatic number $\chi(\mathtt{W}_n)=3$, whereas $\chi(\mathtt{W}_n)=4$ for odd $n$. $\mathrm{diam}(\mathtt{W}_3)=1$ and for $n>3$ $\mathrm{diam}(\mathtt{W}_n)=2$.}
\label{fig5}
\end{figure}

\begin{theorem}\label{theo3}
For an SNFC channel $N : X \to X$ with $|X| = n + 1$ and support graph $\mathtt{S}_N = \mathtt{W}_n$, the following hold:
\begin{itemize}[itemsep=-0.15cm, parsep=0.15cm, topsep=2pt, leftmargin=0.8cm]
\item[\emph{(i)}] $c_\circ(N) = 0$;
\item[\emph{(ii)}] $\mathrm{ci}_\circ(N) = 0$;
\item[\emph{(iii)}] $\mathrm{ci}_\circ(N \otimes \mathrm{id}^c_\beta) = n + 1$, where $\beta = \chi(\mathtt{W}_n) =3$  if $n$ is even and $\beta = 4$  if $n$ is odd;
\item[\emph{(iv)}] The superactivation gap $\Delta(N)=n+1-\beta$, which grows arbitrarily large as $n \to \infty$.
\end{itemize}
\end{theorem}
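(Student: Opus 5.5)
Label the hub vertex of $\mathtt{W}_n$ as $0$ and the rim vertices as $1,\dots,n$, so that the rim induces the cycle $\mathtt{C}_n$ and $0$ is adjacent to every rim vertex. All four claims are then read off the support graph, using the results already established for the general SNFC setting.

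\emph{Item (i).} For $n\ge 3$, every rim pair is at distance at most $2$ through the hub, and the hub is at distance $1$ from each rim vertex. Hence $\mathrm{diam}(\mathtt{W}_n)\le 2$. Proposition~\ref{prop3} then gives $\mathtt{G}_N=\mathtt{K}_{n+1}$, so $\alpha(\mathtt{G}_N)=1$ and $c_\circ(N)=0$.

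\emph{Item (ii).} We mirror the argument of Theorem~\ref{theo2}(i) and show that every output has an input-domain of size at least $2$. Then no output singles out an input, and $\mathrm{ci}_\circ(N)=0$.
\begin{itemize}
\item For a rim output $y$, the domain is $\Omega_y=\{y-1,y,y+1,0\}$, which has size $4$ (or $3$ when $n=3$).
\item For the hub output, $\Omega_0=X$.
\end{itemize}

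\emph{Item (iii).} We invoke the sufficiency part of Theorem~\ref{theo1}: it is enough to exhibit a proper $\beta$-colouring of $\mathtt{W}_n$ in which every input has a private output inside its colour class.
\begin{itemize}
\item Give the hub its own colour.
\item Colour the rim properly with $\chi(\mathtt{C}_n)$ colours: $2$ for even $n$ and $3$ for odd $n$, using the colourings from the proof of Theorem~\ref{theo2}(ii).
\item This gives $\beta=3$ for even $n$ and $\beta=4$ for odd $n$.
\end{itemize}
Privacy is checked class by class.
\begin{itemize}
\item The hub's class is $\{0\}$, so any output of the hub identifies it.
\item Take a rim vertex $x$ in its class. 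It is not adjacent to any other member of the class, so the self-loop output $x$ lies in $\Gamma_x$ and in no other $\Gamma_{x'}$ with $x'$ in the same class. The hub is not in that class, so its range $\Gamma_0=X$ causes no conflict.
\end{itemize}
Thus every input is conclusively identifiable and $\mathrm{ci}_\circ(N\otimes\mathrm{id}^c_\beta)=n+1$.

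Minimality of $\beta$ comes from two facts.
\begin{itemize}
\item The hub is adjacent to all rim vertices, so $\chi(\mathtt{W}_n)=1+\chi(\mathtt{C}_n)$.
\item The values $\chi(\mathtt{C}_n)=3$ for odd $n$ (an odd cycle) and $2$ for even $n$ are standard.
\end{itemize}
By the necessity part of Theorem~\ref{theo1}, no smaller classical assistance suffices.

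\emph{Item (iv).} Since $\mathrm{ci}_\circ(\mathrm{id}^c_\beta)=\beta$, the gap is $\Delta(N)=n+1-\beta$. This equals $n-2$ for even $n$ and $n-3$ for odd $n$, which is unbounded as $n\to\infty$.

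\emph{Main obstacle.} Nothing here is deep. The only delicate point is the small case $n=3$, where $\mathtt{W}_3=\mathtt{K}_4$. There the diameter is $1$, the formula gives $\beta=4$, and the colouring degenerates to one colour per vertex, so (iii) is trivial. We should also verify that the hub's full output range never spoils privacy for rim vertices, which holds exactly because the hub always sits in a singleton colour class.
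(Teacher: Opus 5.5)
Your proof is correct and follows the paper's route: $\mathrm{diam}(\mathtt{W}_n)\le 2$ plus Proposition~\ref{prop3} for (i), $|\Omega_y|\ge 2$ for every output for (ii), the hub in a singleton colour class together with a proper colouring of the rim for (iii), and arithmetic for (iv). Your privacy check is cleaner than the paper's: a rim vertex has no neighbour in its own class, so its self-loop output lies in no other range of that class, whereas the paper asserts that same-class rim ranges are disjoint even though every rim range contains the hub. One caution: the odd-$n$ colouring $x \bmod 3$ that you borrow from Theorem~\ref{theo2}, which the paper also uses, is not proper when $n\equiv 1 \pmod 3$. For example, with $n=7$ the adjacent rim vertices $1$ and $7$ get the same colour, so instead alternate two colours on $1,\dots,n-1$ and give vertex $n$ a third colour. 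Also, for $n=3$ every $\Omega_y$ equals $X$ and has size $4$, not $3$.
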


\begin{proof}
(i) The wheel graph $\mathtt{W}_n$ has $n+1$ vertices -- hub $h$ and rim $\{1, 2, \ldots, n\}$. Since $\mathrm{diam}(\mathtt{W}_n)\le2$ (see Fig.~\ref{fig5}), according to Proposition~\ref{prop3} we have confusability graph $\mathtt{G}_N=\mathtt{K}_{n+1}$, implying $c_\circ(N)=0$.
 
\noindent (ii) The output-ranges are given by
\begin{align*}
\Gamma_h&= \{h, 1, \ldots, n\}\qquad \text{and}\\
\Gamma_i&= \{h, i-1, i, i+1\} \pmod{n},   
\end{align*}
for each rim vertex $i$. For every output $y \in X$:
\begin{itemize}[itemsep=-0.15cm, parsep=0.15cm, topsep=2pt, leftmargin=0.4cm]
\item Output $h$: $h \in \Gamma_x$ for all $x \in X$ since $h$ is a self-loop of the hub and a neighbour of every rim vertex, so $\Omega_h = X$;
\item Output $i$ (rim): $i \in \Gamma_h$, $\Gamma_{i-1}$, $\Gamma_i$, $\Gamma_{i+1}$, so $|\Omega_i| \geq 4 > 1$.
\end{itemize}
Since no output exclusively identifies any single input, $\mathrm{ci}_\circ(N) = 0$.

\noindent (iii) By Theorem~\ref{theo1}, it suffices to exhibit a proper $\beta$-coloring of $\mathtt{W}_n$ such that within each 
color class every input has a private output.

\noindent{\bf Case 1} ($n$ even, $\beta = 3$). Color the rim vertices by parity:
\begin{equation*}
\mathcal{P}(i) = i\pmod{2},\quad \mathcal{P}(i) \in \{1, 2\}
\end{equation*}
and assign the hub a third color $\mathcal{P}(h) = 3$. Since $n$ is even, this is a proper $2$-coloring of the rim $\mathtt{C}_n$, and the hub receives a distinct third color. The color classes are $A = \{1, 3, \ldots, n-1\}$, $B = \{2, 4, \ldots, n\}$, and $C = \{h\}$. Within class $A$, consecutive vertices are at distance $2$ in $\mathtt{C}_n$, so their output-ranges $\Gamma_i = \{h, i-1, i, i+1\}$ and $\Gamma_{i+2} = \{h, i+1, i+2, i+3\}$ share only the hub output $h$ and $i+1$. The self-loop output $i \in \Gamma_i$ satisfies $i \notin \Gamma_{i+2}$ for $i \neq i+2$, so $i$ is private to vertex $i$ within $A$. The same holds for class $B$ by symmetry. The hub $h$ is alone in class $C$, so it is trivially 
identifiable.

\noindent{\bf Case 2} ($n$ odd, $\beta = 4$). Color the rim with $3$ colors as 
\begin{align*}
\mathcal{P}(i) \equiv 
i \pmod{3},\qquad\mathcal{P}(i) \in \{1, 2, 3\}, 
\end{align*}
which is a proper $3$-coloring of the odd cycle $\mathtt{C}_n$, and assign $\mathcal{P}(h) = 4$. Within each rim color class, consecutive same-color vertices are at distance $3$ in $\mathtt{C}_n$, so their output-ranges are disjoint and every vertex has a private self-loop output. The hub is alone in color class $4$ and trivially identifiable. Hence all $n+1$ inputs are conclusively identifiable.\\

\noindent (iv) Since $\mathrm{ci}_\circ(\mathrm{id}^c_\beta) = \beta$, 
the superactivation gap is:
\begin{equation*}
\Delta(N)=n + 1 - \beta = 
\begin{cases} 
n - 2 & \text{if } n \text{ is even,} \\ 
n - 3 & \text{if } n \text{ is odd,}
\end{cases}
\end{equation*}
which grows without bound as $n \to \infty$. This completes the proof. 
\end{proof}

\noindent Examples of other families of channels exhibiting such an unbounded advantage includes channels with their support graph as the star graphs $\mathtt{S}_n$, friendship graphs $\mathtt{F}_n$, and Turán graphs $\mathtt{T}(n,r)$ for fixed $r$ {(see Appendix {\bf A2})}

\section{Quantum Assistance in Conclusive Identification}\label{sec4}

A natural question is whether quantum resources can provide an advantage over classical assistance in the conclusive identification task. By Theorem~\ref{theo1}, an SNFC channel $N : X \to X$ with $|X| = n$ and $\mathrm{ci}_\circ(N) = 0$ requires a noiseless classical channel $\mathrm{id}^c_\beta$ with $\beta = \chi(\mathtt{S}_N)$ to conclusively identify all inputs in $X$. We are thus interested in the following question. 

\begin{question}\label{qu3}
Does there exist an SNFC channel $N : X \to X$ with $|X| = n$ and $\mathrm{ci}_\circ(N) = 0$, such that $\mathrm{ci}_\circ(N \otimes \mathrm{id}^q_{\beta})= n$ even with $\beta<\chi(\mathtt{S}_N)$?
\end{question}

\noindent \noindent In other words, can a noiseless quantum channel $\mathrm{id}^q_\beta$ of dimension strictly smaller than the chromatic number $\chi(\mathtt{S}_N)$ accomplish the conclusive identification task that requires a classical channel $\mathrm{id}^c_{\chi(\mathtt{S}_N)}$ by Theorem~\ref{theo1}? An affirmative answer  would demonstrate a strict quantum advantage in conclusive identification task.

\noindent Given access to a noiseless quantum channel $\mathrm{id}^q_d : \mathcal{D}(\mathbb{C}^d) \to \mathcal{D}(\mathbb{C}^d)$, assisting a SNFC channel $N : X \to X$, the sender, upon receiving an input $x \in X$, encodes it into a quantum state $\ket{v_x} \in \mathbb{C}^d$ and transmits this state through $\mathrm{id}^q_d$, while simultaneously sending $x$ through the channel $N$. Upon receiving the classical output $y \in X$ from $N$ together with the quantum system, the receiver performs a measurement that may depend on $y$, with the objective of conclusively identifying the input $x$. A natural question is: what is the minimum dimension $d$ of the assisting quantum channel that suffices for this task? To address this we recall the definition of orthogonal representation of a graph. 

\begin{definition}[Orthogonal Representation \& Rank]
Let $\mathtt{G} = (V,E)$ be a finite simple graph. An orthogonal representation of $G$ in  $\mathbb{C}^d$ is a map $\phi : V \to \mathbb{C}^d \setminus \{0\}, \quad v \mapsto \ket{v}$ such that $\langle u \mid v \rangle = 0~ \text{whenever } (u,v) \in E$. The orthogonal rank of $\mathtt{G}$, denoted by $\xi(\mathtt{G})$, is defined as $$\xi(\mathtt{G}) := \min \left\{ d \in \mathbb{N}~:~ \exists \text{ an orthogonal representation of } \mathtt{G} \text{ in } \mathbb{C}^d \right\}.$$
\end{definition}

\noindent With this, we are now ready to present our first result in the presence of quantum assistance.

\begin{theorem}\label{theo4}
Let $N : X \to X$ be an SNFC channel with $|X| = n$, satisfying $c_\circ(N)=0$ and $\mathrm{ci}_\circ(N)=0$, and let $\mathtt{S}_N$ denote its support graph. Then a noiseless quantum channel $\mathrm{id}^q_{\beta}$ with $\beta = \xi(\mathtt{S}_N)$ is sufficient to achieve $\mathrm{ci}_\circ\!\left(N \otimes \mathrm{id}^q_{\beta}\right) = n$. 
\end{theorem}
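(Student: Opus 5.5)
Fix an orthogonal representation $\phi: x \mapsto \ket{v_x}$ of the simple graph underlying $\mathtt{S}_N$ (self-loops deleted) in $\mathbb{C}^\beta$, with $\beta=\xi(\mathtt{S}_N)$. Normalize each vector, so that $\langle v_x|v_{x'}\rangle=0$ whenever $x\neq x'$ are adjacent in $\mathtt{S}_N$. The protocol mirrors the sufficiency part of Theorem~\ref{theo1}, with color classes replaced by orthogonality. On input $x$, Alice sends $x$ through $N$ and the state $\ket{v_x}$ through $\mathrm{id}^q_\beta$.

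Bob's decoding uses the classical output first. On receiving $y$, he knows the input lies in the input-domain $\Omega_y$, which by the SNFC symmetry equals $\Gamma_y=\mathcal{N}[y]$, the closed neighbourhood of $y$ in $\mathtt{S}_N$. The key observation is that the candidates in $\Omega_y$ are not in general pairwise adjacent, so their states need not be mutually orthogonal. Bob therefore cannot simply perform a projective measurement in an orthonormal basis of candidates. Two routes handle this.

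\emph{Route 1 (preferred).} Use the output $y$ itself. Since $y\in\Omega_y$ (self-loop) and $y$ is adjacent to every other $x\in\Omega_y$, we have $\langle v_y|v_x\rangle=0$ for all $x\in\Omega_y\setminus\{y\}$. Bob measures $\{\ket{v_y}\!\bra{v_y},\,\mathbb{1}-\ket{v_y}\!\bra{v_y}\}$.
\begin{itemize}
\item If the first outcome occurs, the input cannot be any $x\neq y$ in $\Omega_y$, since those states lie in the kernel. Bob declares $x=y$ with zero error.
\item Otherwise he answers inconclusively.
\end{itemize}
When the true input is $x$ and the channel outputs $y=x$, which happens with probability $P(x|x)>0$, the projector succeeds with certainty. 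Thus every input is conclusively identified with nonzero probability and false identification has probability zero. This is exactly the criterion used for classical assistance in Theorem~\ref{theo1}: each input has an output and a test on which it alone survives. Hence $\mathrm{ci}_\circ(N\otimes\mathrm{id}^q_\beta)=n$.

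\emph{Route 2 (fallback).} If one prefers to allow identification from outputs $y\neq x$, unambiguous discrimination applies. An input $x\in\Omega_y$ can be unambiguously identified from $\{\ket{v_{x'}}\}_{x'\in\Omega_y}$ if and only if $\ket{v_x}\notin\mathrm{span}\{\ket{v_{x'}}:x'\in\Omega_y,\,x'\neq x\}$. This condition is not guaranteed in general, which is why Route 1 is the one I would commit to.

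The main obstacle is conceptual rather than computational. One must argue that the definition of $\mathrm{ci}_\circ$ for the assisted channel counts an input as identifiable when some outcome identifies it without error, consistent with the entries marked $\times$ in Table~\ref{tab1}. One should also note that the receiver's measurement may depend on $y$, as stipulated in the setup preceding the definition of orthogonal rank. Once this is accepted, the orthogonality of $\ket{v_y}$ to all other members of $\Omega_y=\mathcal{N}[y]$ makes the argument immediate. The hypotheses $c_\circ(N)=0$ and $\mathrm{ci}_\circ(N)=0$ play no role beyond setting the context.
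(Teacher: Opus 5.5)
Your Route 1 is exactly the paper's proof: encode $x$ as $\ket{v_x}$ from an orthogonal representation of $\mathtt{S}_N$ in $\mathbb{C}^{\beta}$, and on classical output $y$ measure $\{\ket{v_y}\!\bra{v_y},\, I-\ket{v_y}\!\bra{v_y}\}$. The first outcome occurs with certainty when $y=x$, and with probability zero for every other input that can produce $y$, since each such input is adjacent to $y$ in $\mathtt{S}_N$. Removing the self-loops before taking the orthogonal representation, which the paper leaves implicit, is a sensible clarification, and Route 2 is not needed.
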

\begin{proof}
Let $\{ \ket{v_x} \}_{x \in X}$ be an orthogonal representation of $\mathtt{S}_N$ in $\mathbb{C}^{\beta}$, where $\beta = \xi(\mathtt{S}_N)$. The protocol is defined as follows.

\noindent \emph{Encoding:} Upon input $x \in X$, the sender prepares the state $\ket{v_x}$ and transmits it through the noiseless quantum channel $\mathrm{id}^q_{\beta}$, while simultaneously sending $x$ through the classical channel $N$.

\noindent \emph{Decoding:} Upon receiving the classical output $y \in X$, the receiver performs the two-outcome projective measurement
\begin{align*}
\left\{ \ket{v_y}\!\bra{v_y},\; I - \ket{v_y}\!\bra{v_y} \right\}   
\end{align*}
on the received quantum state, and declares YES if the first outcome occurs and NO otherwise.

\noindent\emph{Correctness:} If $y = x$, then the received state is $\ket{v_x} = \ket{v_y}$, and hence the outcome corresponding to $\ket{v_y}\!\bra{v_y}$ occurs with probability $1$, leading to a correct YES output. If $y \neq x$ and $P(y|x) > 0$, then $(x,y) \in E(\mathtt{S}_N)$ by definition of the support graph. By the orthogonality of the representation, $\langle v_y \mid v_x \rangle = 0$, and therefore the outcome corresponding to $\ket{v_y}\!\bra{v_y}$ occurs with probability $|\langle v_y \mid v_x \rangle|^2 = 0$. Hence the receiver outputs NO with certainty. Thus, the protocol achieves $\mathrm{ci}_\circ(N \otimes \mathrm{id}^q_{\beta}) = n$. By minimality of $\beta = \xi(\mathtt{S}_N)$, the claim follows.
\end{proof}

\noindent With Theorems~\ref{theo1} and~\ref{theo4} in hand, any SNFC channel $N$ with support graph $\mathtt{S}_N$ satisfying $\xi(\mathtt{S}_N) < \chi(\mathtt{S}_N)$ provides an affirmative answer to Question~\ref{qu3}. Explicit examples of such channels are given in the next subsection.  

\subsection{KS Advantage in Conclusive Identification}

\noindent Here we first digress a bit to recall fundamental result from quantum foundations -- the Kochen-Specker (KS) no-go theorem~\cite{Kochen1967}, which places constraints on the possibility of assigning definite values to quantum observables prior to measurement.\\

\noindent{Kochen-Specker No-Go Result}\vspace{-.25cm}\\

\begin{figure}[t!]
\centering
\includegraphics[width=0.5\textwidth]{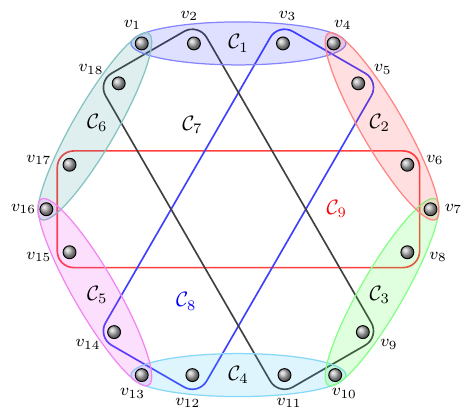}
\vspace{1em}
\renewcommand{\arraystretch}{1}
\begin{tabular}{c c c}
\hline
$\scriptstyle \mathcal{C}_1=\{v_1,v_2,v_3,v_4\}$ &
$\scriptstyle \mathcal{C}_2=\{v_4,v_5,v_6,v_7\}$ &
$\scriptstyle \mathcal{C}_3=\{v_7,v_8,v_9,v_{10}\}$ \\

$\scriptstyle \mathcal{C}_4=\{v_{10},v_{11},v_{12},v_{13}\}$ &
$\scriptstyle \mathcal{C}_5=\{v_{13},v_{14},v_{15},v_{16}\}$ &
$\scriptstyle \mathcal{C}_6=\{v_{16},v_{17},v_{18},v_{1}\}$\\

$\scriptstyle \mathcal{C}_7=\{v_{2},v_{9},v_{11},v_{18}\}$ &
$\scriptstyle \mathcal{C}_8=\{v_{3},v_{5},v_{12},v_{14}\}$ &
$\scriptstyle \mathcal{C}_9=\{v_{6},v_{8},v_{15},v_{17}\}$\\
\hline\hline
\end{tabular}
\caption{The hypergraph depicting the $18$-vector KS system in $\mathbb{C}^4$ \cite{Cabello1996}. Each vertex $v_i$ (shown in Fig.~\ref{fig7}) corresponds to a unit vector in $\mathbb{C}^4$. Each shaded region corresponds to one of the $9$ orthonormal bases $B_1, \ldots, B_9$, each corresponding to different hyper-edge (contexts) $\mathcal{C}_1, \ldots, \mathcal{C}_9$. The absence of a valid $\{0,1\}$-coloring, assigning exactly one $1$ per context, is guaranteed by the parity argument since $9$ is odd and each vector appears in exactly $2$ contexts, thereby constituting the KS proof of contextuality.}
\label{fig6}
\end{figure}

\noindent A central question in quantum foundations is whether quantum mechanics admits a `hidden variable' description \cite{Einstein1935,Bell1964,Bell1966,Mermin1993}. Can the outcomes of all possible measurements on a quantum system be predetermined by some underlying classical state, independent of the measurement context? A value assignment is a map $\mu: \mathcal{O} \to \{0, 1\}$ from a set of observables $\mathcal{O}$ to definite values $\{0,1\}$, satisfying:
\begin{itemize}[itemsep=-0.15cm, parsep=0.15cm, topsep=2pt, leftmargin=0.4cm]
\item \emph{Non-contextuality:} for every $\mathrm{P}\in\mathcal{O}$, $\mu(\mathrm{P})$ is independent of which other compatible observables are measured simultaneously;
\item \emph{Completeness:} for every orthonormal basis $\{e_1, \ldots, e_d\}$ of $\mathcal{H}$, exactly one projector $\ket{e_i}\bra{e_i}$ is assigned value $1$ and the rest are assigned $0$.
\end{itemize}
Such a map is called a \emph{non-contextual hidden variable} (NCHV) assignment. The Kochen-Specker (KS) theorem establishes that no such assignment exists for Hilbert spaces of dimension $d \geq 3$ \cite{Kochen1967} (see also \cite{Budroni2022}). The obstruction can be captured combinatorially through the notion of a \emph{Kochen-Specker system}.

\begin{definition}[KS System]\label{def11}
A KS system in dimension $d$ is a pair $(\mathcal{V}, \mathcal{B})$, where $\mathcal{V} = \{v_1, \ldots, v_m\}$ is a finite set of unit vectors in $\mathbb{C}^d$ and $\mathcal{B} = \{B_1, \ldots, B_k\}$ is a collection of orthonormal bases (called \emph{contexts}), with $\mathcal{V} = \bigcup_j B_j$, such that no $\{0,1\}$-coloring of $\mathcal{V}$ satisfies: (i) Exclusivity (at most one vector in each context $B_j$ is colored $1$) and (ii) Completeness (exactly one vector in each context $B_j$ is colored $1$).
\end{definition}

\noindent The non-existence of such a coloring is precisely what makes the KS system a proof of contextuality. The smallest known KS system in dimension $d = 3$ consists of $31$ vectors~\cite{Peres1995} (see \cite{Kernaghan2026}), while in dimension $d = 4$ a system of as few as $18$ vectors is known~\cite{Cabello1996} (see Fig.~\ref{fig6}).

\begin{definition}[KS Graph]\label{def12}
The \emph{orthogonality graph} $\mathtt{KS}_{|\mathcal{V}|}$ of a KS system $(\mathcal{V}, \mathcal{B})$ is an undirected graph with vertex set $\mathcal{V}$, where two vertices $v_i$ and $v_j$ are adjacent if and only if $v_i \perp v_j$. A graph $\mathtt{KS}^d$ is called a KS graph in dimension $d$ if it is the orthogonality graph of some KS system in $\mathbb{C}^d$.
\end{definition}

\noindent For instance, the KS graph $\mathtt{KS}^4_{18}$ corresponding to the $18$-vector KS system in $\mathbb{C}^4$~\cite{Cabello1996} is depicted in Fig.~\ref{fig7}. With this we now provide an affirmative answer to Question~\ref{qu3} through an explicit construction.

\begin{theorem}\label{theo5}
There exists an SNFC channel $N : X \to X$ with $c_\circ(N) = 0$, $\mathrm{ci}_\circ(N) = 0$, and $\chi(\mathtt{S}_N) < |X|$, such that $\mathrm{ci}_\circ(N \otimes \mathrm{id}^q_d) = |X|$ for some $d < \chi(\mathtt{S}_N)$, demonstrating a strict quantum advantage over classical assistance in the conclusive identification task.
\end{theorem}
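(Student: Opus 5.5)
The plan is to take the support graph to be the Kochen--Specker orthogonality graph $\mathtt{KS}^4_{18}$ of the $18$-vector Cabello system in $\mathbb{C}^4$ (Fig.~\ref{fig6}). Concretely, $X=\{v_1,\ldots,v_{18}\}$, and $N$ is any SNFC channel with $P(x|x)>0$ and $P(x'|x)>0$ exactly when $x\perp x'$; for instance, spread the weight uniformly over the closed neighbourhood. Then I will verify five properties:
\begin{itemize}
\item[(a)] $\mathrm{ci}_\circ(N)=0$;
\item[(b)] $c_\circ(N)=0$;
\item[(c)] $\xi(\mathtt{S}_N)\le 4$;
\item[(d)] $\chi(\mathtt{S}_N)\ge 5$;
\item[(e)] $\chi(\mathtt{S}_N)<18$.
\end{itemize}
Theorem~\ref{theo4} with $d=4$ then gives $\mathrm{ci}_\circ(N\otimes \mathrm{id}^q_4)=18=|X|$. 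Since $4<5\le\chi(\mathtt{S}_N)$, Theorem~\ref{theo1} shows that classical assistance of size $4$ cannot achieve this, which is the claimed strict advantage.

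The easy items come first. For (c), the $18$ vectors themselves form an orthogonal representation in $\mathbb{C}^4$, so $\xi\le 4$. For (a), every vertex lies in two contexts and therefore has neighbours. Hence each output $y$ has input-domain $\Omega_y\supseteq\{y\}\cup\mathcal{N}(y)$ of size at least $2$, so no output singles out an input; this is the same argument as in Theorem~\ref{theo2}(i). For (e), the graph is far from complete: each vertex is orthogonal only to the $6$ other members of its two contexts, plus possibly a few accidental orthogonalities. A greedy colouring therefore uses at most $\Delta+1$ colours, which is well below $18$. Alternatively, I would exhibit an explicit colouring with about $6$ colours, read off from the contexts.

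The key step is (d), which is the Kochen--Specker argument recast as colouring. Suppose there were a proper $4$-colouring of $\mathtt{S}_N$. Each context $\mathcal{C}_j$ is an orthonormal basis, so it spans a $4$-clique and receives all four colours exactly once. Choose any one colour class and assign value $1$ to its vertices and $0$ to all others. This assignment puts exactly one $1$ in every context, which is precisely the forbidden $\{0,1\}$-colouring of Definition~\ref{def11}. The parity argument confirms the contradiction directly: $9$ contexts each contain exactly one $1$, yet every vector lies in exactly two contexts, so the number of $1$'s counted over contexts would be even, not $9$. Since a $3$-colouring is ruled out trivially by the $4$-cliques, we conclude $\chi\ge5$.

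The main obstacle I expect is (b). Showing $c_\circ(N)=0$ requires $\mathtt{G}_N=\mathtt{K}_{18}$, which by Proposition~\ref{prop3} amounts to $\mathrm{diam}(\mathtt{KS}^4_{18})\le 2$. This has to be checked from the incidence structure: the contexts $\mathcal{C}_1,\ldots,\mathcal{C}_6$ form a cyclic chain, and $\mathcal{C}_7,\mathcal{C}_8,\mathcal{C}_9$ cross it. For each pair of vertices in disjoint contexts, I would look for a common neighbour, first through shared contexts and then through any extra orthogonalities among the explicit Cabello vectors. If some pair turns out to be at distance $3$, the fallback is to enlarge the channel rather than the graph. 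For example, one can pass to a product or union construction, or add vertices realised by extra vectors in $\mathbb{C}^4$ orthogonal to both endpoints of each far pair. Such vectors always exist in dimension $4$, and adding them keeps $\xi=4$, keeps $\chi\ge5$ since the KS subgraph remains, and forces diameter $2$.
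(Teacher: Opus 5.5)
Your route is the paper's: the same channel with $\mathtt{S}_N=\mathtt{KS}^4_{18}$, Proposition~\ref{prop3} for $c_\circ(N)=0$, $|\Omega_y|>1$ for $\mathrm{ci}_\circ(N)=0$, Theorem~\ref{theo4} with $d=4$, and Theorem~\ref{theo1} for the classical side. Your step (d) is more complete than the paper's proof, which only points to the $5$-colouring in Fig.~\ref{fig7}. That colouring shows $\chi\le 5$, but the strict advantage needs $\chi\ge 5$, and this is exactly what your parity argument supplies.

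The step you flag, (b), does hold, and the check you propose is the right one. Two vectors that share no context have a common neighbour through the context structure unless no context of one meets a context of the other. This fails for exactly nine pairs: $\{v_1,v_{10}\}$, $\{v_2,v_{15}\}$, $\{v_3,v_8\}$, $\{v_4,v_{13}\}$, $\{v_5,v_{18}\}$, $\{v_6,v_{11}\}$, $\{v_7,v_{16}\}$, $\{v_9,v_{14}\}$, $\{v_{12},v_{17}\}$. These are precisely the nine orthogonal pairs lying in no common context; for example, $v_2=(0,1,0,0)\perp v_{15}=(1,0,0,1)$. They raise every degree from $6$ to $7$ and the edge count from $54$ to $63$. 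Hence every non-adjacent pair has a common neighbour, and $\mathrm{diam}(\mathtt{KS}^4_{18})=2$.

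These extra orthogonalities are essential: in the context graph alone, $v_1$ and $v_{10}$ are at distance $3$. You should drop the fallback. Adding a vector orthogonal to a far pair gives that pair a common neighbour, but it does not control the distances from the new vertex to the remaining vertices. So it does not force diameter $2$.
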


\begin{proof}
Consider the SNFC channel $N : X \to X$ with $|X| = 18$ and support graph $\mathtt{S}_N = \mathtt{KS}^4_{18}$. Since $\mathrm{diam}(\mathtt{KS}^4_{18}) = 2$ (see Fig.~\ref{fig7}), Proposition~\ref{prop3} gives $\mathtt{G}_N = K_{18}$, and hence $c_\circ(N) = 0$. Since every output $y \in X$ satisfies $|\Omega_y| > 1$, no input is conclusively identifiable without assistance, giving $\mathrm{ci}_\circ(N) = 0$. By Theorem~\ref{theo1}, since $\chi(\mathtt{KS}^4_{18}) = 5$ (see Fig.~\ref{fig7}), the minimum classical assistance required for conclusive identification of all $18$ inputs is $\mathrm{id}^c_5$, i.e.~$\mathrm{ci}_\circ(N \otimes \mathrm{id}^c_5) = 18$. On the other hand, since orthogonal rank $\xi(\mathtt{KS}^4_{18}) = 4$, a $4$-dimensional noiseless quantum channel $\mathrm{id}^q_4$ suffices. Following the protocol of Theorem~\ref{theo4}, one obtains
\begin{equation*}
\mathrm{ci}_\circ(N \otimes \mathrm{id}^q_4) = 18, \qquad  4 = \xi(\mathtt{KS}^4_{18}) < 5 = \chi(\mathtt{KS}^4_{18}),
\end{equation*}
establishing a strict quantum advantage in the conclusive identification task.
\end{proof}

\begin{observation}\label{obs4}
Given the hypergraph of Fig.~\ref{fig6}, construct a graph $\tilde{\mathtt{KS}}^4_{18} := (\mathcal{V}, \mathcal{E})$ by replacing each hyperedge (context) with a clique: for each context $h_j = \{v_{i_1}, v_{i_2}, v_{i_3}, v_{i_4}\}$, all $6$ edges $\{v_{i_k} \sim v_{i_\ell} \mid k \neq \ell\}$ are added to $\mathcal{E}$. The resulting graph $\tilde{\mathtt{KS}}^4_{18}$ has the same $18$ vertices as $\mathtt{KS}^4_{18}$, but fewer edges: since each of the $9$ contexts contributes $6$ edges, $\tilde{\mathtt{KS}}^4_{18}$ has $9 \times 6 = 54$ edges, whereas $\mathtt{KS}^4_{18}$ has $63$ edges. Thus $\tilde{\mathtt{KS}}^4_{18}$ is a proper subgraph of $\mathtt{KS}^4_{18}$: $    \tilde{\mathtt{KS}}^4_{18} \subsetneq \mathtt{KS}^4_{18}$. Notably, the claim of Theorem~\ref{theo5} remains valid when the support graph of the SNFC channel $N$ is taken to be $\mathtt{S}_N = \tilde{\mathtt{KS}}^4_{18}$ rather than $\mathtt{KS}^4_{18}$.
\end{observation}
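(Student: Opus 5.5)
The claim in Observation~\ref{obs4} has two parts. First, $\tilde{\mathtt{KS}}^4_{18}$ is a proper subgraph of $\mathtt{KS}^4_{18}$. Second, every ingredient in the proof of Theorem~\ref{theo5} survives when $\mathtt{S}_N$ is replaced by $\tilde{\mathtt{KS}}^4_{18}$. I would handle them in that order.

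\textbf{Step 1: the subgraph claim.} Each context $\mathcal{C}_j$ is an orthonormal basis of $\mathbb{C}^4$. So every pair inside a context is orthogonal, and every clique edge lies in $\mathtt{KS}^4_{18}$.

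\textbf{Step 2: the edge count.} I would check that no pair of vectors shares two contexts. This follows from the parity structure in Fig.~\ref{fig6}: each vector lies in exactly two contexts, and two distinct bases meet in at most one listed vector, which can be read off the table. The $9\times 6=54$ clique edges are therefore distinct. The remaining $63-54=9$ orthogonalities of $\mathtt{KS}^4_{18}$ come from the explicit vectors of Fig.~\ref{fig7}, which makes the inclusion proper.

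\textbf{Step 3: re-running Theorem~\ref{theo5}.} Four checks are needed.

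\emph{(a) $c_\circ(N)=0$.} By Proposition~\ref{prop3} it suffices to show $\mathrm{diam}(\tilde{\mathtt{KS}}^4_{18})\le 2$. I would argue this as follows.
\begin{itemize}
\item Fix $v$; it lies in two contexts, which together contain $7$ vertices.
\item Each of the $6$ neighbours of $v$ lies in one further context.
\item The neighbourhood of $v$ is then the union of $2+6=8$ of the $9$ contexts, plus a little more.
\item Any remaining $u$ lies in two contexts, at least one of which contains a neighbour of $v$. This gives a path of length $2$.
\end{itemize}
I would verify the "at least one" part directly against the context table. It amounts to checking that the two contexts of $u$ cannot both avoid all neighbours of $v$. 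Since only one context can miss $N(v)$, this holds.

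\emph{(b) $\mathrm{ci}_\circ(N)=0$.} Every output $y$ has $|\Omega_y|=7>1$.

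\emph{(c) The quantum side.} The original $18$ vectors form an orthogonal representation of $\tilde{\mathtt{KS}}^4_{18}$, since removing edges keeps a representation valid. Hence $\xi\le 4$, and Theorem~\ref{theo4} gives $\mathrm{ci}_\circ(N\otimes\mathrm{id}^q_4)=18$.

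\emph{(d) The classical side.} We need $\chi(\tilde{\mathtt{KS}}^4_{18})\ge 5$. Since $\tilde{\mathtt{KS}}^4_{18}$ is a subgraph, $\chi$ can only drop, so this is the step I expect to be the main obstacle. I would argue by contradiction. Each context is a $4$-clique, so a proper $4$-coloring puts each colour exactly once in every context. Any one colour class would then be a $\{0,1\}$-assignment with exactly one $1$ per context. Such an assignment is impossible by the parity argument: summing over the $9$ contexts counts each vertex twice, giving $9=2|\text{class}|$. Hence $\chi\ge5$, and it is at most $5$ because $\chi(\mathtt{KS}^4_{18})=5$.

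Combining these with Theorem~\ref{theo1} yields $4=\xi<5=\chi$, the strict quantum advantage.
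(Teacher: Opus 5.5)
\textbf{The gap is in step (a), and it cannot be repaired: $\mathrm{diam}(\tilde{\mathtt{KS}}^4_{18})=3$.} Your count ``$2+6=8$ contexts'' assumes that the six further contexts of the neighbours of $v$ are distinct. They never are. Treat each context as a vertex and each vector as an edge joining its two contexts. From the table in Fig.~\ref{fig6}, this auxiliary graph is the edge-disjoint union of six triangles: $\{\mathcal{C}_1,\mathcal{C}_2,\mathcal{C}_8\}$, $\{\mathcal{C}_1,\mathcal{C}_6,\mathcal{C}_7\}$, $\{\mathcal{C}_2,\mathcal{C}_3,\mathcal{C}_9\}$, $\{\mathcal{C}_3,\mathcal{C}_4,\mathcal{C}_7\}$, $\{\mathcal{C}_4,\mathcal{C}_5,\mathcal{C}_8\}$, $\{\mathcal{C}_5,\mathcal{C}_6,\mathcal{C}_9\}$. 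So for every $v$, the third context of its triangle is reached by two different neighbours. The closed neighbourhood of $v$ therefore meets only $7$ contexts, and the two missed contexts share a vector. Concretely, the neighbours $v_1,v_2,v_3,v_5,v_6,v_7$ of $v_4$ have further contexts $\mathcal{C}_6,\mathcal{C}_7,\mathcal{C}_8,\mathcal{C}_8,\mathcal{C}_9,\mathcal{C}_3$. Both $\mathcal{C}_4$ and $\mathcal{C}_5$ are missed, so $v_{13}\in\mathcal{C}_4\cap\mathcal{C}_5$ has no common neighbour with $v_4$. Your claim that ``only one context can miss $N(v)$'' is exactly what fails, and the check against the table that you deferred does not go through.

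\textbf{Consequence for the statement.} For any SNFC channel with $\mathtt{S}_N=\tilde{\mathtt{KS}}^4_{18}$, the output ranges $\Gamma_{v_4}=\mathcal{C}_1\cup\mathcal{C}_2$ and $\Gamma_{v_{13}}=\mathcal{C}_4\cup\mathcal{C}_5$ are disjoint. So $\{v_4,v_{13}\}$ is a zero-error code. In fact $\mathtt{G}_N$ is $\mathtt{K}_{18}$ minus a perfect matching, and $c_\circ(N)=1$, not $0$. The nine matched pairs are $(v_1,v_{10})$, $(v_2,v_{15})$, $(v_3,v_8)$, $(v_4,v_{13})$, $(v_5,v_{18})$, $(v_6,v_{11})$, $(v_7,v_{16})$, $(v_9,v_{14})$, $(v_{12},v_{17})$. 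These are precisely the nine orthogonalities discarded in passing from $\mathtt{KS}^4_{18}$ to $\tilde{\mathtt{KS}}^4_{18}$, and they are what give the full graph diameter $2$ in Proposition~\ref{prop3}. Hence the $c_\circ(N)=0$ clause of Theorem~\ref{theo5} does not transfer, and the observation holds only in a weaker form. The paper gives no argument beyond the edge count and the reused $5$-colouring, so the defect lies in the statement itself, not in your method.

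\textbf{What your other steps do establish.} Steps 1, 2, (b), (c) and (d) are correct and prove exactly the weaker form: $\mathrm{ci}_\circ(N)=0$, $\xi(\tilde{\mathtt{KS}}^4_{18})\le 4$ via the inherited representation, and $\chi(\tilde{\mathtt{KS}}^4_{18})=5$. Your parity argument in (d) is the right way to obtain $\chi\ge 5$. The paper's remark about reusing the $5$-colouring only gives $\chi\le 5$. Finally, Theorem~\ref{theo1} is stated under the hypothesis $c_\circ(N)=0$, which now fails. Its proof does not use that hypothesis, but you should say so explicitly before invoking it here.
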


\noindent The observation above reveals an important subtlety: the quantum advantage established in Theorem~\ref{theo5} is not a consequence of the full edge structure of $\mathtt{KS}^4_{18}$, but rather of its underlying \emph{context structure} encoded in the hypergraph of Fig.~\ref{fig6}. Indeed, $\tilde{\mathtt{KS}}^4_{18}$ is a strictly sparser graph than $\mathtt{KS}^4_{18}$ -- having $9$ fewer edges -- yet admits the same classical $5$-coloring. This suggests that the context structure of a KS system, rather than the precise orthogonality graph, is the fundamental ingredient driving quantum advantage in the conclusive identification task.

\begin{figure}[t]
\centering
\includegraphics[width=0.55\textwidth]{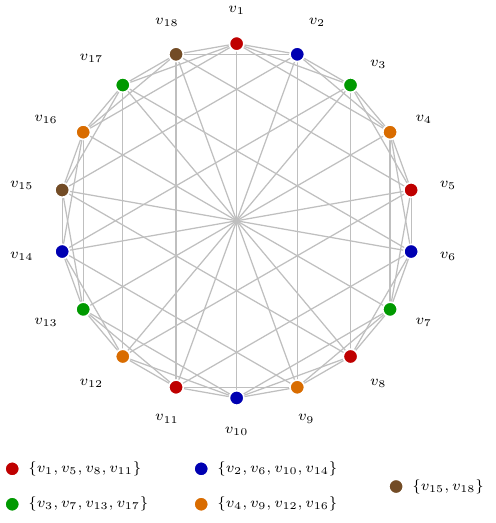}
\renewcommand{\arraystretch}{.9}
\begin{tabular}{c c c c}
\hline
$\scriptstyle v_1=(1,0,0,0)$ &
$\scriptstyle v_2=(0,1,0,0)$ &
$\scriptstyle v_3=(0,0,1,1)$ &
$\scriptstyle v_4=(0,0,1,-1)$ \\

$\scriptstyle v_5=(1,-1,0,0)$ &
$\scriptstyle v_6=(1,1,-1,-1)$ &
$\scriptstyle v_7=(1,1,1,1)$ &
$\scriptstyle v_8=(1,-1,1,-1)$ \\

$\scriptstyle v_9=(1,0,-1,0)$ &
$\scriptstyle v_{10}=(0,1,0,-1)$ &
$\scriptstyle v_{11}=(1,0,1,0)$ &
$\scriptstyle v_{12}=(1,1,-1,1)$ \\

$\scriptstyle v_{13}=(-1,1,1,1)$ &
$\scriptstyle v_{14}=(1,1,1,-1)$ &
$\scriptstyle v_{15}=(1,0,0,1)$ &
$\scriptstyle v_{16}=(0,1,-1,0)$ \\

\multicolumn{4}{c}{$\scriptstyle v_{17}=(0,1,1,0)\hspace{1em} v_{18}=(0,0,0,1)$} \\
\hline\hline
\end{tabular}
\caption{The $\mathtt{KS}^4_{18}$ graph. Each vertex $v_i:= a_i\ket{0} + b_i\ket{1} + c_i\ket{2} + d_i\ket{3}$, denoted in shorthand as $v_i = (a_i, b_i, c_i, d_i)$, corresponds to a unit vector in $\mathbb{C}^4$. Two vertices are adjacent if and only if the corresponding vectors are orthogonal. Every vertex has degree $7$ (except the self-loop). The colors indicate a proper $5$-coloring, confirming $\chi(\mathtt{KS}^4_{18}) = 5$.}
\label{fig7}
\end{figure}

\subsection{Beyond KS (system) advantage in Conclusive Identification}

\subsubsection{Advantage Through State Independent Contextuality of Yu and Oh}

\noindent Recently, Yu and Oh~\cite{Yu2012} provided a novel proof of quantum contextuality in $\mathbb{C}^3$ using $13$ rays. Like the standard KS proofs, their proof is state-independent; however, the construction is \emph{not} a KS system in the sense of Definition~\ref{def11}. Among the $13$ vectors in $\mathbb{R}^3$, there exist only $4$ mutually orthogonal triples, and a valid $\{0,1\}$-coloring satisfying both completeness and exclusivity can be explicitly constructed. The Yu-Oh proof instead establishes contextuality through an operator-algebraic identity for spin-$1$ observables -- a mechanism fundamentally different from the combinatorial coloring argument underlying the standard KS proofs. The orthogonality graph $\mathtt{YO}^3_{13}$ of the Yu-Oh construction is depicted in Fig.~\ref{fig7}. Despite not being a KS system in the sense of Definition~\ref{def11}, we show in what follows that $\mathtt{YO}^3_{13}$ nevertheless gives rise to a quantum advantage in the conclusive identification task.

\begin{theorem}\label{theo6}
The SNFC channel $N : X \to X$ with support graph  $\mathtt{S}_N=\mathtt{YO}^3_{13}$ has $c_\circ(N) = 0$, $\mathrm{ci}_\circ(N) = 0$, and demonstrates a strict quantum advantage over classical assistance in the conclusive identification task.
\end{theorem}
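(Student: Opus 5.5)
The proof will follow the template of Theorem~\ref{theo5}, with $\mathtt{KS}^4_{18}$ replaced by the Yu--Oh orthogonality graph $\mathtt{YO}^3_{13}$. The argument has four steps: compute the diameter, rule out unassisted identification, determine the chromatic number, and exhibit an orthogonal representation in $\mathbb{C}^3$.

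\emph{Step 1 (diameter).} Take the $13$ Yu--Oh vectors $z_1=(1,0,0)$, $z_2=(0,1,0)$, $z_3=(0,0,1)$, $y^\pm_1=(0,1,\pm1)$, $y^\pm_2=(1,0,\pm1)$, $y^\pm_3=(1,\pm1,0)$ and $h_\alpha=(\pm1,\pm1,1)$ (four vectors). I will check directly that every pair of vertices is either orthogonal or has a common orthogonal neighbour. The structure makes this manageable.
\begin{itemize}[itemsep=-0.15cm, parsep=0.15cm, topsep=2pt, leftmargin=0.6cm]
\item The $z_i$ form a triangle.
\item Each $y^\pm_k$ is orthogonal to $z_k$ and to its partner in a perpendicular pair.
\item Each $h$ is orthogonal to exactly three $y$'s.
\end{itemize}
Two $h$'s share a common $y$ neighbour. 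An $h$ and a $z_i$ meet through a $y$ adjacent to both. Pairs of $y$'s meet through a $z$ or through an $h$. This gives $\mathrm{diam}(\mathtt{YO}^3_{13})=2$, so Proposition~\ref{prop3} yields $\mathtt{G}_N=\mathtt{K}_{13}$ and hence $c_\circ(N)=0$.

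\emph{Step 2 (no unassisted identification).} I will show $\mathrm{ci}_\circ(N)=0$ by noting that every vertex has degree at least $2$ (in fact at least $3$). Together with the self-loop, this gives $|\Omega_y|\ge 2$ for every output $y$, so no output is private to any input.

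\emph{Step 3 (chromatic number).} This is the main step and uses Theorem~\ref{theo1}. I must show $\chi(\mathtt{YO}^3_{13})=4$.
\begin{itemize}[itemsep=-0.15cm, parsep=0.15cm, topsep=2pt, leftmargin=0.6cm]
\item \emph{Upper bound.} An explicit $4$-colouring suffices. Colour $z_1,z_2,z_3$ with colours $1,2,3$. Then assign the $y$'s and $h$'s, using colour $4$ where needed; $y^\pm_k$ must avoid the colour of $z_k$ and its partner's colour.
\item \emph{Lower bound (expected obstacle).} I need $\chi\ge 4$, i.e.\ no proper $3$-colouring exists. Suppose one does. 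The triangle $\{z_1,z_2,z_3\}$ uses all three colours, say $z_i$ gets colour $i$. Each orthogonal pair $\{y^+_k,y^-_k\}$ is then forced to use the two colours different from $k$. The four $h$'s are adjacent to $y$'s in a pattern such that each $h$ sees three $y$'s, one from each $k$. I will enumerate the $2^3$ choices of orientation of the pairs. For each choice I expect to find an $h$ whose three neighbours carry all three colours, which is a contradiction. If this case check fails, I fall back on the known fact, proved by Yu--Oh and Cabello, that the $13$ rays admit no noncontextual $\{0,1\}$ assignment compatible with a $3$-colouring. Failing that, I will verify the bound by direct computation.
\end{itemize}
Theorem~\ref{theo1} then says that $\mathrm{id}^c_4$ is necessary.

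\emph{Step 4 (quantum protocol).} The $13$ Yu--Oh rays are, by construction, an orthogonal representation of $\mathtt{YO}^3_{13}$ in $\mathbb{C}^3$, so $\xi\le 3$. Applying Theorem~\ref{theo4} gives $\mathrm{ci}_\circ(N\otimes\mathrm{id}^q_3)=13$. Since $3<4=\chi(\mathtt{S}_N)<13$, this establishes a strict quantum advantage.
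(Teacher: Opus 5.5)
Your proposal is correct and follows the paper's proof. In both, $\mathrm{diam}(\mathtt{YO}^3_{13})=2$ with Proposition~\ref{prop3} gives $c_\circ(N)=0$, and the absence of isolated vertices gives $\mathrm{ci}_\circ(N)=0$. Then $\chi(\mathtt{YO}^3_{13})=4$ with Theorem~\ref{theo1} sets the classical requirement at $\mathrm{id}^c_4$, and the Yu--Oh rays with Theorem~\ref{theo4} give $\mathrm{ci}_\circ(N\otimes\mathrm{id}^q_3)=13$.

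The only difference is that the paper reads $\chi=4$ off Fig.~\ref{fig8}, whereas your orientation case check actually proves $\chi\ge 4$, and it does go through. The two sign patterns whose $y$'s carry all three colours are negatives of each other, so exactly one has $s_1s_2s_3=-1$. The patterns with $s_1s_2s_3=-1$ are exactly the neighbourhoods of the four $h$'s, so for every orientation some $h$ sees all three colours, and no fallback is needed.
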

\begin{proof}
Since $\mathrm{diam}(\mathtt{YO}^3_{13})=2$, in accordance with Proposition~\ref{prop3}, the confusability graph of $N: X \to X$ is $\mathtt{G}_N = \mathtt{K}_{13}$, giving $c_\circ(N) = 0$. Since every output $x \in X$ satisfies $|\Omega_x| > 1$, no input can be conclusively identified without assistance, and hence $\mathrm{ci}_\circ(N) = 0$.

\noindent Since $\chi(\mathtt{YO}^3_{13}) = 4$ (see Fig.~\ref{fig8}), Theorem~\ref{theo1} implies that the minimum classical assistance required for conclusive identification of all $13$ inputs is $\mathrm{id}^c_4$, i.e. $\mathrm{ci}_\circ(N \otimes \mathrm{id}^c_4) = 13$.

\noindent On the other hand, according to Theorem~\ref{theo4}, a $3$-dimensional noiseless quantum channel $\mathrm{id}^q_3$ suffices the purpose. Thus we have $\mathrm{ci}_\circ(N \otimes \mathrm{id}^q_3) = 13$, thereby establishing a strict quantum advantage over the optimal classical assistance.
\end{proof}

\noindent The above result, establishes that 
quantum advantage in the conclusive identification task is not an isolated phenomenon tied to the KS systems only. 

\subsubsection{Advantage Through State Dependent Contextuality}

\noindent The non-existence of a $\{0,1\}$-coloring satisfying \emph{exclusivity} and \emph{completeness} for KS system ((Definition~\ref{def11}), or the violation of an operator-algebraic identity -- as in the Yu-Oh proof -- establishes a contradiction with \emph{non-contextual hidden variable} (NCHV) models that holds for \emph{every} quantum state in the Hilbert space, regardless of its preparation. The contradiction is therefore a property of the algebraic or combinatorial structure of the observables alone, not of the quantum state. Such proofs are called \emph{state-independent} (SI) proofs of contextuality.

\noindent In contrast, a \emph{state-dependent} (SD) contextuality proof establishes a contradiction with NCHV assignments only for a particular quantum state $\ket{\psi}$, or a restricted class of states. The contradiction arises through the violation of a \emph{non-contextuality inequality}, a bound that holds for all NCHV models, but is violated by the quantum expectation value for the specific state $\ket{\psi}$. The celebrated Bell-CHSH inequality~\cite{Bell1964,Bell1966,Mermin1990} is the paradigmatic example: it is violated by entangled states such as $\ket{\Phi^+}$, but not by separable states. The Klyachko-Can-Binicio\u{g}lu-Shumovsky (KCBS) inequality~\cite{Klyachko2008} provides another well-known instance, being violated only for states outside the classical polytope — one of the first SD contextuality proofs in a single qutrit system.

\noindent While both SI and SD proofs witness the failure of NCHV models, they differ in scope: SI proofs are stronger, as the non-classicality is intrinsic to the measurement structure and holds universally, whereas SD proofs reveal non-classicality only for specific state-measurement pairs. In what follows, we show that a strict quantum advantage in the conclusive identification task can be obtained from SD contextuality as well.

\begin{theorem}\label{theo7}
There exists an SNFC channel $N : X \to X$ whose support graph $\mathtt{S}_N$ arises from a state-dependent (SD) contextuality construction, such that $\mathrm{ci}_\circ(N \otimes \mathrm{id}^q_d) = |X|~\text{for~} d < \chi(\mathtt{S}_N)$, establishing a strict quantum advantage over the optimal classical assistance $\mathrm{id}^c_{\chi(\mathtt{S}_N)}$ in the conclusive identification task.
\end{theorem}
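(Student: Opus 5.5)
The plan is to mirror the proofs of Theorems~\ref{theo5} and~\ref{theo6}. I would fix one concrete state-dependent contextuality construction in $\mathbb{C}^3$ and take the SNFC channel $N$ whose support graph $\mathtt{S}_N$ is the orthogonality graph of its vectors. Then I would check three facts: the unassisted indices vanish, $\chi(\mathtt{S}_N)=4$, and $\xi(\mathtt{S}_N)=3$. Theorem~\ref{theo4} and Theorem~\ref{theo1} then give $\mathrm{ci}_\circ(N\otimes\mathrm{id}^q_3)=|X|$ while classically $\mathrm{id}^c_4$ is needed.

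For the construction I would start from a Clifton-type state-dependent set. The idea is to take two non-orthogonal rays $\ket{a},\ket{b}$ (the state $\ket{\psi}=\ket{a}$ witnesses contextuality) and join them by the usual chains of orthogonal triples. I would then \emph{complete} every orthogonal pair to a full basis, so that the graph is dense enough to meet two requirements.
\begin{itemize}
\item Every output should have $|\Omega_y|>1$, which gives $\mathrm{ci}_\circ(N)=0$. This is automatic once each vertex has at least one neighbour.
\item We need $\mathrm{diam}(\mathtt{S}_N)\le 2$, so that Proposition~\ref{prop3} yields $\mathtt{G}_N=\mathtt{K}_{|X|}$ and $c_\circ(N)=0$. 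The statement does not demand this, but it makes the example parallel to the earlier ones. If the diameter exceeds $2$, I would add auxiliary rays orthogonal to two far-apart vectors; this is always possible in $\mathbb{C}^3$ via cross products. Each added ray only enlarges the graph, so it cannot lower $\chi$.
\end{itemize}

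The orthogonal rank is immediate. The vectors themselves form an orthogonal representation in $\mathbb{C}^3$, so $\xi\le 3$. A triangle (a basis) forces $\xi\ge 3$.

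The main obstacle is proving $\chi(\mathtt{S}_N)\ge 4$, because a state-dependent set, unlike a KS system, \emph{does} admit $\{0,1\}$ assignments. Non-3-colourability therefore cannot be read off from the contextuality argument and must be checked combinatorially.

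The observation that steers the construction is the following. In a proper $3$-colouring of a graph in which every triangle is a basis, each basis receives all three colours. Any single colour class is then a noncontextual assignment with exactly one $1$ per basis.

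Clifton's structure forces, under any such assignment, that $\ket{a}$ and $\ket{b}$ cannot both receive value $1$. So a $3$-colouring would need $\ket{a}$ and $\ket{b}$ in different colour classes. Apply the same reasoning to each colour class (they must be pairwise "exclusive-compatible"). By adding a few more rays, each forming a Clifton-gadget with $\ket{a}$ or $\ket{b}$, I would try to force three pairwise distinct colours on a set together with a vertex adjacent to all of them. That is a contradiction.

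Concretely, I would first try a forced-colouring (propagation) argument on the completed graph: fix colours on one basis triangle, propagate along shared vertices of adjacent triangles, and show that every branch ends in a monochromatic edge. If this fails, I would fall back on exhibiting an odd-wheel-like subgraph whose rim is realised by non-coplanar rays. A $4$-colouring would be exhibited explicitly, as in Fig.~\ref{fig8}, so that $\chi=4$ exactly.
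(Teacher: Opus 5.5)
Your skeleton is the right one: Theorem~\ref{theo1} for the classical side, Theorem~\ref{theo4} for the quantum side, and Proposition~\ref{prop3} for $c_\circ(N)=0$. However, the only nontrivial step is never carried out. You never fix a concrete set of rays, and $\chi(\mathtt{S}_N)\ge 4$ is left as a plan with a fallback, so no channel has actually been exhibited.

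The fallback cannot work. If a vertex $z$ is adjacent to every rim vertex, all rim rays lie in the two-dimensional subspace $z^{\perp}\subset\mathbb{C}^3$. There each ray has exactly one orthogonal ray, so the rim contains no cycle at all. In $\mathbb{C}^3$ every neighbourhood induces a matching, so an odd wheel is never a subgraph of an orthogonality graph. Non-$3$-colourability in $\mathbb{C}^3$ must therefore come entirely from Clifton-type exclusivity gadgets. Each gadget, together with your basis completions and diameter-fixing rays, would have to be written down and checked explicitly.

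There is also a conceptual problem. Any set in $\mathbb{C}^3$ with $\chi>3$ already meets the known necessary condition $\chi>d$ for state-independent contextuality. Your proposed mechanism, making rays mutually exclusive through gadgets so that they must receive distinct colours, is essentially how the state-independent Yu--Oh set reaches $\chi=4$. There, the rays $v_{10},\dots,v_{13}$ are pairwise exclusive under every noncontextual assignment, and pigeonhole over three colour classes does the rest. So after your augmentations you would still need to prove that the final set is \emph{not} state-independent. Otherwise the example is just another instance of Theorem~\ref{theo6} and does not show an advantage from state-dependent contextuality.

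The paper avoids all of this by going up one dimension. It takes the $13$ Yu--Oh rays and adds a ray $v_{14}\in\mathbb{C}^4$ orthogonal to all of them.
\begin{itemize}
\item A universal vertex raises the chromatic number by exactly one, so $\chi(\mathtt{YO}^4_{14})=4+1=5$.
\item The vectors themselves give $\xi\le 4$.
\item The hub makes $\mathrm{diam}\le 2$ and $|\Omega_y|>1$ automatic.
\item That the set is state-dependent but not state-independent is imported from Theorem~1 of~\cite{Cabello2015}. Indeed, the hub state $v_{14}$ is reproduced exactly by the noncontextual assignment $v_{14}\mapsto 1$ and all other rays $\mapsto 0$.
\end{itemize}
This hub trick is exactly what is unavailable in $\mathbb{C}^3$, which is why your dimension-$3$ target is substantially harder than the statement requires.
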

\begin{proof}
Consider the $13$-vector set $\mathcal{V}_{13} \subset\mathbb{C}^3$ of the Yu and Oh's construction~\cite{Yu2012}. Append a new vector $v_{14}$ that is orthogonal to all $13$ vectors in $\mathcal{V}_{13}$. The resulting $14$-vector set $\mathcal{V}_{14} := \mathcal{V}_{13} \cup \{v_{14}\} \subset \mathbb{C}^4$ and its orthogonality graph $\mathtt{YO}^4_{14}$ are well-defined. The vertex $v_{14}$ plays a role analogous to the hub vertex of a wheel graph: just as the hub of $W_n$ is adjacent to every rim vertex, $v_{14}$ is adjacent to every vertex in $\mathtt{YO}^3_{13}$. Accordingly, $\mathrm{diam}(\mathtt{YO}^4_{14})=2$, and the chromatic number satisfies $\chi(\mathtt{YO}^4_{14}) = \chi(\mathtt{YO}^3_{13}) + 1 = 5$. While $\mathcal{V}_{14}$ gives rise to a state-dependent (SD) proof of contextuality, it does not constitute a state-independent (SI) proof, as established in Theorem~1 of~\cite{Cabello2015}.

\noindent Now consider an SNFC channel $N : X \to X$ with $|X| = 14$ and support graph $\mathtt{S}_N =\mathtt{YO}^4_{14}$. By Proposition~\ref{prop3}, the confusability graph is $\mathtt{G}_N = K_{14}$, giving $c_\circ(N) = 0$. Since $|\Omega_x| > 1$ for all $x \in X$, no input is conclusively identifiable without assistance, so $\mathrm{ci}_\circ(N) = 0$. By Theorem~\ref{theo1}, the minimum classical assistance required is $\mathrm{id}^c_5$, since $\chi(\mathtt{YO}^4_{14}) = 5$. On the other hand, Theorem~\ref{theo4} ensures $\mathrm{ci}_\circ(N \otimes \mathrm{id}^q_4)= 14$, thereby establishing a strict quantum advantage arising from an SD contextuality construction.
\end{proof}

\noindent In summary, for an SNFC channel $N : X \to X$, with $c_\circ(N)=0~\&~\mathrm{ci}_\circ(N)=0$, the minimum classical assistance required for conclusive identification is $\mathrm{id}^c_{\chi(\mathtt{S}_N)}$, determined by the chromatic number of the support graph $\mathtt{S}_N$. The quantum advantage arises whenever a noiseless quantum channel $\mathrm{id}^q_d$ with $d < \chi(\mathtt{S}_N)$ suffices to accomplish the same task. The key quantity governing this is the \emph{orthogonal rank} $\xi(\mathtt{S}_N)$ of the support graph (in $\mathbb{C}^d$). The three explicit examples constructed in this section demonstrate that this separation can arise from qualitatively different types of contextuality: (i) Combinatorial state-independent contextuality (Comb. SIC), (ii) Algebraic state-independent contextuality (Alg. SIC), and (iii) State-dependent contextuality (SDC). The three examples are summarized in Table~\ref{tab:quantum-advantage}.

\begin{figure}[t]
\centering
\includegraphics[width=0.45\textwidth]{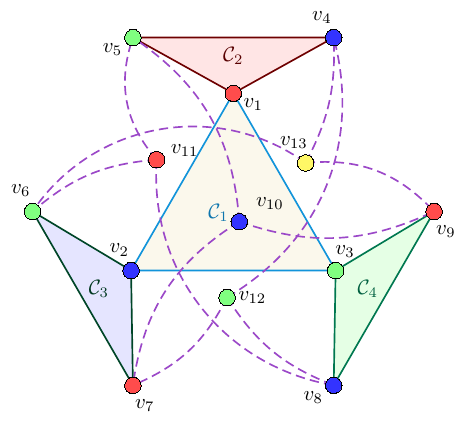}
\renewcommand{\arraystretch}{1}
\begin{tabular}{c c c c c}
\hline
$\scriptstyle v_1=(1,0,0)$ &
$\scriptstyle v_2=(0,1,0)$ &
$\scriptstyle v_3=(0,0,1)$ &
$\scriptstyle v_4=(0,1,1)$\\

$\scriptstyle v_5=(0,1,-1)$ &
$\scriptstyle v_6=(1,0,1)$ &
$\scriptstyle v_7=(1,0,-1)$ &
$\scriptstyle v_8=(1,1,0)$ \\

$\scriptstyle v_9=(1,-1,0)$ &
$\scriptstyle v_{10}=(1,1,1)$ &
$\scriptstyle v_{11}=(-1,1,1)$ &
$\scriptstyle v_{12}=(1,-1,1)$ \\
\multicolumn{4}{c}{$\scriptstyle v_{13}=(1,1,-1)$} \\
\hline
$\scriptstyle \mathcal{C}_1=\{v_1,v_2,v_3\}$ &
$\scriptstyle \mathcal{C}_2=\{v_1,v_4,v_5\}$ &
$\scriptstyle \mathcal{C}_3=\{v_2,v_6,v_7\}$ &
$\scriptstyle \mathcal{C}_4=\{v_3,v_8,v_9\}$ \\\hline\hline
\end{tabular}
\caption{ Yu-Oh's $13$-ray algebraic state-independent proof of contextuality in $\mathbb{C}^3$ \cite{Yu2012}. This is not a KS System as per Definition~\ref{def11}, as $\bigcup_{i=1}^4\mathcal{C}_i$ does not contain all the $13$ vectors $\mathcal{V}_{13}$. Precisely, the vectors $\{v_{10},v_{11},v_{12},v_{13}\}$ do not appear in any of the context. The graph $\mathtt{YU}^3_{13}$ has $\xi(\mathtt{YU}^3_{13})=3$, $\mathrm{diam}(\mathtt{YU}^3_{13})=2$, and $\chi(\mathtt{YU}^3_{13})=4$ (a proper $4$-coloring is depicted). }
\label{fig8}
\end{figure}

\begin{table}[h]
\centering
\begin{tabular}{|c||c|c|c|c|c|}
\hline
Construction & ~Type~ & ~$|X|$~ & $\chi(\mathtt{S}_N)$ & $\xi(\mathtt{S}_N)$ & ~Gap~ \\\hline\hline
$\mathtt{KS}^4_{18}$~\cite{Cabello1996} & Comb.\ SIC & $18$ & $5$ & $4$ & $1$ \\\hline $\mathtt{YO}^3_{13}$~\cite{Yu2012} & Alg.\ SIC & $13$ & $4$ & $3$ & $1$ \\\hline $\mathtt{YO}^4_{14}$~\cite{Cabello2015} & SDC & $14$ & $5$ & $4$ & $1$ \\\hline
\end{tabular}
\caption{Summary of quantum advantages. For each SNFC channel $N$ with support graph $\mathtt{S}_N$, the gap $\chi(\mathtt{S}_N) -\xi(\mathtt{S}_N) = 1$, in all three cases.}
\label{tab:quantum-advantage}
\end{table}

\noindent Taken together, these results demonstrate that the separation $\xi(\mathtt{S}_N) < \chi(\mathtt{S}_N)$ is the fundamental criterion for quantum advantage in the conclusive identification task. A natural question is how large the quantum advantage can be; we address this in the next subsection.

\subsection{Scaling the Quantum Advantage}

In this subsection we present families of SNFC channels for which the quantum advantage in the conclusive identification task -- measured by the ratio of the classical to quantum assistance required -- can be scaled arbitrarily large. 

\subsubsection{Polynomial Scaling of Quantum Advantage} We start by recalling a key algebraic tool from graph theory called the \emph{co-normal product}~\cite{Feige1997}.

\begin{definition}[Co-Normal Product]\label{def:conormal}
The \emph{co-normal product} (also known as the \emph{disjunctive product}) of two graphs $\mathtt{G} = (V(\mathtt{G}), E(\mathtt{G}))$ and $\mathtt{H} = (V(\mathtt{H}), E(\mathtt{H}))$ is the graph $\mathtt{G} \times \mathtt{H}$ with vertex set $V(\mathtt{G}) \times V(\mathtt{H})$, where two distinct vertices $(i, j)$ and $(i', j')$ are adjacent if and only if
\begin{equation*}
(i,j) \sim (i',j') \iff i \sim_{\mathtt{G}} i' \quad \lor \quad j \sim_{\mathtt{H}} j'.
\end{equation*}
We write $\mathtt{G}^{\times m}$ for the $m$-fold co-normal product of $\mathtt{G}$ with itself.
\end{definition}

\noindent The following properties of the co-normal product are relevant to our analysis:
\begin{itemize}[itemsep=-0.15cm, parsep=0.15cm, 
               topsep=2pt, leftmargin=1.0cm]
\item[\textbf{(P1)}] If $\mathrm{diam}(\mathtt{G}) \leq 2$, then $\mathrm{diam}(\mathtt{G}^{\times m}) \leq 2$ for all $m \in \mathbb{N}$~\cite{Lovasz1979, 
    Hammack2011}.
\item[\textbf{(P2)}] If $\xi(\mathtt{G}) = d$, then $\xi(\mathtt{G}^{\times m}) \leq d^m$ for all $m \in \mathbb{N}$~\cite{Hogben2017}.
\item[\textbf{(P3)}] $\chi_f(\mathtt{G}^{\times m}) = \chi_f(\mathtt{G})^m$ for all $m \in \mathbb{N}$~\cite{Lovasz1979, Scheinerman1997, Hammack2011}.\footnote{For any graph $\mathtt{G} = (V, E)$, the \emph{fractional chromatic number} $\chi_f(\mathtt{G})$ is a relaxation of the chromatic number satisfying $\omega(\mathtt{G}) \leq \chi_f(\mathtt{G}) \leq \chi(\mathtt{G})$, where $\omega(\mathtt{G})$ denotes the clique number of $\mathtt{G}$.}
\end{itemize}

\begin{definition}[Co-Normal Product 
Channel]\label{def:conormal-channel} Given an SNFC channel $N : X \to X$ with support graph $\mathtt{S}_N$, the \emph{$m$-fold co-normal product channel} $N^{\times m} : X^m \to X^m$ is the channel whose support graph is the $m$-fold co-normal product $\mathtt{S}_N^{\times m}$, where $X^m$ denotes the $m$-fold Cartesian product of the alphabet $X$ with itself, i.e. $X^m := \underbrace{X \times \cdots \times X}_{m \text{ times}}$.
\end{definition}

\begin{theorem}\label{theo8}
Let $N : X \to X$ be an SNFC channel with support graph $\mathtt{S}_N$ satisfying:
\begin{itemize}[itemsep=-0.15cm, parsep=0.15cm, topsep=2pt, leftmargin=1.0cm]
\item[\emph{a)}] $\mathrm{diam}(\mathtt{S}_N) = 2$, so that $\mathtt{G}_N = K_{|X|}$ and $c_\circ(N) = 0$;
\item[\emph{b)}] $\mathrm{ci}_\circ(N) = 0$;
\item[\emph{c)}] $\chi(\mathtt{S}_N) < |X|$, so that classical superactivation is possible by Theorem~\ref{theo1};
\item[\emph{d)}] The \emph{quantum advantage ratio}
\begin{equation*}
\mathtt{QA}(N) := \frac{\chi(\mathtt{S}_N)}{\xi(\mathtt{S}_N)}\ge\frac{\chi_f(\mathtt{S}_N)}{\xi(\mathtt{S}_N)} > 1,
\end{equation*}
so that quantum assistance strictly outperforms classical assistance.
\end{itemize}
Then for the $m$-fold co-normal product channel $N^{\times m}$, the quantum advantage ratio $\mathtt{QA}(N^{\times m})$ can be made arbitrarily large as $m \to \infty$.
\end{theorem}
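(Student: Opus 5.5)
The plan is to bound $\mathtt{QA}(N^{\times m})$ from below by a ratio involving the fractional chromatic number, which is multiplicative under the co-normal product, and from above to control the orthogonal rank by the submultiplicativity property (P2). Concretely, I will chain
\begin{equation*}
\mathtt{QA}(N^{\times m}) = \frac{\chi(\mathtt{S}_N^{\times m})}{\xi(\mathtt{S}_N^{\times m})} \ge \frac{\chi_f(\mathtt{S}_N^{\times m})}{\xi(\mathtt{S}_N^{\times m})} \ge \frac{\chi_f(\mathtt{S}_N)^m}{\xi(\mathtt{S}_N)^m} = \left(\frac{\chi_f(\mathtt{S}_N)}{\xi(\mathtt{S}_N)}\right)^{m},
\end{equation*}
using $\chi \ge \chi_f$ (footnote to (P3)), then (P3) for the numerator and (P2) for the denominator. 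Hypothesis (d) gives $\chi_f(\mathtt{S}_N)/\xi(\mathtt{S}_N) > 1$, so the right-hand side diverges exponentially in $m$, which proves the claim.

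Before this chain is meaningful I will check that $N^{\times m}$ is again an admissible channel for the task, i.e. that the numerator and denominator really are the classical and quantum resources given by Theorem~\ref{theo1} and Theorem~\ref{theo4}. The steps are: (1) the co-normal product of graphs with self-loops at every vertex again has a self-loop at every vertex and is symmetric, so any stochastic matrix with this zero pattern (e.g. uniform on closed neighbourhoods) is an SNFC channel, making Definition~\ref{def:conormal-channel} well posed; (2) by (P1), $\mathrm{diam}(\mathtt{S}_N^{\times m}) \le 2$, so Proposition~\ref{prop3} gives $c_\circ(N^{\times m}) = 0$; (3) $\mathrm{ci}_\circ(N^{\times m}) = 0$ because every output $(y_1,\dots,y_m)$ has input-domain containing, for any $y_1' \in \Omega_{y_1}$ of $N$ with $y_1'\neq y_1$ (which exists by (b)), the distinct input $(y_1',y_2,\dots,y_m)$, adjacent since the first coordinates are adjacent; (4) Theorem~\ref{theo1} then identifies $\chi(\mathtt{S}_N^{\times m})$ as the exact classical requirement, and Theorem~\ref{theo4} shows $\xi(\mathtt{S}_N^{\times m})$ quantum dimensions suffice. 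For $m\ge 2$, $\chi(\mathtt{S}_N^{\times m}) < |X|^m$ also follows, since $\chi_f$ grows as $\chi_f(\mathtt{S}_N)^m$ and one can colour by products of colourings using $\chi(\mathtt{S}_N)^m < |X|^m$ colours (a product colouring is proper because adjacency requires adjacency in some coordinate).

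I expect the main obstacle to be justifying (P2) rigorously, since it is only cited. My first attempt: given an orthogonal representation $v$ of $\mathtt{G}$ in $\mathbb{C}^d$, take $(i,j)\mapsto v_i\otimes v_j$. Its inner product factorises as $\langle v_i|v_{i'}\rangle\langle v_j|v_{j'}\rangle$, which vanishes whenever $i\sim i'$ or $j\sim j'$, exactly the co-normal adjacency. This gives $\xi(\mathtt{G}^{\times m})\le d^m$ by induction. One minor point to watch is that in the formula the bound uses $\chi_f$ rather than $\chi$, because $\chi$ itself need not be multiplicative under the co-normal product. That is why hypothesis (d) is stated with $\chi_f$, and the argument must use that form.
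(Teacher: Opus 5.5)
Your proposal is correct and follows the paper's proof: the same chain $\chi \ge \chi_f$, then (P3) for the numerator and (P2) for the denominator, with (P1) and Proposition~\ref{prop3} giving $c_\circ(N^{\times m})=0$. Your tensor-product proof of (P2) and your explicit neighbour argument for $\mathrm{ci}_\circ(N^{\times m})=0$ supply steps the paper only cites or asserts; note that the tensor-product step requires taking the co-normal product of the loopless graphs and reattaching the self-loops afterwards, which is also the convention the paper uses implicitly.
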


\begin{proof}
Since $\mathrm{diam}(\mathtt{S}_N)=2$, it follows that 
\begin{align*}
\mathrm{diam}(\mathtt{S}_N^{\times m}) \le 2 \qquad (\text{property } \mathbf{P1}).   
\end{align*}
Hence, by Proposition~\ref{prop3}, the confusability graph of the co-normal product channel $N^{\times m}$ is the complete graph $\mathtt{K}_{|X|^m}$, which implies $c_\circ(N^{\times m}) = 0$. 

\noindent Further, since $\mathrm{ci}_\circ(N)=0$, no input symbol is conclusively identifiable without assistance. This property is preserved under the co-normal product, and therefore $\mathrm{ci}_\circ(N^{\times m}) = 0$.

\noindent Now, the quantum advantage ratio for $N^{\times m}$ satisfies
\begin{align*}
\mathtt{QA}(N^{\times m})&= \frac{\chi(\mathtt{S}_N^{\times m})}{\xi(\mathtt{S}_N^{\times m})}
\;\ge\; \frac{\chi_f(\mathtt{S}_N^{\times m})}{\xi(\mathtt{S}_N^{\times m})} \\
&\ge\; \frac{\big(\chi_f(\mathtt{S}_N)\big)^m}{\big(\xi(\mathtt{S}_N)\big)^m}
\qquad [\text{by } \mathbf{P2} \text{ and } \mathbf{P3}].
\end{align*}
Since, by assumption, $\frac{\chi_f(\mathtt{S}_N)}{\xi(\mathtt{S}_N)} > 1$, it follows that
\begin{align*}
\mathtt{QA}(N^{\times m}) \;\ge\; \left(\frac{\chi_f(\mathtt{S}_N)}{\xi(\mathtt{S}_N)}\right)^m,
\end{align*}
which grows unboundedly with $m$. Hence, the quantum advantage for the $m$-fold co-normal product channel can be made arbitrarily large as $m \to \infty$.
\end{proof}

\noindent For an explicit instantiation of Theorem~\ref{theo8}, one may consider any SNFC channel whose support graph is $\mathtt{KS}^{4}_{18}$, $\mathtt{YO}^{3}_{14}$, or $\mathtt{YO}^{4}_{14}$, as studied in Theorems~\ref{theo5}, \ref{theo6}, and \ref{theo7}, respectively. For such channels, the quantum advantage ratio in Theorem~\ref{theo8}, obtained via the $m$-fold co-normal product construction, can be made arbitrarily large by increasing $m$. Notably, in this construction, the  assistance of noiseless classical channel grows polynomially as compared to the assistance of noiseless quantum channel. In the following, we demonstrate that there exist channels for which the required classical resource scales exponentially compared to its quantum counterpart.      

\subsubsection{Exponential Scaling of Quantum Advantage}

We begin by recalling the Newman graph \cite{Newman}, which plays a central role in demonstrating an exponential separation between classical and quantum resources.

\begin{definition}[Newman states and Newman graph]
Let $d = 4k$ for some $k \in \mathbb{N}$. Define
\begin{align*}
\mathcal{N}_d := \left\{ x \in \{\pm 1\}^d \;\middle|\; \sum_{i=1}^d x_i \equiv 0 \pmod{4} \right\},
\end{align*}
i.e., the set of all $\{\pm 1\}$-valued vectors in $\mathbb{R}^d$ having an even number of $-1$ entries. The elements of $\mathcal{N}_d$ are referred to as \emph{Newman states}.

The \emph{Newman graph} $\mathtt{Y}_d=(V(\mathtt{Y}_d),E(\mathtt{Y}_d))$ is the orthogonality graph on $\mathcal{N}_d$, i.e., $V(\mathtt{Y}_d)=\mathcal{N}_d$, where two distinct vertices $x,y \in \mathcal{N}_d$ are adjacent if and only if $\langle x, y \rangle = 0$.
\end{definition}

\noindent We will first prove a key property of the Newman graph ensuring its strong connectivity which will be instrumental for our purpose.

\begin{lemma}[Diameter of the Newman graph]\label{lemma2}
Let $d>4$ be a multiple of $4$. Then $\operatorname{diam}(\mathtt{Y}_d)=2$.
\end{lemma}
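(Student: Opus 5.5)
The plan is to work in Hamming-distance language and reduce the question to a counting problem, after normalising one endpoint to the all-ones vector. Write $w(x)$ for the number of $-1$ entries of $x\in\{\pm1\}^d$. Since $\sum_i x_i = d-2w(x)$ and $d\equiv 0\pmod 4$, the condition $\sum_i x_i\equiv 0\pmod 4$ is equivalent to $w(x)$ being even. For $x,y\in\{\pm1\}^d$ we have $\langle x,y\rangle = d-2\,\mathrm{dist}_H(x,y)$, where $\mathrm{dist}_H$ is the Hamming distance. Hence $x\sim y$ in $\mathtt{Y}_d$ if and only if $\mathrm{dist}_H(x,y)=d/2=2k$. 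Moreover, any two Newman states are at even Hamming distance, because the parity of $\mathrm{dist}_H(x,y)$ equals the parity of $w(x)+w(y)$.

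\textbf{Lower bound, $\mathrm{diam}(\mathtt{Y}_d)\geq 2$.} We need two distinct non-adjacent vertices. Take $x=(1,\dots,1)$ and let $y$ be $x$ with two coordinates flipped. Both have even weight, so both lie in $\mathcal{N}_d$. Their distance is $2$, and $2\neq 2k$ because $d>4$ forces $k\geq 2$. So $\mathtt{Y}_d$ is not complete. This is exactly where the hypothesis $d>4$ enters: for $d=4$ every pair at distance $2$ is adjacent.

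\textbf{Symmetry reduction.} For the upper bound, I will use the map $z\mapsto x\circ z$ (coordinatewise product) for a fixed $x\in\mathcal{N}_d$. This map preserves inner products. It also maps $\mathcal{N}_d$ to itself, since $w(x\circ z)\equiv w(x)+w(z)\pmod 2$. So it is a graph automorphism, and it sends $x$ to the all-ones vector $\mathbf{1}$. It therefore suffices to show that for every $y\in\mathcal{N}_d\setminus\{\mathbf{1}\}$ there is some $z\in\mathcal{N}_d$ with $\mathrm{dist}_H(\mathbf{1},z)=\mathrm{dist}_H(y,z)=2k$.

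\textbf{Upper bound, constructing the common neighbour.} Let $T$ be the set of $-1$ coordinates of $y$, with $|T|=t$; here $t$ is even and $2\leq t\leq d$. I will build $z$ by placing $a$ minus signs inside $T$ and $b$ minus signs outside $T$. The two distance conditions read $a+b=2k$ and $(t-a)+b=2k$. Solving gives $a=t/2$ and $b=2k-t/2$.

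We then check feasibility. First, $0\le a\le t$ holds trivially. Second, $b\geq 0$ holds because $t\leq 4k$. Third, $b\leq d-t$ is equivalent to $t/2\le 2k$, which again holds. The weight of $z$ is $a+b=2k$, which is even, so $z\in\mathcal{N}_d$. Finally, $z$ is distinct from both $\mathbf{1}$ and $y$, since it is at positive distance from each. So every non-adjacent pair has a common neighbour, giving $\mathrm{diam}(\mathtt{Y}_d)\le 2$.

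\textbf{Main obstacle.} The only delicate step is the parity and range bookkeeping in the construction. It needs $t$ to be even, so that $a=t/2$ is an integer, and it needs the weight of $z$ to be even. Both facts follow from the parity observation made at the start. The extreme case $t=d$ (that is, $y=-\mathbf{1}$) is covered by $a=2k$, $b=0$.
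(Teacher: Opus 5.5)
Your proof is correct and follows essentially the same route as the paper. The paper keeps $x$ general, flips half of the agreement set $A$, and lets $z$ agree with $x$ on half of the disagreement set $B$; after your normalization $x\mapsto\mathbf{1}$ (so that $B=T$ and $|A|/2=2k-t/2$), this is exactly your choice of $z$, and the lower bound via a pair at Hamming distance $2$ is identical.
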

\begin{proof}
For $x,y \in \{\pm 1\}^d$, let $H(x,y)$ denote the Hamming distance. Then
\begin{align*}
\langle x, y \rangle = d - 2H(x,y),   
\end{align*}
and hence $\langle x, y \rangle = 0$ if and only if $H(x,y)=d/2$. Thus, two vertices are adjacent in $\mathtt{Y}_d$ if and only if $H(x,y)=d/2$. Fix two vertices $x,y \in \mathcal{N}_d$, and define
\begin{align*}
A=\{i: x_i = y_i\}, \qquad B=\{i: x_i \neq y_i\},
\end{align*}
with $|A|=a$ and $|B|=b$. Since both $x$ and $y$ contain an even number of $-1$ entries, it follows that $b$ is even, and hence $a=d-b$ is also even. If $b=d/2$, then $H(x,y)=d/2$, and thus $x$ and $y$ are adjacent. Otherwise, suppose $b \ne d/2$. We construct a vector $z \in \{\pm 1\}^d$ as follows:
\begin{itemize}
\item On the coordinates in $A$, flip exactly $a/2$ entries of $x$.
\item On the coordinates in $B$, let $z$ agree with $x$ on exactly $b/2$ coordinates and with $y$ on the remaining $b/2$ coordinates.
\end{itemize}
By construction,
\begin{align*}
H(x,z)=\frac{a}{2}+\frac{b}{2}=\frac{d}{2}, \qquad
H(y,z)=\frac{a}{2}+\frac{b}{2}=\frac{d}{2},
\end{align*}
so $z$ is adjacent to both $x$ and $y$. Moreover, since $a/2$ and $b/2$ are integers and $d/2$ is even (as $d=4k$), the parity condition is preserved, ensuring that $z \in \mathcal{N}_d$. Thus any two vertices are either adjacent or share a common neighbor, which implies $\mathrm{diam}(\mathtt{Y}_d) \le 2$. Finally, for $d>4$, there exist non-adjacent vertices (e.g., vertices with $H(x,y)=2 \neq d/2$), so $\mathrm{diam}(\mathtt{Y}_d) \neq 1$. Therefore, $\mathrm{diam}(\mathtt{Y}_d)=2$. This completes the proof.
\end{proof}
With this we now present the main result of this subsection. 

\begin{theorem}
Let $d=4k$ for some $k\in\mathbb{N}$, and consider an SNFC channel $N_d:X\to X$ with $|X|=|V(Y_d)|=2^{d-2}$ whose support graph is the Newman graph, i.e., $\mathtt{S}_{N_d}=Y_d$. Then the following properties hold:
\begin{itemize}[itemsep=-0.15cm, parsep=0.15cm, topsep=2pt, leftmargin=1.0cm]
\item[\emph{(i)}] $c_\circ(N_d)=0$;
\item[\emph{(ii)}] $\mathrm{ci}_\circ(N_d)=0$;
\item[\emph{(iii)}] $\mathtt{QA}(N_d)=\frac{\chi(\mathtt{Y}_d)}{\xi(\mathtt{Y}_d)} \;\ge\; \frac{1}{d}\left(\frac{2}{1.99}\right)^d$, thereby exhibiting an exponential quantum advantage in $d$.
\end{itemize}
\end{theorem}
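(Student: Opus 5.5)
The plan is to treat the three claims separately, reusing the structural results already in place. For (i), Lemma~\ref{lemma2} gives $\mathrm{diam}(\mathtt{Y}_d)=2$ for $d>4$. Since every vertex carries a self-loop in the support graph, Proposition~\ref{prop3} then yields $\mathtt{G}_{N_d}=\mathtt{K}_{2^{d-2}}$, so $\alpha=1$ and $c_\circ(N_d)=0$. The case $d=4$ is degenerate: there $\mathtt{Y}_4$ has diameter $1$ (or is a disjoint union of cliques of size $4$), and I would either note it separately or assume $d>4$ as in Lemma~\ref{lemma2}. For the vertex count, recall that $\{\pm1\}$-vectors with an even number of $-1$'s number $2^{d-1}$. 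Identifying $x$ with $-x$ (they define the same ray, and $-x\in\mathcal{N}_d$ since $d$ is even) gives $2^{d-2}$ rays, which matches $|X|$. I will therefore work with rays, as is standard for the Newman graph.

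For (ii), I would show that every output $y$ has $|\Omega_y|\ge 2$. Here $\Omega_y=\mathcal{N}[y]$ contains $y$ itself and any neighbour of $y$. A neighbour always exists: flip $d/2$ coordinates of $y$, which preserves parity because $d/2$ is even. So no output is private to a single input, and $\mathrm{ci}_\circ(N_d)=0$, exactly as in the wheel and KS arguments.

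For (iii), the quantum side is immediate. The Newman vectors themselves, or their normalisations $x/\sqrt d$, form an orthogonal representation of $\mathtt{Y}_d$ in $\mathbb{R}^d\subset\mathbb{C}^d$, so $\xi(\mathtt{Y}_d)\le d$. By Theorem~\ref{theo4}, $\mathrm{id}^q_d$ then suffices.

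The classical side is the main obstacle. I need $\chi(\mathtt{Y}_d)\ge (2/1.99)^d$. I would use $\chi\ge |V|/\alpha(\mathtt{Y}_d)$ and invoke the Frankl--R\"odl theorem. An independent set in $\mathtt{Y}_d$ is a family of $\pm1$ vectors (even parity) in which no two differ in exactly $d/2$ coordinates. Frankl--R\"odl (for $d\equiv 0 \bmod 4$, forbidding Hamming distance $d/2$) bounds such families by $1.99^d$. This gives
\begin{equation*}
\chi(\mathtt{Y}_d)\ge \frac{2^{d-2}}{1.99^d}
\end{equation*}
up to the factor lost in passing to rays, which is at most $2$. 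Combining with $\xi\le d$ yields $\mathtt{QA}(N_d)\ge \frac{c}{d}(2/1.99)^d$.

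The constant bookkeeping needs care, because the statement has no $1/4$ factor. I would absorb it by using the sharper Frankl--R\"odl constant: $1.99$ is not tight, and for $d$ large one may take $(2-\epsilon)^d$ with $\epsilon$ larger than $0.01$, so the factor of $4$ is swallowed for sufficiently large $d$. Otherwise I would state the bound with an explicit constant. This is where I expect to spend the effort: citing the correct form of Frankl--R\"odl (the version for $\pm1$ vectors with orthogonality, as used by Buhrman--Cleve--Wigderson and in the quantum chromatic number literature) and verifying that it applies to the even-parity subset.
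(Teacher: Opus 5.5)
Your treatment of (i), (ii) and the quantum half of (iii) matches the paper: Lemma~2 plus Proposition~3, $|\Omega_y|>1$, Theorem~4 with $\xi(\mathtt{Y}_d)\le d$, and $\chi\ge |V|/\alpha$ combined with Frankl--R\"odl. Reading the vertices as rays is the interpretation consistent with $|X|=2^{d-2}$.

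\textbf{The gap is the constant in the chromatic bound.} Applying Frankl--R\"odl directly to an independent set of $\mathtt{Y}_d$ only gives $\alpha(\mathtt{Y}_d)\le 1.99^d$. That yields $\chi(\mathtt{Y}_d)\ge \tfrac14(2/1.99)^d$ and hence $\mathtt{QA}(N_d)\ge \tfrac{1}{4d}(2/1.99)^d$, which is strictly weaker than the stated inequality. Your proposed repair does not close this:
\begin{itemize}
\item an unspecified improvement of $1.99$ ``for $d$ sufficiently large'' is not a citable result, and it leaves a range of $d$ uncovered;
\item restating the bound with a different constant is not a proof of the theorem as written.
\end{itemize}
Your remark about ``the factor lost in passing to rays'' also has the direction reversed. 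Working with rays costs nothing in the upper bound on $\alpha$; it is exactly where the missing factor can be recovered.

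\textbf{The missing idea is the relation $\alpha(\mathtt{Y}_d)=\alpha(\mathtt{H}_d)/4$ that the paper invokes.} Apply Frankl--R\"odl to the full Hadamard graph $\mathtt{H}_d$ on $\{\pm1\}^d$, in which every independent set of $\mathtt{Y}_d$ lifts to one four times larger. Concretely:
\begin{itemize}
\item Pick one representative vector for each ray in an independent set $S$ of $\mathtt{Y}_d$. Let $\varphi$ negate the first coordinate, and set $T:=S\cup(-S)\cup\varphi(S)\cup(-\varphi(S))$.
\item Write $\langle x,y\rangle=d-2H(x,y)$ and $H(x,y)\equiv w(x)+w(y)\pmod 2$, where $w(\cdot)$ counts $-1$ entries. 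Then vectors of opposite parity have inner product $\equiv 2\pmod 4$, so they are never orthogonal.
\item The pairs $x,-x$ have inner product $-d\neq 0$.
\item Since $\varphi$ is orthogonal, $\langle\pm\varphi(x),\pm\varphi(y)\rangle=\pm\langle x,y\rangle\neq 0$ for distinct $x,y\in S$.
\end{itemize}
So $T$ has $4|S|$ elements and contains no orthogonal pair. Frankl--R\"odl then gives $4\,\alpha(\mathtt{Y}_d)\le 1.99^d$. Hence $\chi(\mathtt{Y}_d)\ge 2^{d-2}\cdot 4/1.99^d=(2/1.99)^d$, and together with $\xi(\mathtt{Y}_d)\le d$ this is exactly the claimed bound. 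If one instead keeps $\pm x$ as separate vertices, then $|V|=2^{d-1}$ and $\alpha=\alpha(\mathtt{H}_d)/2$, so the ratio is the same.

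A minor correction on $d=4$: the unquotiented graph there is $\mathtt{K}_{2,2,2,2}$, not a union of $\mathtt{K}_4$'s. This changes nothing, since the confusability graph is complete in either reading.
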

\begin{proof}
(i) Since $\operatorname{diam}(Y_d)=2$ (Lemma~\ref{lemma2}), it follows from Proposition~\ref{prop3} that the confusability graph of the channel $N_d$ is the complete graph $\mathtt{K}_n$ on $n=2^{d-2}$ vertices, and hence $c_\circ(N_d)=0$.

\noindent (ii) Since $|\Omega_y|>1$ for every $y\in X$, no input symbol is conclusively identifiable without assistance, which implies $\mathrm{ci}_\circ(N_d)=0$.

\noindent (iii) By construction, $\xi(Y_d)=d$. Hence, from Theorem~\ref{theo4}, it follows that
\begin{align*}
\mathrm{ci}_\circ\!\left(N_d\otimes \mathrm{id}^q_d\right)=|X|=2^{d-2}.    
\end{align*}
On the other hand, the independence number of the Newman graph satisfies
\begin{align*}
\alpha(Y_d)\le \frac{(1.99)^d}{4}.
\end{align*}
This follows from the bound $\alpha(\mathtt{H}_d)\le (1.99)^d$ for the Hadamard graph $\mathtt{H}_d$~\cite{Frankl1987}, together with the relation $\alpha(Y_d)=\alpha(H_d)/4$~\cite{Newman}. Using the general bound $\chi(G)\ge |V(G)|/\alpha(G)$, we obtain
\begin{align*}
\chi(Y_d)\;\ge\;\frac{|V(Y_d)|}{\alpha(Y_d)}
\;\ge\;\frac{2^{d-2}}{(1.99)^d/4}
\;=\;\left(\frac{2}{1.99}\right)^d.
\end{align*}
Therefore,
\begin{align*}
\mathtt{QA}(N_d)=\frac{\chi(Y_d)}{\xi(Y_d)}
\;\ge\;\frac{1}{d}\left(\frac{2}{1.99}\right)^d,
\end{align*}
which establishes an exponential quantum advantage.
\end{proof}
Theorem~\ref{theo8} thus provides an explicit family of channels exhibiting exponential quantum advantage. This advantage can further be amplified by considering $m$-fold co-normal product channel of Newman channels.

\section{Discussion}\label{sec:discussion}

\noindent{\it Summary of Results}\vspace{-.35cm}\\ 

\noindent In this work we introduced the conclusive identification task for classical channels and established a comprehensive theory connecting it to graph theory, quantum contextuality, and information-theoretic superactivation. The central message is that the \emph{support graph} $\mathtt{S}_N$, rather than the confusability graph $\mathtt{G}_N$, is the appropriate combinatorial object for characterizing channel utility in this task. The key results can be summarized as a chain of implications. For an SNFC channel $N : X \to X$ with support graph $\mathtt{S}_N$ and having $c_\circ(N)=0~\&~\mathrm{ci}_\circ(N)=0$:
\begin{itemize}[itemsep=-0.15cm, parsep=0.15cm, topsep=2pt, leftmargin=1.0cm]
\item[$\checkmark$] $\chi(\mathtt{S}_N) = \beta < |X| \implies \mathrm{ci}_\circ(N \otimes \mathrm{id}^c_\beta) = |X|$ establishes a new form of activation;  
\item[$\checkmark$] $\xi(\mathtt{S}_N) < \chi(\mathtt{S}_N) \implies \mathrm{ci}_\circ(N \otimes \mathrm{id}^q_{\xi}) = |X|$ establishes an quantum advantage; 
\item[$\checkmark$] $\mathtt{QA}(N) > 1 \implies \mathtt{QA}(N^{\times m}) \geq \mathtt{QA}(N)^m$ establishes scaling of quantum advantage;
\item[$\checkmark$] $\mathtt{QA}(N_d)\geq \frac{1}{d}\left(\frac{2}{1.99}\right)^d$ establishes an exponential quantum advantage.
\end{itemize}
This quantum advantage is particularly noteworthy. The seminal result of Holevo establishes that, in the absence of shared entanglement, a noiseless quantum channel cannot transmit more classical information than its classical counterpart~\cite{Holevo1973}. More recently, this limitation has been extended to the single-shot regime, where it has been shown that, even in the presence of shared classical randomness, any channel simulable by a $d$-dimensional quantum system can also be simulated exactly by a $d$-level classical system~\cite{Frenkel2015}. Together, these results imply that within Shannon’s communication framework, a noiseless quantum channel offers no intrinsic advantage over a classical one. In contrast, quantum superdense coding protocol demonstrates that, in the presence of shared entanglement, a quantum channel can outperform its classical counterpart by a factor of two~\cite{Bennett1992}. The quantum advantage established here reveals a qualitatively different phenomenon: even without any additional quantum resources, a noiseless quantum channel can serve as a strictly more powerful form of assistance than a classical noiseless channel, yielding, in fact, an exponential advantage. \\    

\noindent{\it Role of Quantum Contextuality}\vspace{-.35cm}\\ 

\noindent A central finding of this paper is that quantum contextuality, in all its manifestations, is a resource for quantum advantage in the conclusive identification task. The three explicit constructions presented cover the full spectrum of contextuality proofs:
\begin{itemize}[itemsep=-0.15cm, parsep=0.15cm, topsep=2pt, leftmargin=1.0cm]
\item[$\checkmark$] Combinatorial state-independent contextuality ($\mathtt{KS}^4_{18}$): the impossibility of a $\{0,1\}$-coloring directly witnesses the separation $\xi(\mathtt{KS}^4_{18}) = 4 < 5 =\chi(\mathtt{KS}^4_{18})$.
\item[$\checkmark$] Algebraic state-independent contextuality ($\mathtt{YO}^3_{13}$): an operator-algebraic identity for spin-$1$ observables witnesses the separation $\xi(\mathtt{YO}^3_{13}) = 3 < 4 = \chi(\mathtt{YO}^3_{13})$.
\item[$\checkmark$] State-dependent contextuality    ($\mathtt{YO}^4_{14}$): a construction that is not state-independent nevertheless witnesses $\xi(\mathtt{YO}^4_{14}) = 4 < 5 = \chi(\mathtt{YO}^4_{14})$.
\end{itemize}
Note that a set of rank-one projectors with orthogonality graph $\mathtt{G}$ constitutes a state-independent contextuality proof if $\xi(\mathtt{G})<\chi_f(\mathtt{G})$, which immediately yields $\xi(\mathtt{G}) < \chi(\mathtt{G})$~\cite{Cabello2015}. Therefore, SNFC channels with support graph same as orthogonality graph of any combinatorial or algebraic state-independent contextuality proof exhibits quantum advantage in conclusive identification task. State-dependent contextuality constructions need not satisfy this condition in general, but those that do, such as $\mathtt{YO}^4_{14}$, leads to similar quantum advantage. Moreover, if these graphs have diameter two and have no isolated vertex, then the resulting channels have $c_\circ(N) = 0$ as well as $\mathrm{ci}_\circ(N) = 0$. This connection parallels known results linking contextuality to advantages in measurement-based quantum computation~\cite{Raussendorf2013}, magic state distillation~\cite{Howard2014}, and communication complexity~\cite{Gupta2023,Chowdhury2026}, and suggests that the conclusive identification task provides a new and natural operational arena for probing quantum contextuality.\\

\noindent{\it Open Problems}\vspace{-.25cm}\\ 

\noindent Several natural questions remain open and, we believe, merit further investigation.

\begin{itemize}[itemsep=-0.15cm, parsep=0.15cm, topsep=2pt, leftmargin=1.0cm]
\item[1)] \textbf{Asymptotic conclusive identification index.} The asymptotic index can be defined as $\mathrm{Ci}_\circ(N) = \lim_{k \to \infty} \frac{1}{k} \log_2 \mathrm{ci}_\circ(N^{\otimes k})$. It would be interesting to study this asymptotic quantity. 
\item[2)] \textbf{Larger quantum advantage gaps.} All three explicit examples achieve a gap of $\chi(\mathtt{S}_N) - \xi(\mathtt{S}_N) = 1$. Can larger gaps be achieved by other KS systems or contextuality constructions, and what is the maximum achievable gap for a given $|X|$?
\item[3)] \textbf{Tight characterization of quantum advantage.} Theorem~\ref{theo8} gives a lower bound on $\mathtt{QA}(N^{\times m})$. Is $\mathtt{QA}(N^{\times m}) = \mathtt{QA}(N)^m$ exactly, or can the bound be improved?
\item[4)] \textbf{Entanglement-assisted conclusive identification.} We considered assistance by noiseless classical and quantum channels. What is the role of shared entanglement in the conclusive identification task? Can entanglement play nontrivial role beyond the superdense coding protocol?
\item[5)] \textbf{Quantum channels.} The conclusive identification task was defined for classical channels. A natural extension is to quantum channels $\mathcal{N} : \mathcal{L}(\mathcal{H}) \to \mathcal{L}(\mathcal{K})$, where the sender encodes classical messages into quantum states and the receiver performs a POVM. Does a quantum analogue of the support graph exist, and does it give rise to analogous superactivation and quantum advantage phenomena?
\item[6)] \textbf{Bounded error variant of the task.} While we have studied the conclusive identification task, an interesting variant could be its bounded error variant where small error is allowed even when the receiver guess the inputs correctly. Such variant is well studied in communication complexity \cite{Gavinsky2006} and learning problem \cite{Regev2009}.   
\end{itemize}

\noindent{\it Concluding Remarks}\vspace{-.35cm}\\

\noindent The conclusive identification task introduced in this paper occupies a natural position at the intersection of zero-error information theory, graph theory, and quantum foundations. It reveals that classical channels with $C_\circ(N) = 0$ -- universally dismissed as useless by Shannon's zero-error framework -- can be activated by noiseless classical channel assistance to achieve perfect conclusive identification, with the required assistance determined precisely by the chromatic number of the support graph. The further activation by quantum assistance, governed by the orthogonal rank of the support graph, provides a new and direct operational manifestation of quantum contextuality in a classical communication setting. We hope that the framework introduced here will stimulate further investigation at the interface of information theory and quantum foundations.

\section*{Acknowledgments}
\noindent We thankfully acknowledge Jaikumar Radhakrishnan for pointing out the possible connection between support graph and confusability graph (private communication). MB acknowledges the financial support through the National Quantum Mission (NQM) of the Department of Science and Technology, Government of India.

%
%
%
%
%


%
%
%
%
%
%
%
\newpage 
\section*{Appendix}
\subsection*{{\bf A1.} Identification scheme for SNFC channels with support graph $\mathtt{S}_i$ of Eq.(\ref{penta})}\label{Appendix-A}
\noindent While the graph $\mathtt{S}_4$ in Eq.~(\ref{penta}) is $3$-colorable, for $\mathtt{S}_6$ a proper coloring requires $4$ different colors. The same coloring shown in Fig.~\ref{fig4} for the graph $\mathtt{S}_5$ in-fact works for $\mathtt{S}_6$. With this we now detail the scheme as of Table~\ref{tab1} for channels with support graphs $\mathtt{S}_4$ and $\mathtt{S}_6$ in Table~\ref{app:tab3}.

\begin{table}[h!]
\centering
\footnotesize
\setlength{\tabcolsep}{3pt}
\renewcommand{\arraystretch}{1.08}

\begin{minipage}[t]{0.48\textwidth}
\centering
\resizebox{\linewidth}{!}{%
\begin{tabular}{c|c|c|c}
\hline
Input of $N$ & Output of $N$ & Message via $\mathrm{id}^c_3$ & Identified? \\
\hline\hline
\multirow{5}{*}{$1$} 
& $1$ & R & $\checkmark$ \\
& $2$ & R & $\checkmark$ \\
& $3$ & R & $\checkmark$ \\
& $4$ & R & $\checkmark$ \\
& $5$ & R & $\checkmark$ \\
\hline
\multirow{4}{*}{$2$} 
& $1$ & B & $\times$ \\
& $2$ & B & $\checkmark$ \\
& $3$ & B & $\times$ \\
& $5$ & B & $\times$ \\
\hline
\multirow{4}{*}{$3$}
& $1$ & G & $\times$ \\
& $2$ & G & $\times$ \\
& $3$ & G & $\checkmark$ \\
& $4$ & G & $\times$ \\
\hline
\multirow{4}{*}{$4$}
& $1$ & B & $\times$ \\
& $3$ & B & $\times$ \\
& $4$ & B & $\checkmark$ \\
& $5$ & B & $\times$ \\
\hline
\multirow{4}{*}{$5$}
& $1$ & G & $\times$ \\
& $2$ & G & $\times$ \\
& $4$ & G & $\times$ \\
& $5$ & G & $\checkmark$ \\
\hline
\end{tabular}%
}
\vspace{0.5ex}

{\footnotesize (a) $\mathtt{S}_N=\mathtt{S}_4$}
\end{minipage}
\hfill
\begin{minipage}[t]{0.48\textwidth}
\centering
\resizebox{\linewidth}{!}{%
\begin{tabular}{c|c|c|c}
\hline
Input of $N$ & Output of $N$ & Message via $\mathrm{id}^c_4$ & Identified? \\
\hline\hline
\multirow{5}{*}{$1$} 
& $1$ & R & $\checkmark$ \\
& $2$ & R & $\checkmark$ \\
& $3$ & R & $\checkmark$ \\
& $4$ & R & $\checkmark$ \\
& $5$ & R & $\checkmark$ \\
\hline
\multirow{5}{*}{$2$} 
& $1$ & B & $\checkmark$ \\
& $2$ & B & $\checkmark$ \\
& $3$ & B & $\checkmark$ \\
& $4$ & B & $\checkmark$ \\
& $5$ & B & $\checkmark$ \\
\hline
\multirow{4}{*}{$3$}
& $1$ & G & $\times$ \\
& $2$ & G & $\times$ \\
& $3$ & G & $\checkmark$ \\
& $4$ & G & $\times$ \\
\hline
\multirow{5}{*}{$4$}
& $1$ & M & $\checkmark$ \\
& $1$ & M & $\checkmark$ \\
& $3$ & M & $\checkmark$ \\
& $4$ & M & $\checkmark$ \\
& $5$ & M & $\checkmark$ \\
\hline
\multirow{4}{*}{$5$}
& $1$ & G & $\times$ \\
& $2$ & G & $\times$ \\
& $4$ & G & $\times$ \\
& $5$ & G & $\checkmark$ \\
\hline
\end{tabular}%
}
\vspace{0.5ex}

{\footnotesize (b) $\mathtt{S}_N=\mathtt{S}_6$}
\end{minipage}

\caption{The conclusive identification scheme. (a) An SNFC channel $N$ with support graph $\mathtt{S}_{N}=\mathtt{S}_4$. Partition: $\mathcal{P}(1)=\mathrm{R},~\mathcal{P}(2)=\mathcal{P}(4)=\mathrm{B},~\mathcal{P}(3)=\mathcal{P}(5)=\mathrm{G}$. (b) An SNFC channel $N$ with support graph $\mathtt{S}_{N}=\mathtt{S}_6$. Partition: $\mathcal{P}(1)=\mathrm{R},~\mathcal{P}(2)=\mathrm{B},~\mathcal{P}(3)=\mathcal{P}(5)=\mathrm{G},~\mathcal{P}(4)=\mathrm{M}$.}
\label{app:tab3}
\end{table}

\subsection*{{\bf A2.} Other classes of channels exhibiting Arbitrarily Large Superactivation}

\noindent Theorem~\ref{theo3} shows that for an SNFC channel $N_n : X \to X$ with support graph $\mathtt{S}_{N_n} = \mathtt{W}_n$, the superactivation gap $\Delta(N_n) := \mathrm{ci}_\circ(N_n \otimes \mathrm{id}^c_\beta) - \mathrm{ci}_\circ(\mathrm{id}^c_\beta)$ grows unboundedly, with $\Delta(N_n)=n-2$ for odd $n$ and $\Delta(N_n)=n-3$ for even $n$. Such unbounded superactivation is not restricted to wheel graphs alone, but also arises for channels whose support graphs belong to other families. In particular, we highlight three additional classes: the Friendship graph $\mathtt{F}_n$ (see Fig.~\ref{fig:friendship}), the Star graph $\mathtt{St}_n$ (see Fig.~\ref{fig:star}), and the Turán graph (see Fig.~\ref{fig:turan_n2}).   
\begin{figure}[h]
\centering
\includegraphics[width=0.45\textwidth]{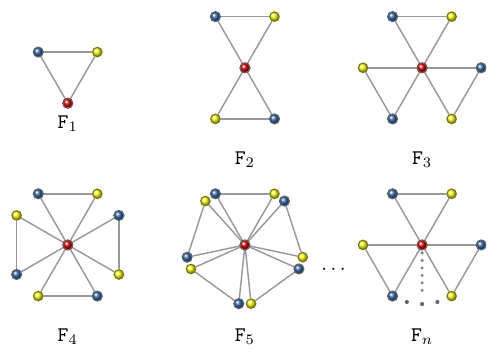}
\caption{Friendship graph $\mathtt{F}_n$~\cite{Longyear1972}. This family of graphs is motivated by the classical friendship theorem~\cite{Erdos1966}, which states that if every pair of vertices has exactly one common neighbor, then there exists a unique universal vertex adjacent to all others. The resulting structure consists of a collection of triangles sharing a single common vertex.}
\label{fig:friendship}
\end{figure}
\begin{figure}[h]
\centering
\includegraphics[width=0.45\textwidth]{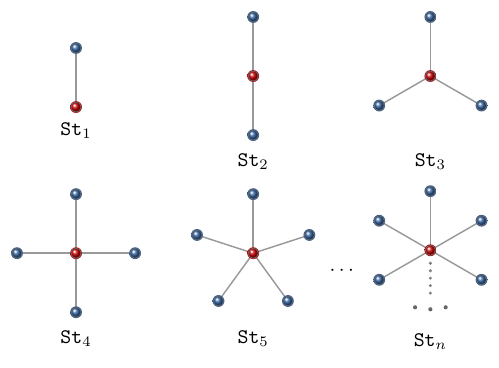}
\caption{Star graph $\mathtt{St}_n$~\cite{harary1969}. It represents one of the simplest extremal structures in graph theory, consisting of a single universal vertex connected to $n$ leaves, with no edges among the leaves.}
\label{fig:star}
\end{figure}
\begin{figure}[h!]
\centering
\includegraphics[width=0.45\textwidth]{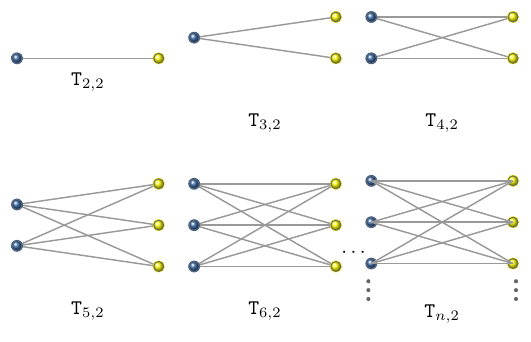}
\caption{Turán graph $\mathtt{T}_{n,2}$~\cite{turan1941}. It is the complete bipartite graph whose two parts are as equal in size as possible. It arises from Turán's extremal theorem as the unique graph maximizing the number of edges among all triangle-free graphs on $n$ vertices. When $n$ is even, $\mathtt{T}_{n,2}$ reduces to the balanced complete bipartite graph, also known as the cocktail party graph~\cite{Roberts1969}.}
\label{fig:turan_n2}
\end{figure}
\noindent Several properties of these graphs and properties of the SNFC channels having these as support graphs are listed in Table~\ref{tab:list}.  
\clearpage
\begin{table}[!t]
\centering
\begin{tabular}{|c||c|c|c|c|c|c|}
\hline
& \multicolumn{2}{c|}{$\mathtt{F}_n$} & \multicolumn{2}{c|}{$\mathtt{St}_n$}  & \multicolumn{2}{c|}{$\mathtt{T}_{n,2}$} \\ \hline\hline
 & odd & even & odd & even & odd & even \\ \hline
$\mathrm{diam}$ & $2^{\textcolor{blue}{\star}}$ & $2$ & $2^{\textcolor{blue}{\star}}$ & $2$ & $2$ & $2^{\textcolor{blue}{\star}}$ \\ \hline
Chrom. No. $\chi$ & $3$ & $3$ & $2$ & $2$ & $2$ & $2$ \\ \hline
Ind. No. $\alpha$  & $3$ & $3$ & $2$ & $2$ & $2$ & $2$ \\ \hline
$c_\circ(N_n)$ & 0 & 0 & 0 & 0 & 0 & 0 \\ \hline
$\mathrm{ci}_\circ(N_n)$ & 0 & 0 & 0 & 0 & 0 & 0 \\ \hline
$\mathrm{ci}_\circ(N_n\otimes\mathrm{id}^c_\chi)$ & $2n$+$1$ & $2n$+$1$ & $n$+$1$ & $n$+$1$ & $n$ & $n$ \\ \hline
$\Delta(N_n)$ & $2n$-$2$ & $2n$-$2$ & $n$-$1$ & $n$-$1$ & $n$-$2$ & $n$-$2$ \\ \hline
\end{tabular}
\caption{Properties of Friendship, Star, and Turán graphs. [${\textcolor{blue}{\star}}$]: $\mathrm{diam}=1$ for $\mathtt{F}_1$, $\mathtt{St}_1$, and $\mathtt{T}_{2,2}$. Superactivation phenomenon is obtained for all these classes of channels and the superactivation gap can be arbitrarily large as $n\to\infty$.}
\label{tab:list}
\end{table}

\end{document}